\documentclass[11pt]{article}
\usepackage{amssymb}
\usepackage{amsmath}
\usepackage{amsmath}
\usepackage{amsthm}
\usepackage{amsfonts}
\usepackage{graphicx}
\usepackage{enumerate}
\usepackage{bbm} 
\usepackage{dsfont}
\usepackage[authoryear,longnamesfirst]{natbib}
\bibliographystyle{ecta}
\usepackage{multirow}
\setcounter{MaxMatrixCols}{10}
\usepackage{multicol}
\usepackage{tikz}
\numberwithin{equation}{section}
\newtheorem{theorem}{Theorem}
\newtheorem{algorithm}{Algorithm}
\newtheorem{assumption}{Assumption}

\newtheorem{corollary}{Corollary}

\newtheorem{example}{Example}

\newtheorem{lemma}{Lemma}
\theoremstyle{definition}

\setlength{\textwidth}{15cm} \setlength{\oddsidemargin}{.5cm}

\DeclareMathOperator{\E}{\text{E}}

\DeclareMathOperator{\var}{\text{Var}}

\setlength{\textwidth}{17cm} \setlength{\oddsidemargin}{-.5cm}

\newcommand{\supp}{\rm{supp}}
\renewcommand{\hat}{\widehat}
\renewcommand{\tilde}{\widetilde}


\newcommand{\sumIJ}{\sum_{(i,j)\in \overline{I_k^2}}}
\newcommand{\sumln}{\sum\limits_{l \in [n]}}

\newcommand{\sumk}{\sum\limits_{k \in [K]}}
\newcommand{\tE}{\text{E}}
\newcommand{\Ep}{{\rm E}_{P}}
\newcommand{\Epn}{{\rm E}_{P_N}}
\newcommand{\sumij}{\sum\limits_{(i,j) \in \overline{[N]^2}} }
\newcommand{\sumijn}{\sum\limits_{(i,j) \in \overline{[n]^2}} }
\newcommand{\sumijN}{\sum\limits_{(i,j) \in \overline{[N]^2}} }

\newcommand{\Enk}{\mathbb{E}_{I_k}}
\newcommand{\Ikc}{\overline{I_k^{c,2}}}

\newcommand{\Real}{\mathbbm R}

\newcommand{\Gn}{\mathbb{G}_{n}}
\newcommand{\Gnk}{\mathbbm{G}_{N,K}}
\newcommand{\N}{\mathbb{N}}

\newcommand{\Hn}{\text H_n}
\newcommand{\Hnk}{\text H_{N}}

\newcommand{\T}{\mathcal T}
\renewcommand{\E}{\mathcal E}
\newcommand{\Op}{O_P}
\newcommand{\Opn}{O_{P_N}}
\newcommand{\op}{o_P}
\newcommand{\opn}{o_{P_N}}

\newcommand{\V}{{\rm V}}

\newcommand{\Cov}{{\rm Cov}}

\newcommand{\1}{\mathbbm 1}
\newcommand{\calS}{\mathcal{S}}
\newcommand{\calG}{\mathcal{G}}
\newcommand{\EN}{\mathbb{E}_N}
\newcommand{\calH}{\mathcal{H}}
\newcommand{\calF}{\mathcal{F}}
\newcommand{\NN}{\overline{[N]^2}}
\begin{document}

\title{Dyadic Double/Debiased Machine Learning for \\Analyzing Determinants of Free Trade Agreements\thanks{\setlength{\baselineskip}{4.0mm}\footnotesize\setlength{\baselineskip}{4.4mm}  We benefited from very comments by participants in AMES 2021, ESAM 2021, ESEM 2021, IAAE 2021, NASMES 2021, and New York Camp Econometrics XV. H. Chiang's research is supported by the Office of the Vice Chancellor for Research and Graduate Education at the University of Wisconsin-Madison with funding from the Wisconsin Alumni Research Foundation. \bigskip}}
\author{
Harold D. Chiang\thanks{\setlength{\baselineskip}{4.0mm}Harold D. Chiang: hdchiang@wisc.edu. Department of Economics, University of Wisconsin-Madison, William H. Sewell Social Science Building 1180 Observatory Drive 
	Madison, WI 53706-1393, USA\bigskip} 
\qquad 
Yukun Ma\thanks{\setlength{\baselineskip}{4.0mm}Yukun Ma: yukun.ma@vanderbilt.edu. Department of Economics, Vanderbilt University, VU Station B \#351819, 2301 Vanderbilt Place, Nashville, TN 37235-1819, USA\bigskip} 
\qquad 
Joel B. Rodrigue\thanks{\setlength{\baselineskip}{4.0mm}Joel B. Rodrigue: joel.b.rodrigue@vanderbilt.edu. Department of Economics, Vanderbilt University, VU Station B \#351819, 2301 Vanderbilt Place, Nashville, TN 37235-1819, USA\bigskip}
\qquad 
Yuya Sasaki\thanks{\setlength{\baselineskip}{4.0mm}Yuya Sasaki: yuya.sasaki@vanderbilt.edu. Department of Economics, Vanderbilt University, VU Station B \#351819, 2301 Vanderbilt Place, Nashville, TN 37235-1819, USA\bigskip}
}

\date{}

\maketitle

\begin{abstract}\setlength{\baselineskip}{5.7mm}
This paper presents novel methods and theories for estimation and inference about parameters in econometric models using machine learning for nuisance parameters estimation when data are dyadic.
We propose a dyadic cross fitting method to remove over-fitting biases under arbitrary dyadic dependence.
Together with the use of Neyman orthogonal scores, this novel cross fitting method enables root-$n$ consistent estimation and inference robustly against dyadic dependence.
We illustrate an application of our general framework to high-dimensional network link formation models.
With this method applied to empirical data of international economic networks, we reexamine determinants of free trade agreements (FTA) viewed as links formed in the dyad composed of world economies. We document that standard methods may lead to misleading conclusions for numerous classic determinants of FTA formation due to biased point estimates or standard errors which are too small.
\\
{\bf Keywords:} dyadic cross fitting, dyadic data, free trade agreements, machine learning
\\
{\bf JEL Codes:} C14, C21, C31, C55, F14
\end{abstract}

\newpage 
\section{Introduction}

Empirical research investigating network formation and outcomes spans the literature covering labor economics, industrial organization, macroeconomics, development, and international trade, among others. 
In a large fraction of these studies, the process through which economic entities are determined -- e.g., free/preferential trade agreements, friendships, and financial relationships -- concern dyadic/network link formation models. 
While recent methodological advances have increased the popularity of empirical network formation models (see \citet*{graham2019dyadic} and \citet{graham2019econometric} for surveys), the large majority of existing studies focus on parsimonious (low-dimensional) model specifications.
The increasing availability of `big data' should, in principle, allow researchers to consider increasingly rich specifications and uncover new insights into the nature of network formation. 
Rather, researchers are hamstrung by two competing limitations. 
On one hand, there is no existing econometric methodology that allows for both dyadic robust inference and high-dimensional link formation specifications.  {  In the presence of significant data heterogeneity in the policy variable of interest, one needs to control a large number of covariates.
However, simply estimating a rich specification across a wide set of covariates is not yet possible.  }
Moreover, we document that common off-the-shelf approaches, such as the conventional double/debiased machine learning for i.i.d. sampling, are likely to result in both biased estimates and misleading standard errors under dyadic dependence. 
On the other hand, the most common empirical approach, reducing the dimensionality of the network formation model by assumption, risks model mis-specification, is a likely source of estimation bias and/or misleading inference, and reduces the set of questions researchers can investigate.
In light of the increasing availability of big data today, we advance state-of-the-art approaches to estimating network models by developing novel methods for root-$n$ consistent estimation and inference for high-dimensional dyadic regressions and high-dimensional dyadic/network link formation models. 

Our work builds on the literature which suggests using (near) Neyman orthogonal scores to accommodate possibly slow convergence rates of general machine learners \citep*[e.g.,][]{BCK15,BCCW18,CCDDHNR18,CEINR18} in high-dimensional regression estimation (and more generally high-dimensional Z-estimation). In particular, our approach extends the branch of research which employs cross-fitting, in conjunction with orthogonal scores, to mitigate over-fitting biases. This combined method is referred to as the double/debiased machine learning \citep*[DML,][]{CCDDHNR18}. Although standard cross fitting requires independent sampling \citep{CCDDHNR18}, \citet*{CKMS2019} develop a multiway cross fitting algorithm to extend the DML to multiway cluster dependent data with linear scores (their results, however, does not cover estimators defined by nonlinear scores, such as Logit). This recent contribution has paved the way for machine learning to be applied to cross-sectional dependent data, but, unfortunately, does not provide a result which allows for the estimation of network models in the presence of dyadic dependence. Dyadic dependence is different from (multiway) cluster dependence,\footnote{Specifically, while multiway clustered data are represented by separately exchangeable arrays, dyadic data are represented by jointly exchangeable arrays -- see Section 7 in \citet*{Kallenberg2006}. The mathematical characterizations of these two types of exchangeable arrays are different, and the separate exchangeability does not imply the joint exchangeability.} and hence there is no guarantee that these existing cross fitting algorithms will work for dyadic regressions and dyadic/network link formation models.
In this light, we propose a novel cross fitting algorithm that is effective under dyadic dependence. Our dyadic machine learning approach and asymptotic theories guarantee that estimation and inference based on this method are robust against arbitrary dyadic dependence.

In this paper, we demonstrate the importance of addressing dyadic dependence and model specification by reexamining the determinants of free trade networks, a classic network setting. 
Accounting for both dyadic dependence and high-dimensional controls, we reconfirm the two important theoretical implications suggested by the international trade literature: namely, (A) a greater distance between economies makes an FTA less beneficial \citep[e.g.,][]{krugman1991move,frankel1993continental,frankel1995trading,frankel1996regional,baier2004economic} and thus makes an FTA less likely to be formed; and (B) larger sizes of economies make an FTA more beneficial \citep{krugman1998comment,baier2004economic} and thus make an FTA more likely to be formed.  In contrast to the workhorse empirical model, we find that after accounting for dyadic dependence and high-dimensional controls suggest that differences in country-pair factor endowments are not an important determinant of free trade agreements. We demonstrate that traditional approaches lead to fragile estimates: researchers may be misled into concluding that factor endowments differences either encourage or discourage free trade agreements depending on which common estimation approach is employed. Further, estimates produced by our method turned out to differ from those of the simple logistic regression or the conventional double/debiased machine learning not accounting for dyadic sampling for all key determinants of free trade agreements. Indeed, our analysis suggests that the landmark empirical work in this area overstates the importance of bilateral country size to FTA network formation. Allowing for arbitrary dyadic dependence, the proposed method implies less statistical significance.
That being said, our results still support the aforementioned theoretical predictions about the determinants of free trade networks despite the inflated standard errors for robustly accounting for the dyadic dependence as well as high-dimensionality.

{\bf 1.1. Relation to the Econometrics Literature:}
The history of statistics on dyadic data dates back at least to 1970s, when
\citet*{holland1976local} derive moments of dyadic sums and
\citet*{silverman1976limit} derives asymptotic normality for the sum of jointly exchangeable dissociated arrays.
\citet*{eagleson1978limit} further demonstrated the asymptotic normal mixture for the sum of a weakly exchangeable array.

Explicit accounts for dyadic dependence in econometrics arose in the context of gravity models, and the use of fixed effects was recommended to control for such dependence \citep*[see][]{matyas1997proper,matyas1998gravity}.
\citet*{cameron2005estimation} point out the importance of controlling for dyadic clustering, and develop a FGLS estimation method accounting for the dyadic cluster dependence.
\citet*{fafchamps2007formation} propose dyadic cluster robust variance estimators for the OLS and logit. 
\citet*{cameron2014robust} generalize the dyadic cluster robust variance estimator for GMM and M-estimation frameworks as well as others cases.
\citet*{aronow2015cluster} also discuss consistent estimation of the dyadic cluster robust variance estimators.
Along with most, if not all, other papers that develop inference theories for dyadic data, we take advantage of the forms of the dyadic cluster robust variance formulas developed in these pioneering papers.

The more recent econometrics literature includes
\citet*{tabord2019inference} who studies the asymptotic behavior of the t-statistic based on the dyadic cluster robust variance estimator of \citet*{fafchamps2007formation};
\citet*{davezies2019empirical} who study the asymptotic behavior of empirical processes and their bootstrap counterparts for dyadic data;
\citet*{graham2019kernel} who propose nonparametric density estimation for dyadic data, show that the convergence rate of the stochastic part of the estimator is the square root of the number of nodes, and derive the asymptotic distribution of the estimator (see also \citealp{graham2021minimax} for extension to local regression estimation); and \citet{chiang2020inference} who develop methods of inference for high-dimensional parameters.
\citet*{graham2019dyadic,graham2020network,graham2020sparse} provide reviews on the asymptotic distribution and variance estimation in dyadic data in the context of parametric models, along with other closely related topics.

Also related to, but different from, the literature on dyadic data is the set of studies which investigate multiway clustered data 
\citep*[e.g.,][]{cameron2011robust,thompson2011simple,cameron2015practitioner,menzel2018bootstrap,davezies2018asymptotic,mackinnon2019wild,CKMS2019}.  The robust variance formulas exposited in this literature are related to those relevant for dyadic cases, but are sufficiently different that they cannot be applied in settings with dyadic dependence. In particular, the structure of cluster dependent data (also known as separately exchangeable arrays) is related to, but mathematically different from, the structure of dyadic data (also known as jointly exchangeable arrays) -- see \citet*{Kallenberg2006}. Consequently, methods and theories developed in this literature will not apply to the dyadic setting.

Turning to the machine learning literature, as we have already emphasized, we use (near) Neyman orthogonal scores to accommodate possibly slow convergence rates of various machine learners following \citet*{BCH14review,BCK15,farrell2015robust,BCCW18,CCDDHNR18,CEINR18,farrell2021deep}.
In developing the dyadic link formation models, we also extend the framework of \citet*{BelloniChernozhukovWei2016} and \citet*{BCCW18} to deal with dyadic dependent data and obtain convergence rates for high-dimensional lasso logit under dyadic dependence, which we in turn use as sufficient conditions to apply our general dyadic machine learning theory in order to establish asymptotic theories for high-dimensional dyadic link formation models.

 The proposed dyadic cross-fitting algorithm sheds light on handling dyadic networks in cross-validation literature. 
There are several network-oriented cross-fitting and cross-validation procedures in econometrics and statistics. 
In the study of stochastic block models, \cite{chen2018network} propose network cross-validation algorithms for detecting the number of communities. 
\cite{li2020network} propose cross-validation algorithms for general tool for model selection and parameter tuning that rely on using a subset of node pairs and apply a low-rank matrix completion algorithm to obtain a predicted adjacency matrix. It is thus fundamentally different from ours.
\cite{viviano2019policy} proposes a network cross-fitting algorithm for estimating treatment allocation rules
under network interference. It is different from our algorithm as its unit of observations is a node while ours is a dyad. In addition, our algorithm allows for fully connected networks that are important for international trade applications.

{\bf 1.2. Relation to the Trade Literature:}
Understanding the formation of free trade networks across countries dates back to at least \citet*{Viner1950} and its history is fraught with debate. A rich theoretical literature emerged over time elucidating complex economics and political trade-offs associated with free trade agreement formation. \citet*{Levy1997} and \citet*{Krishna1998} pioneered a set of research aimed at understanding the nature of preferential trade in a multilateral trade network.\footnote{\citet*{GrossmanHelpman1995} analyze the political-economy determinants of FTAs but do not consider the implications of FTAs for the multilateral trade system.} Subsequent research investigates network formation and stability \citep*{FK05,GJ06,FurusawaKonishi2007} and characterizes the differential impact that FTAs have on countries excluded and included from FTAs \citep*{KennanRiezman1990,BagwellStaiger1999,BondRiezmanSyropoulos2004,AghionAntrasHelpman2007}.


Country asymmetries, either economic \citep*{FurusawaKonishi2007,SaggiYildiz2010,SaggiWoodlandYildiz2013,Lake2017} or political \citep*{Ornelas2005,StoyanovYildiz2015}, are understood as central to the willingness of any pair of countries to form a stable free trade agreement and their optimal responses policy change within and outside of their region. Likewise, dynamic considerations, such as the likelihood of future FTAs, are expected to encourage or deter firms from joining FTAs in the present \citep*{McLaren2002,MissiosSaggiYildiz2016,Lake2017,LakeRoy2017,LakeNkenYildiz2018,Lake2019}.  In other words, the determinants of FTAs systematically vary across the trading network with country characteristics and the evolution of the trading system.

By comparison there is a dearth of empirical results.  \citeauthor{baier2004economic}'s (\citeyear{baier2004economic}) seminal article takes up \citeauthor{krugman1993regionalism}'s (\citeyear{krugman1993regionalism}) call to empirically investigate the determinants of free trade regions.\footnote{As \citet*{{krugman1993regionalism}} writes: ``To make any headway, one must either get into detailed empirical work, or make strategic simplifications and stylizations that one hopes do not lead one too far astray. Obviously detailed empirical work is the right direction...''}  They identify a parsimonious set of key economic determinants for the formation of free trade agreements: trade costs, the market size of the free trade zone, and the similarity of trading partners in terms of economic development and/or factor-endowments.

Subsequent empirical research, such as that by \citet*{ChenJoshi2010} and \citet*{BaierBergstrandMariutto2014}, \citet*{SaggiStoyanovYildiz2018} among others, build upon the original \citet{baier2004economic} specification to capture novel features of FTA formation eminating from the theoretical literature: endogenous responses in trade policy, excluded country characteristics, common external trade partners and/or changes in the nature of FTAs over time.  Econometrically, in each case, researchers enrich the benchmark specification with additional co-variates intended to capture mechanisms which were beyond the scope of the benchmark \citet{baier2004economic} specification.  {Our application reexamines the Baier and Bergstrand (2004) empirical structure as it is the most highly cited FTA network specification and the basis for all subsequent empirical work in this literature.}  

Our work addresses this literature in two important dimensions.  First, we demonstrate that ignoring dyadic dependence can lead to biased estimates of model parameters and standard errors.  Second, we show that incorporating a wider set of co-variates understates true standard errors among classic determinants and can potentially lead to misleading conclusions. Specifically, our estimates also suggest that the importance market size in FTA formation may be biased and overstated when employing conventional methods, including conventional DML, and a rich specification.  Further, we document that standard estimation approaches, such as logit or conventional DML, may lead researchers to conclude that larger differences in relative factor endowment may encourage or discourage FTA formation depending on their preferred estimation approach. Accounting for dyadic dependence, in contrast, increases standard errors sufficiently that we can no longer conclude that relative factor endowments are an important determinant of FTA formation. In this sense, we further confirm that the use of econometric methods which do not allow for high-dimensional controls and/or dyadic dependence can lead to misleading conclusions, either in terms of point estimates or standard errors in the context of FTA formation models.

These empirical features are common to a wide set of network formation models across fields (see \citet*{graham2019dyadic} and \citet{graham2019econometric} for surveys), but are particularly striking in our context.  Although trade agreements are increasingly common, they continue to be relatively rare events. Despite the rise of trade agreements and the availability of rich network data, incorporating a wide host of determinants, allowing for country asymmetry, and flexibly controlling for dyadic dependence creates a particularly demanding empirical setting beyond the reach of standard tools.  As we document below, our dyadic DML approach leverages benefits to dyadic robust inference and high-dimensional link formation to identify and robustly quantify the key determinants of FTA formation. In this sense, our work yields insights which bridges a long series of theoretical research in international trade and allows for a rich of set of new empirical investigations into the nature of international trade agreement networks.

\section{The Dyadic Double/Debiased Machine Learning}\label{sec:setup}

In this section, we introduce the model and present our proposed method of dyadic machine learning.
A theoretical guarantee that this method works will be presented in Section \ref{sec:theory}.
We start by fixing notations to be used to describe  dyadic data.  
Let $\overline{\mathbb{N}^{+2}} = \{(i,j) \in \mathbb{N}^{+} \times \mathbb{N}^{+}: i \neq j\}$ denote the set of two-tuples of $\mathbb{N}^+$ without repetition, where $A^+:=A\cap (0,+\infty)$ for any $A\subset \mathbb{R}$.
A dyadic observation is written as $W_{ij}$ for $(i,j)\in \overline{\mathbb{N}^{+2}}$, and we assume that this random vector $W_{ij}$ is Borel measurable throughout.
Assume that a dyadic sample contains $N$ nodes with no self link.
Let $\{\mathcal P_N\}_N$ be a sequence of sets of probability laws of $\{W_{ij}\}_{ij}$,
let $P=P_{N}\in \mathcal P_N$ denote the law with the sample size $N$, and 
let $\Ep$ denote the expectation with respect to $P$.
We write the dyadic sample expectation operator by $\mathbb{E}_N[\cdot]=\frac{1}{N(N-1)}\sumijN[\cdot]$, where $[r]=\{1,...,r\}$ for any $r \in \mathbb{N}$. 
For any finite set $I$ with $I\subset [N]$, we let $|I|$ denote the cardinality of $I$, and
let $I^c$ denote the complement of $I$, namely, $I^c=[N]\setminus I$. 
We write the dyadic subsample expectation operator by $\mathbb{E}_{I}[\cdot]:=\frac{1}{|I|(|I|-1)}\sum_{(i,j)\in \overline{I^2}}[\cdot]$.

\subsection{The Model}\label{sec:data_structure}

The economic model is assumed to satisfy the moment restriction
\begin{align}
\Ep[\psi(W_{ij};\theta_0,\eta_0)]=0
\label{eq:existance_condition}
\end{align} 
for some score function $\psi$ that depends on a low-dimensional parameter vector $\theta \in \Theta \subset \Real^{d_\theta}$ and a nuisance parameter $\eta \in T$ for a convex set $T$.
The nuisance parameter $\eta$ may be finite-, high-, or infinite-dimensional.
The true values of $\theta$ and $\eta$ are denoted by $\theta_0 \in \Theta$ and $\eta_0 \in T$, respectively.

Let $\tilde T=\{\eta - \eta_0 : \eta \in T\}$, and define the Gateaux derivative map $D_r: \tilde T \rightarrow \Real^{d_\theta}$ by
$
D_r[\eta-\eta_0]:=\partial_r \Big\{
\Ep[\psi(W_{ij};\theta_0,\eta_0+r(\eta-\eta_0))]\Big\}
$
for all $r\in[0,1)$.
Let its limit denoted by
$
\partial_\eta\Ep\psi(W_{ij};\theta_0,\eta_0)[\eta - \eta_0]:=D_0[\eta-\eta_0].
$
We say that the Neyman orthogonality condition holds at $(\theta_0,\eta_0)$ with respect to a nuisance realization set $\mathcal T_N \subset T$ if the score $\psi$ satisfies (\ref{eq:existance_condition}), the pathwise derivative $D_r[\eta-\eta_0]$ exists for all $r\in[0,1)$ and $\eta\in \mathcal T_N$, and the orthogonality equation
\begin{align}
\partial_\eta\Ep\psi(W_{ij};\theta_0,\eta_0)[\eta - \eta_0]=0
\label{eq:Neyman_orthogonal_condition}
\end{align}
holds for all $\eta\in \mathcal T_N$. 
Furthermore, we also say that the $\lambda_N$ Neyman near-orthogonality condition holds at $(\theta_0,\eta_0)$ with respect to a nuisance realization set $\mathcal T_N\subset T$ if the score $\psi$ satisfies (\ref{eq:existance_condition}), the pathwise derivative $D_r[\eta-\eta_0]$ exists for all $r\in[0,1)$ and $\eta\in \mathcal T_N$, and the orthogonality equation
\begin{align}
\sup_{\eta \in \mathcal T_N}\Big\| \partial_\eta \Ep\psi(W_{ij};\theta_0,\eta_0)[\eta-\eta_0] \Big\|\le \lambda_N
\label{eq:Neyman_near_orthogonal_condition}
\end{align}
holds for all $\eta\in \mathcal T_N$
for some positive sequence $\{\lambda_N\}_N$ such that $\lambda_N=o(N^{-1/2})$.
We refer readers to \citet{CCDDHNR18} for detailed discussions of the Neyman (near) orthogonality, including its intuitions and a general procedure to construct scores $\psi$ satisfying it.

Throughout, we will consider structural models satisfying the moment restriction (\ref{eq:existance_condition}) and either form of the Neyman orthogonality conditions, (\ref{eq:Neyman_orthogonal_condition}) or (\ref{eq:Neyman_near_orthogonal_condition}).
As a concrete motivating example of the above abstract formulation, Example \ref{sec:example_logit} below presents the logit dyadic link formation models, which we consider for our main application in this paper.
In addition, we also present a couple of simpler examples with the linear regression models and the linear IV regression models in Appendix \ref{sec:two_examples}.

\begin{example}[Logit Dyadic Link Formation Models]\label{sec:example_logit}
Let $W_{ij}=(Y_{ij},D_{ij},X_{ij}')'$ where $Y_{ij}$ is binary indicator for a link formed between nodes $i$ and $j$, $D_{ij}$ is an explanatory variable of interest, and $X_{ij}$ is a high-dimensional random vector of controls.
Consider the logistic dyadic link formation model
\begin{align*}
\Ep[Y_{ij}|D_{ij},X_{ij}]=\Lambda(D_{ij}\theta_0 + X_{ij}'\beta_0) \text{ for }(i,j)\in \overline{[N]^2},
\end{align*}
where $\theta$ is a parameter of interest, $\beta$ is a nuisance parameter, and $\Lambda(t)=\exp(t)/(1+\exp (t))$ for all $t \in \Real$.
A nonlinear Neyman orthogonal score $\psi$ in this model is given by
$$
\psi(W_{ij};\theta,\eta)=\{Y_{ij}-\Lambda(D_{ij}\theta+X_{ij}'\beta)\}(D_{ij}-X_{ij}' \gamma),
$$
where $\eta = (\beta',\gamma')'$ with $\gamma$ denoting the coefficients of a weighted projection of $D_{ij}$ on $X_{ij}$.
See Section \ref{sec:application_dyadic_link_formation_models} for more details about this example as well as our general theory applied to this example.
\qed
\end{example}

\subsection{The Dyadic Cross Fitting}\label{sec:multiway_dml}
For the class of models introduced in Section \ref{sec:data_structure}, we now propose a novel dyadic cross fitting procedure for estimation of $\theta_0$.
With a fixed positive integer $K$, randomly partition the set $[N] = \{1,...,N\}$ of indices of $N$ nodes into $K$ parts $\{I_1,...,I_K\}$. 
For each $k \in [K] = \{1,...,K\}$, obtain an estimate
$$\hat \eta_{k}=\hat \eta\left((W_{ij})_{(i,j)\in \overline{([N]\setminus I_k )^2}}\right)$$ 
of the nuisance parameter $\eta$ by a machine learning method (e.g., lasso, post-lasso, elastic nets, ridge, deep neural networks, and boosted trees) using only the subsample of those observations with dyadic indices $(i,j)$ in $\overline{([N]\setminus I_k )^2}$.
In turn, we define the dyadic machine learning estimator $\tilde \theta$ for $\theta_0$ as the solution to 
\begin{align}
\frac{1}{K}\sumk \Enk[\psi(W;\tilde \theta,\hat \eta_{k})] =0,\label{eq:MDML}
\end{align}
where we recall that $\Enk [f(W)] = \frac{1}{|I_k|(|I_k|-1)}\sum_{(i,j)\in \overline{I_k^2}} f(W_{ij})$ denotes the subsample empirical expectation using only the those observations with dyadic indices $(i,j)$ in $\overline{I_k ^2}$.
If an achievement of the exact 0 is not possible as in \eqref{eq:MDML}, then one may define the estimator $\tilde{\theta}$ as an approximate $\epsilon_N$-solution:
\begin{align}
\label{eq:epsilon_solution}
\left\| \frac{1}{K} \sumk \Enk[\psi(W ; \tilde{\theta}, \hat{\eta}_{k})]\right\|
\leq
 \inf _{\theta \in \Theta}\left\| \frac{1}{K} \sumk \Enk\left[\psi\left(W ; \theta, \hat{\eta}_{ k}\right)\right] \right\|+\epsilon_{N},
\end{align}
where $\epsilon_N=o(\delta_N N^{-1/2})$ and restrictions on the sequence $\{\delta_N\}_N$ will be formally discussed in Section \ref{sec:theory}.
Under the assumptions to be introduced in Section \ref{sec:theory}, this dyadic machine learning estimator $\tilde\theta$ enjoys the root-$N$ asymptotic normality
$
\sqrt{N}\sigma^{-1}(\tilde \theta - \theta_0) \leadsto N(0,I_{d_\theta}),
$
where a concrete expression for the asymptotic variance $\sigma^2$ will be presented in Theorem \ref{theorem_DDML_non_linear} ahead.

Note that, for each $k\in [K]$, the nuisance parameter estimator $\hat\eta_{k}$ is computed using the subsample of those observations with dyadic indices $(i,j) \in\overline{ ([N]\setminus I_k )^2} $, and in turn the average score $\Enk[\psi(W; \cdot,\hat\eta_{k})]$ is computed using the subsample of those observations with dyadic indices $(i,j) \in\overline{ I_k ^2}$.
This two-step computation is repeated $K$ times for every  $k \in [K]$.
We call this cross fitting procedure the $K$-fold dyadic cross fitting.
It differs from and complements the cross fitting procedure of \citet{CCDDHNR18} for i.i.d. data and the cross fitting procedure of \citet{CKMS2019} for multiway clustered data -- neither of these existing cross fitting methods will work under dyadic dependence.
If the dyadic sample $\{W_{ij}\}_{ij}$ would reduce to the monadic sample $\{W_i\}_i$, then the $K$-fold dyadic cross fitting would accordingly boil down to the $K$-fold cross fitting procedure of \citet{CCDDHNR18}.

\begin{figure}
\caption{An illustration of the $2$-fold dyadic cross fitting.}\label{fig:cross_fitting}
\tikzstyle{my help lines}=[gray,very thick,dashed]
\begin{multicols}{2}
\qquad\\
\begin{tikzpicture}
\filldraw[fill=gray!33] (2,7) rectangle (1,8);
\filldraw[fill=gray!33] (3,7) rectangle (2,8);
\filldraw[fill=gray!33] (4,7) rectangle (3,8);
\filldraw[fill=gray!33] (1,6) rectangle (0,7);
\filldraw[fill=gray!33] (3,6) rectangle (2,7);
\filldraw[fill=gray!33] (4,6) rectangle (3,7) node[below] {Score}  node[above] { \ \ \ \ \ $I_1$};
\filldraw[fill=gray!33] (1,5) rectangle (0,6);
\filldraw[fill=gray!33] (2,5) rectangle (1,6);
\filldraw[fill=gray!33] (4,5) rectangle (3,6);
\filldraw[fill=gray!33] (1,4) rectangle (0,5);
\filldraw[fill=gray!33] (2,4) rectangle (1,5) node[above] {Score} node[below] {$I_1$ \ \ \ \ \ };
\filldraw[fill=gray!33] (3,4) rectangle (2,5);
\filldraw[fill=gray!33] (6,3) rectangle (5,4);
\filldraw[fill=gray!33] (7,3) rectangle (6,4);
\filldraw[fill=gray!33] (8,3) rectangle (7,4);
\filldraw[fill=gray!33] (5,2) rectangle (4,3);
\filldraw[fill=gray!33] (7,2) rectangle (6,3);
\filldraw[fill=gray!33] (8,2) rectangle (7,3) node[below] {Nuisance} node[above] { \ \ \ \ \ $I_1^c$};
\filldraw[fill=gray!33] (5,1) rectangle (4,2);
\filldraw[fill=gray!33] (6,1) rectangle (5,2);
\filldraw[fill=gray!33] (8,1) rectangle (7,2);
\filldraw[fill=gray!33] (5,0) rectangle (4,1);
\filldraw[fill=gray!33] (6,0) rectangle (5,1) node[above] {Nuisance} node[below] {$I_1^c$ \ \ \ \ \ };
\filldraw[fill=gray!33] (7,0) rectangle (6,1);
\draw[style=my help lines] (8,0) grid (0,8);
\end{tikzpicture}
\qquad\\

\begin{tikzpicture}
\filldraw[fill=gray!33] (2,7) rectangle  (1,8);
\filldraw[fill=gray!33] (3,7) rectangle (2,8);
\filldraw[fill=gray!33] (4,7) rectangle (3,8);
\filldraw[fill=gray!33] (1,6) rectangle (0,7);
\filldraw[fill=gray!33] (3,6) rectangle (2,7);
\filldraw[fill=gray!33] (4,6) rectangle (3,7) node[below] {Nuisance} node[above] { \ \ \ \ \ $I_2^c$};
\filldraw[fill=gray!33] (1,5) rectangle (0,6);
\filldraw[fill=gray!33] (2,5) rectangle (1,6);
\filldraw[fill=gray!33] (4,5) rectangle (3,6);
\filldraw[fill=gray!33] (1,4) rectangle (0,5);
\filldraw[fill=gray!33] (2,4) rectangle (1,5) node[above] {Nuisance} node[below] {$I_2^c$ \ \ \ \ \ };
\filldraw[fill=gray!33] (3,4) rectangle (2,5);
\filldraw[fill=gray!33] (6,3) rectangle (5,4);
\filldraw[fill=gray!33] (7,3) rectangle (6,4);
\filldraw[fill=gray!33] (8,3) rectangle (7,4);
\filldraw[fill=gray!33] (5,2) rectangle (4,3);
\filldraw[fill=gray!33] (7,2) rectangle (6,3);
\filldraw[fill=gray!33] (8,2) rectangle (7,3) node[below] {Score} node[above] { \ \ \ \ \ $I_2$};
\filldraw[fill=gray!33] (5,1) rectangle (4,2);
\filldraw[fill=gray!33] (6,1) rectangle (5,2);
\filldraw[fill=gray!33] (8,1) rectangle (7,2);
\filldraw[fill=gray!33] (5,0) rectangle (4,1);
\filldraw[fill=gray!33] (6,0) rectangle (5,1) node[above] {Score} node[below] {$I_2$ \ \ \ \ \ };
\filldraw[fill=gray!33] (7,0) rectangle (6,1);
\draw[style=my help lines] (8,0) grid (0,8);
\end{tikzpicture}
\end{multicols}
\end{figure}
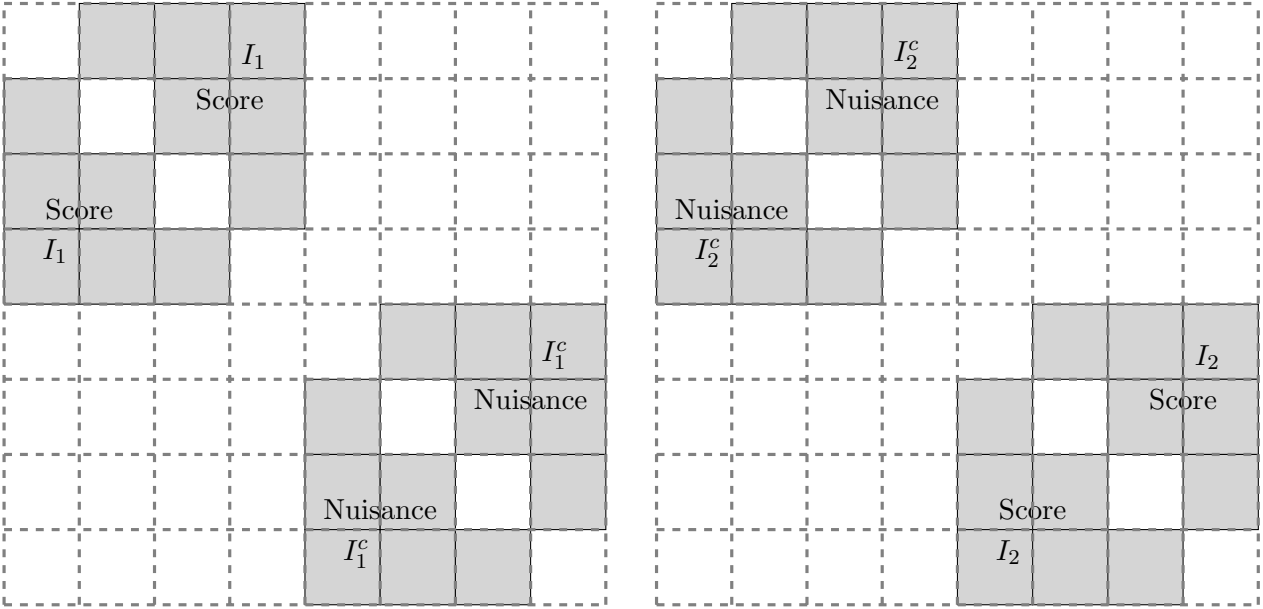

Figure \ref{fig:cross_fitting} illustrates the dyadic cross fitting for the case of $K=2$.
We let $N=8$ for simplicity of illustration, although actual sample sizes should be much larger.
Suppose that a random partition of $[8] = \{1,...,8\}$ entails two folds with one consisting of $I_1 = \{1,...,4\}$ and the other consisting of $I_2 = \{5,...,8\}$.
In the left panel, we compute $\widehat\eta_1$ with a machine learner applied to the subsample $\overline{(I_1^c)^2}$ marked by ``Nuisance'' in the gray shades at the bottom right quarter of the grid, and then evaluate the subsample mean $\mathbb{E}_{I_1}[\psi(W;\tilde \theta,\hat \eta_{1})]$ of the score using the subsample $\overline{I_1^2}$ marked by ``Score'' in the gray shades at the top left quarter of the grid.
In the right panel, in turn, we compute $\widehat\eta_2$ with a machine learner applied to the subsample $\overline{(I_2^c)^2}$ marked by ``Nuisance'' in the gray shades at the top left quarter of the grid, and then evaluate the subsample mean $\mathbb{E}_{I_2}[\psi(W;\tilde \theta,\hat \eta_{2})]$ of the score using the subsample $\overline{I_2^2}$ marked by ``Score'' in the gray shades at the bottom right quarter of the grid.
Although the figure may apparently suggest as if we were discarding the information in the off-diagonal half of the dyad, our dyadic machine learning method uses the full information of $N$ nodes for estimation of $\theta$.
Specifically, this means that Equation \eqref{eq:MDML} evaluates the score with all of the $N$ nodes -- see  Equation \eqref{eq:main_result_nonlinear} ahead for a precise mathematical expression in support of this argument.
The white off-diagonal blocks reflect that the $(K-1)/K$ fraction of data are used to estimate the nuisance parameter $\eta_k$ for each $k \in [K]$, but this will not affect the asymptotic behavior of the estimator $\widetilde\theta$ of interest due to the Neyman orthogonality, as is the case with the conventional DML of \citet{CCDDHNR18}.
As a matter of fact, the next section shows that our estimator $\widetilde\theta$ achieves the root-$N$ asymptotic normality with the asymptotic variance being the same as the one in the case where an oracle would let us know the true nuisance parameter $\eta$.

\section{Theory of the Dyadic Double/Debiased Machine Learning}\label{sec:theory}


In this section, we present a formal theory that guarantees that the dyadic machine learning estimator introduced in Section \ref{sec:multiway_dml} works.
For convenience, we fix additional notations.
Let $\{a_N\}_{N\geq 1}$, $\{v_N\}_{N\geq 1}$, $\{D_N\}_{N\geq 1}$, $\{B_{1N}\}_{N\geq 1}$, $\{B_{2N}\}_{N\geq 1}$ be some sequences of positive constants, possibly growing to infinity, where $a_N\geq N\vee D_N$ and $v_N\geq 1$, $B_{1N}\geq 1$, $B_{2N}\geq 1$ for all $N\geq 1$. 
Let $\{\delta_N\}_{N\ge 1}$, $\{\Delta_N\}_{N\ge 1}$, and $\{\tau_N\}_{N\geq 1}$ be sequences of positive constants that converge to zero such that $\delta_N \ge N^{-1/2}$.
We use $a\lesssim b$ to mean $a\leq cb$ for some $c>0$ that does not depend on $n$, use $a\gtrsim b$ to mean $a\geq cb$ for some $c>0$ that does not depend on $n$, and use the notations $a\vee b=\max\{a,b\}$ and $a\wedge b=\min\{a,b\}$.
For any vector $\delta$, we define the $l_1$-norm by $\|\delta\|_1$, $l_2$-norm by $\|\delta\|$, $l_{\infty}$-norm by $\|\delta\|_{\infty}$, and $l_0$-seminorm (the number of non-zero components of $\delta$) by $\|\delta\|_0$. For any function $f\in L^q(P)$, we define $\|f\|_{P,q}=(\int |f(w)|^q dP(w))^{1/q}$.
We use $\|x_{ij}'\delta\|_{2,N}$ to denote the prediction norm of $\delta$, namely, $\|x_{ij}'\delta\|_{2,N}=\sqrt{\mathbb{E}_N[(x_{ij}'\delta)^2]}$.

The following assumption formally states the dyadic sampling under our consideration.

\begin{assumption}[Sampling]\label{a:sampling}
Suppose that $N \to \infty $ and the following conditions hold.
\begin{enumerate}[(i)]
\item $(W_{ij})_{(i,j)\in\overline{\N^{+2}}}$ is an infinite sequence of jointly exchangeable $p$-dimensional random vectors. 
That is, for any permutation $\pi$  of $\mathbbm N$, we have
$
(W_{ij})_{(i,j)\in \overline {\N^{+2}}}\overset{d}{=} (W_{\pi(i)\pi(j)})_{(i,j)\in \overline {\N^{+2}}}.
$
\item $(W_{ij})_{(i,j)\in \overline {\N^{+2}}}$ is dissociated. 
That is, for any disjoint subsets $A,B$ of $\mathbb{N}^+$, with $\min(|A|,|B|)\geq 2$,
$
(W_{ij})_{(i,j)\in \overline{A^2}}
$
is independent of 
$
(W_{ij})_{(i,j)\in \overline{B^2}}.
$
\item For each $N$, an econometrician observes $(W_{ij})_{(i,j)\in \overline {[N]^2}}$.
\end{enumerate}
\end{assumption}
\noindent
Part (i) requires the identical distribution of $W_{ij}$ while we do allow for dyadic dependence.
Part (ii) requires that two groups of links in the dyad are independent whenever they do not share a common node.
On the other hand, we do allow for arbitrary statistical dependence between any pair of links whenever they share a common node.
Assumption \ref{a:sampling} replaces the i.i.d. sampling assumption of \citet{CCDDHNR18}, and  is a dyadic counterpart of Assumption 1 in \citet{CKMS2019} for multiway clustered data.
Assumption \ref{a:sampling} (i) and (ii) together imply the Aldous-Hoover-Kallenberg representation (e.g. \citealp[Corollary 7.35]{Kallenberg2006}), which states that there exists an unknown (to the researcher) Borel measurable function $\tau_n$ such that
\begin{align}
W_{ij}\overset{d}{=}\tau_n (U_i,U_j,U_{\{i,j\}}),\label{eq:AKH}
\end{align}
where $\{U_i, U_{\{i,j\}}: i,j\in [N], i\ne j\}$ are some i.i.d. latent shocks that can be taken to be $\text{Unif}[0,1]$ without loss of generality -- see \cite{aldous1981representations}.
This is a mathematically different structural representation (Aldous-Hoover-Kallenberg representation; see e.g. \citealp[Theorem 7.22 and Corollary 7.35]{Kallenberg2006}) from that in either of these preceding papers and thus none of the existing DML methods apply here -- see \citet[Appendix I]{chiang2020inference} for a detailed comparison and discussion of the differences.
Finally, we remark that monadic variables (i.e., coordinates of $W_{ij}$ that do not vary with $i$ or $j$) are not ruled out from our sampling framework of Assumption \ref{a:sampling}. 

We further state the following two conditions.

\begin{assumption}[Nonlinear Moment Condition Problem with Approximate Neyman Orthogonality]\label{a:nonlinear_moment_condition}
The following conditions hold for all $P\in \mathcal P_N$ for all $N \ge 4$.
\begin{enumerate}[(i)]
\item $\Theta$ is independent of $N$ and contains a ball of radius $c_1N^{-1/2}\log N=o(1)$ centred at $\theta_0$.
\item The map $(\theta,\eta)\mapsto \Ep[\psi(W_{12};\theta,\eta)]$ is twice continuously Gateaux differentiable on $\Theta\times \mathcal{T}_N$.
\item For all $\theta \in \Theta$, the identification relation 
\begin{align*}
2||\Ep[\psi(W_{12};\theta,\eta_0)]||\geq ||J_0(\theta-\theta_0)||\wedge c_0
\end{align*}
is satisfied with the Jacobian matrix
\begin{align*}
J_0:=\partial_{\theta'}\{\Ep[\psi(W;\theta,\eta_0)]\}|_{\theta=\theta_0}
\end{align*}
having singular values between $c_0$ and $c_1$.
\item The score $\psi$ satisfies the Neyman orthogonality or, more generally, the Neyman near-orthogonality with $\lambda_N=\delta_N N^{-1/2}$ for the set $\mathcal{T}_N \subset T$.
\end{enumerate}
\end{assumption}

\begin{assumption}[Score Regularity and Requirement on the Quality of Estimation of Nuisance Parameters]\label{a:nonlinear_score_regularity_nuisance_parameters}
Let $K$ be a fixed integer. The following conditions hold for all $P\in \mathcal P_N$ for all $N \ge 4$.
\begin{enumerate}[(i)]
\item 
The function class $\calF_{1}=\{\psi(\cdot;\theta,\eta):\theta \in \Theta,\eta\in \mathcal{T}_N\}$ is pointwise measurable\footnote{See \citet[pp. 110]{van1996weak} for its definition.} and its uniform entropy numbers\footnote{See \citet[Ch 2.6]{van1996weak} for its definition.} satisfy
\begin{align}
\label{unifrom_entropy_assumption}
\sup_{Q }\log N(\calF_{1},\|\cdot\|_{Q,2},\varepsilon\|F_{1}\|_{Q,2})\leq v_N\log (a_N/\varepsilon),\text{ for all } \:0<\varepsilon\le 1,
\end{align}
where $F_1$ is a measurable envelope for $\mathcal{F}_1$ that satisfies $\|F_{1}\|_{P,q}\leq D_N$.
\item For all $f\in\mathcal{F}_1$, we have $c_0\leq \|f\|_{P,2}\leq C_0$.
\item  Given random subsets $I\subset [N]$  such that $|I|=\lfloor N/K\rfloor$, the nuisance parameter estimator $\hat \eta=\hat\eta((W_{ij})_{(i,j)\in \overline{([N]\setminus I)^2}}) $ belongs to the realization set $\mathcal T_N$ with probability at least $1-\Delta_N$, where $\mathcal T_N$ contains $\eta_0$.

\item  The following conditions hold for all $r\in [0,1)$, $\theta\in\Theta$, and $\eta\in\mathcal{T}_N$,
\begin{enumerate}
\item
$\Ep\big[\|\psi(W_{12};\theta,\eta)-\psi(W_{12};\theta_0,\eta_0)\|^2\big]\lesssim (\|\theta-\theta_0\|\vee\|\eta-\eta_0\|)^2,$
\item
$\big\|\partial_r\Ep\big[\psi\big(W_{12};\theta,\eta_0+r(\eta-\eta_0)\big)\big]\big\|\leq B_{1N}\|\eta-\eta_0\|,$
\item
$\big\|\partial_r^2\Ep\big[\psi\big(W_{12};\theta_0+r(\theta-\theta_0),\eta+r(\eta-\eta_0)\big)\big]\big\|\leq B_{2N}\Big(\|\theta-\theta_{0}\|^2\vee\|\eta-\eta_0\|^2\Big).
$
\end{enumerate}
\item For all $\eta\in\mathcal{T}_N$, $\|\eta-\eta_0\|\leq \tau_N$.
\item All eigenvalues of the matrix
\begingroup
\allowdisplaybreaks
\begin{align*}
\Gamma:=\Ep[\psi(W_{12};\theta_0,\eta_0)\psi(W_{13};\theta_0,\eta_0)']+\Ep[\psi(W_{12};\theta_0,\eta_0)\psi(W_{31};\theta_0,\eta_0)']\\
+\Ep[\psi(W_{21};\theta_0,\eta_0)\psi(W_{13};\theta_0,\eta_0)']+\Ep[\psi(W_{21};\theta_0,\eta_0)\psi(W_{31};\theta_0,\eta_0)']
\end{align*}
\endgroup
are bounded from below by $c_0$. 
\item  $a_N$ and $v_N$ satisfy
\begin{enumerate}
\item
$N^{-1/2}(v_N\log a_N)^{1/2}\leq C_0\tau_N,$
\item
$N^{-1/2+1/q}v_ND_N\log a_N\leq C_0\delta_N,$
\item
$N^{1/2}B_{1N}^2B_{2N}\tau_N^2\leq C_0\delta_N.$
\end{enumerate}
\end{enumerate}
\end{assumption}

\noindent
The last two assumptions are counterparts of Assumptions 2.1 and 2.2 in \citet{BCCW18}, except that the data in this paper are allowed to exhibit dyadic dependence.
{  One observation is that although there are multiple $N$-dependent sequences of constants appearing in these assumptions that might be, at first glimpse, interacting in complicated ways, their rates are pinned down by $\delta_N$ in Assumption \ref{a:nonlinear_score_regularity_nuisance_parameters} (vii), analogously to the high level conditions in \cite{BCCW18} and \cite{CCDDHNR18}. }
Although these conditions are high-level statements, we provide explicit lower-level primitive conditions that imply these assumptions in the context of the logit dyadic link formation model (Example \ref{sec:example_logit}) in Section \ref{sec:lower_level}. 

The following theorem establishes the root-$N$ asymptotic normality of the dyadic DML estimator $\hat\theta$ along with its consistent asymptotic variance estimation.

\begin{theorem}(Dyadic DML for Nonlinear Scores)
\label{theorem_DDML_non_linear}
Suppose that Assumptions \ref{a:sampling}, \ref{a:nonlinear_moment_condition}
and \ref{a:nonlinear_score_regularity_nuisance_parameters} hold.
Then,
\begin{align}
\sqrt{N}\sigma^{-1}(\tilde{\theta}-\theta_0)
&=\frac{\sqrt{N}}{K}\sumk\Enk\bar{\psi}(W)+O_{P_N}(\rho_N)
\notag\\
&=\frac{1}{\sqrt{N}}\sum_{i=1}^N \left\{\Ep [\bar\psi(W_{i0})\mid U_i] +\Ep [\bar\psi(W_{0i})\mid U_i]\right\}+\Opn(\rho_N) \leadsto N(0,I_{d_\theta})
\label{eq:main_result_nonlinear}
\end{align}
holds uniformly over $P\in\mathcal{P}_N$, where the order of remainder follows $
\rho_N \lesssim \delta_N,
$
the influence function takes the form $\bar{\psi}(\cdot):=-\sigma^{-1}J_0^{-1}\psi(\cdot;\theta_0,\eta_0)$, and the variance is given by
\begin{align*}
\sigma^2:=J_0^{-1}\Gamma(J_0^{-1})'.
\end{align*}
Moreover, $\sigma^2$ can be replaced by $\hat{\sigma}^2=\hat{J}^{-1}\hat{\Gamma}(\hat{J}^{-1})'$, where 
\begin{align*}
\hat{J}:=\frac{1}{K}\sumk\Enk[\partial_\theta \psi(W;\theta,\hat{\eta}_k)]|_{\theta=\tilde{\theta}}
\end{align*}
and
\begingroup
\allowdisplaybreaks
\begin{align}
\label{estimator_gamma}
\hat \Gamma:&=\frac{1}{K}\sumk\frac{|I_k|-1}{(|I_k|(|I_k|-1))^2}
\Big[
\sum_{i \in I_k}\sum_{\substack{{j,j' \in I_k}\\{j,j'\neq i}}}\psi(W_{ij};\tilde \theta,\hat\eta_{k}) \psi(W_{ij'};\tilde \theta,\hat\eta_{k})' \\
\nonumber
&+\sum_{j \in I_k}\sum_{\substack{{i,i' \in I_k}\\{i,i'\neq j}}}\psi(W_{ij};\tilde \theta,\hat\eta_{k}) \psi(W_{i'j};\tilde \theta,\hat\eta_{k})' 
+\sum_{i \in I_k}\sum_{\substack{{j,j' \in I_k}\\{j,j'\neq i}}}\psi(W_{ij};\tilde \theta,\hat\eta_{k}) \psi(W_{j'i};\tilde \theta,\hat\eta_{k})'
\\ 
\nonumber
&+\sum_{j \in I_k}\sum_{\substack{{i,i' \in I_k}\\{i,i'\neq j}}}\psi(W_{ij};\tilde \theta,\hat\eta_{k}) \psi(W_{ji'};\tilde \theta,\hat\eta_{k})'
 \Big].
\end{align}
\endgroup
\end{theorem}

The asymptotic distribution is different from both that of the prototypical DML\citep{CCDDHNR18} for i.i.d. data and that of the multi-way cluster-robust DML \citep{CKMS2019}.
In addition to the differences in dependence structures, a yet major departure from Theorem 1 in  \citet{CKMS2019} is that this result allows a large class of nonlinear score, whereas only linear scores are permitted in Chiang et al. To handle the complications resulted from the nonlinearity, it is necessary to handle empirical processes that consist of dyadic random elements. Coping with these technical issues requires delicate works and hence Theorem \ref{theorem_DDML_non_linear} is not a straightforward extension of results in \citet{CKMS2019}.
\section{Application to Dyadic Link Formation Models}\label{sec:application_dyadic_link_formation_models}
\subsection{The High-Dimensional Logit Dyadic Link Formation Models}
\label{model}
In this section, we demonstrate an application of the general theory of dyadic machine learning from Section \ref{sec:theory} to high-dimensional dyadic/network link formation models.
Consider a class of logit dyadic link formation models where the binary outcome $Y_{ij}$, which indicates a link formed between nodes $i$ and $j$, is chosen based on an explanatory variable $D_{ij}$ and $p$-dimensional control variables $X_{ij}$.
Specifically, we write the model
\begin{align*}
\Ep[Y_{ij}|D_{ij},X_{ij}]=\Lambda(D_{ij}\theta_0 + X_{ij}'\beta_0) \text{ for }(i,j)\in \overline{[N]^2},
\end{align*}
where $\Lambda(t)=\exp(t)/(1+\exp (t))$ for all $t \in \Real$. Suppose that we are interested in the parameter $\theta$, while $x \mapsto x'\beta_0$ is treated as a nuisance function $\eta$.

Under i.i.d. sampling environments, \cite{BelloniChernozhukovWei2016} propose a method of inference for $\theta_0$ in high-dimensional logit models.
We demonstrate that a modification of their suggested procedure with a flavor of our proposed dyadic cross fitting enables root-$N$ consistent estimation and asymptotically valid inference for $\theta_0$ under dyadic sampling environments by virtue of Theorem \ref{theorem_DDML_non_linear}.
In Section \ref{sec:lower_level}, we will supplement this procedure with a formal discussion of lower-level primitive conditions for the high-level statements in Assumptions \ref{a:nonlinear_moment_condition} and \ref{a:nonlinear_score_regularity_nuisance_parameters} that are invoked in Theorem \ref{theorem_DDML_non_linear}. 

As in \cite{BelloniChernozhukovWei2016}, our goal is to construct a generated random variable $Z_{ij}=Z(D_{ij},X_{ij})$ such that the following three conditions are satisfied:
\begingroup
\allowdisplaybreaks
\begin{align}
 &\ \Ep\left[\left\{Y_{ij}-\Lambda\left(D_{ij} \theta_{0}+X_{ij}'\beta_0\right)\right\} Z_{ij}\right] 
 =0,
 \label{IV_estimation_1} \\
 &\left.\frac{\partial}{\partial \theta} \Ep\left[\left\{Y_{ij}-\Lambda\left(D_{ij} \theta+X_{ij}' \beta_0\right)\right\} Z_{ij}\right]\right|_{\theta=\theta_{0}} 
  \neq 0,
	\qquad\text{and}
 \label{IV_estimation_2} \\
 &\left.\frac{\partial}{\partial \beta} \Ep\left[\left\{Y_{ij}-\Lambda\left(D_{ij} \theta_{0}+X_{ij}'\beta\right)\right\} Z_{ij}\right]\right|_{\beta=\beta_0} 
 =0.
 \label{orthogonality_equation_logit}
\end{align}
\endgroup
Equations (\ref{IV_estimation_1}) and (\ref{IV_estimation_2}) are used for consistent estimation of $\theta_0$, while Equation (\ref{orthogonality_equation_logit}) obeys the Neyman orthogonality condition -- see Section \ref{sec:data_structure}.

To this end, following \cite{BelloniChernozhukovWei2016}, consider the weighted regression of $D_{ij}$ on $X_{ij}$:
\begin{align}
\label{decomposition}
f_{ij} D_{ij}=f_{ij} X_{ij}' \gamma_{0}+V_{ij}, \quad \text { with } \quad \mathrm{E}_P\left[f_{ij} V_{ij} X_{ij}\right]=0,
\end{align}
where 
$f_{ij}:=w_{ij} / \sigma_{ij}$, $\sigma_{ij}^{2}:=\V\left(Y_{ij} | D_{ij}, X_{ij}\right)$,
$w_{ij}:=\Lambda^{(1)}\left(D_{ij} \theta_{0}+X_{ij}^{\prime} \beta_{0}\right)$, and $\Lambda^{(1)}(t)=\frac{\partial}{\partial t}\Lambda(t)$.
Under the logit link $\Lambda$, these objects are specifically given by $f_{ij}^2=w_{ij}$ and
$\sigma_{ij}^{2}=w_{ij}=\Lambda(D_{ij} \theta_{0}+X_{ij}^{\prime} \beta_{0})\{1-\Lambda(D_{ij} \theta_{0}+X_{ij}^{\prime} \beta_{0})\}$.
We let $Z_{ij}=D_{ij}-X_{ij}'\gamma_0$.

From the log-likelihood function of the logit model $
L(\theta, \beta)=\mathbb{E}_{N}\left[L(W_{ij};\theta, \beta)\right],
$
where $L(W_{ij};\theta, \beta)=\log \{1+\exp(D_{ij} \theta+X_{ij}' \beta)\}-Y_{ij}(D_{ij} \theta+X_{ij}' \beta)$,
one can derive the following  Neyman orthogonal score
\begin{align}
\label{score_logistic}
\psi(W_{ij};\theta,\eta)=\{Y_{ij}-\Lambda(D_{ij}\theta+X_{ij}'\beta)\}(D_{ij}-X_{ij}' \gamma),
\end{align}
where $\eta=(\beta',\gamma')'$ denotes the nuisance parameters. 

With this procedure, applying Theorem \ref{theorem_DDML_non_linear} yields the asymptotic normality result
\begin{align}\label{eq:asymptotic_normality_logit}
\sqrt{N}\sigma^{-1}(\tilde{\theta}-\theta_0)\leadsto N(0,1),
\end{align}
where the components of the variance
$
\sigma^2:=J_0^{-1}\Gamma(J_0^{-1})'
$
take the forms of
\begingroup
\allowdisplaybreaks
\begin{align}
\label{true_J_logistic}
J_0&=\Ep[-D_{ij}\Lambda(D_{ij}\theta_0+X_{ij}'\beta_0)\{1-\Lambda(D_{ij}\theta_0+X_{ij}'\beta_0)\}(D_{ij}-X_{ij}'\gamma_0)]
\qquad\text{and}
\\
\label{true_Gamma_logistic}
\nonumber
\Gamma&=\Ep[\{Y_{12}-\Lambda(D_{12}\theta_0+X_{12}'\beta_0)\}(D_{12}-X_{12}' \gamma_0)\{Y_{13}-\Lambda(D_{13}\theta_0+X_{13}'\beta_0)\}(D_{13}-X_{13}' \gamma_0)]
\\
&+\nonumber
\Ep[\{Y_{12}-\Lambda(D_{12}\theta_0+X_{12}'\beta_0)\}(D_{12}-X_{12}' \gamma_0)\{Y_{31}-\Lambda(D_{31}\theta_0+X_{31}'\beta_0)\}(D_{31}-X_{31}' \gamma_0)]
\\
&+\nonumber
\Ep[\{Y_{21}-\Lambda(D_{21}\theta_0+X_{21}'\beta_0)\}(D_{21}-X_{21}' \gamma_0)\{Y_{13}-\Lambda(D_{13}\theta_0+X_{13}'\beta_0)\}(D_{13}-X_{13}' \gamma_0)]
\\
&+
\Ep[\{Y_{21}-\Lambda(D_{21}\theta_0+X_{21}'\beta_0)\}(D_{21}-X_{21}' \gamma_0)\{Y_{31}-\Lambda(D_{31}\theta_0+X_{31}'\beta_0)\}(D_{31}-X_{31}' \gamma_0)].
\end{align}
\endgroup
Section \ref{sec:lower_level} will present formal theories to guarantee that this application of Theorem \ref{theorem_DDML_non_linear} is valid, based on lower-level sufficient conditions tailored to the current specific model.

Summarizing the above procedure, we provide a step-by-step algorithm below for dyadic machine learning estimation and inference about $\theta_0$.
For any set $I$, we let $I^c$ denote its complement, and let $|I|$ denote the cardinality of $I$. 
We remind readers of the notation for the dyadic subsample expectation operator $\mathbb{E}_{I}[\cdot]:=\frac{1}{|I|(|I|-1)}\sum_{(i,j)\in \overline{I^2}}[\cdot]$.

\begin{algorithm}[$K$-fold Dyadic DML for High-Dimensional Logit Dyadic Link Formation Models]${}$
\label{algorithm_logit} 
\begin{enumerate}
\item Randomly partition $[N]$ into $K$ parts $\{I_1,...,I_K\}$.
\item For each $k\in[K]$: obtain an post-lasso logistic estimate  $(\tilde{\theta}_k,\tilde{\beta}_k)$ of the nuisance parameter by  using only the subsample of those observations with dyadic indices $(i,j)$ in $\overline{([N]\setminus I_k)^2}$,
\begingroup
\allowdisplaybreaks
\begin{align*}
(\hat{\theta}_k, \hat{\beta}_k) &\in \arg \min _{\theta, \beta} \mathbb{E}_{I_k^c}\left[L(W_{ij};\theta, \beta)\right]+\frac{\lambda_{1}}{|I_k^c|}\|(\theta, \beta)\|_{1}, \\ (\widetilde{\theta}_k, \widetilde{\beta}_k) &\in \arg \min _{\theta, \beta} \mathbb{E}_{I_k^c}\left[L(W_{ij};\theta, \beta)\right]: \operatorname{support}(\theta,\beta) \subseteq \operatorname{support}(\hat{\theta}_k,\hat{\beta}_k).
\end{align*}
\endgroup
\item For each $k\in[K]$: calculate the weight $\hat{f}_{ij,k}^2=\Lambda(D_{ij} \widetilde{\theta}_k+X_{ij}' \widetilde{\beta}_k)
\{1-\Lambda(D_{ij} \widetilde{\theta}_k+X_{ij}' \widetilde{\beta}_k)\}$ where $(i,j)\in \overline{([N]\setminus I_k)^2}$.
\item For each $k\in [K]$: obtain an post-lasso OLS estimate  $\tilde{\gamma}_k$ of the nuisance parameter by using only the subsample of those observations with dyadic indices $(i,j)$ in $\overline{([N]\setminus I_k)^2}$,
\begingroup
\allowdisplaybreaks
\begin{align*}
\hat{\gamma}_k &\in \arg \min _{\gamma} \mathbb{E}_{I_k^c}\left[\hat{f}_{ij,k}^{2}\left(D_{ij}-X_{ij}'\gamma\right)^{2}\right]+\frac{\lambda_{2}}{|I_k^c|}\| \gamma\|_{1}, \\
 \widetilde{\gamma}_k &\in \arg \min _{\gamma} \mathbb{E}_{I_k^c}\left[\hat{f}_{ij,k}^{2}\left(D_{ij}-X_{ij}' \gamma\right)^{2}\right]: \quad \operatorname{support}(\gamma) \subseteq \operatorname{support}(\widehat{\gamma}_k).
\end{align*}
\endgroup
\item Solve the equation $\frac{1}{K}\sum\limits_{k\in[K]} \mathbb{E}_{I_k} [\psi(W;\theta,\tilde{\eta}_{k})]=0$ for $\theta$ to obtain the dyadic machine learning estimate $\check{\theta}$, where $\psi(W;\theta,\tilde{\eta}_k)$ takes the form in Equation(\ref{score_logistic}) with $\tilde{\eta}_k=(\tilde{\beta}'_k,\tilde{\gamma}'_k)'$ and $(i,j)\in \overline{I_k^2}$. 
\item Let the dyadic Lasso DML asymtotic variance estimator be given by $\hat{\sigma}^2=\hat{J}^{-1}\hat{\Gamma}(\hat{J}^{-1})'$ where 
\begingroup
\allowdisplaybreaks
\begin{align*}
\hat{J}&=-\frac{1}{K}\sum_{k\in[K]}\mathbb{E}_{I_k}\big[\Lambda(D_{ij}\check{\theta}+X_{ij}'\tilde{\beta}_k)\{1-\Lambda(D_{ij}\check{\theta}+X_{ij}'\tilde{\beta}_k)\}(D_{ij}-X_{ij}'\tilde{\gamma}_k)D_{ij}\big]
,
\\
\hat{\Gamma}&=\frac{1}{K}\sum_{k\in[K]}\frac{|I_k|-1}{(|I_k|(|I_k|-1))^2}
\Big[
\sum_{i \in I_k}\sum_{\substack{{j,j' \in I_k}\\{j,j'\neq i}}}\psi(W_{ij};\check \theta,\tilde\eta_{k}) \psi(W_{ij'};\check \theta,\tilde\eta_{k})' \\
&+\sum_{j \in I_k}\sum_{\substack{{i,i' \in I_k}\\{i,i'\neq j}}}\psi(W_{ij};\check{\theta},\tilde\eta_{k}) \psi(W_{i'j};\check \theta,\tilde\eta_{k})' 
+\sum_{i \in I_k}\sum_{\substack{{j,j' \in I_k}\\{j,j'\neq i}}}\psi(W_{ij};\check \theta,\tilde\eta_{k}) \psi(W_{j'i};\check {\theta},\tilde\eta_{k})' \\
&+\sum_{j \in I_k}\sum_{\substack{{i,i' \in I_k}\\{i,i'\neq j}}}\psi(W_{ij};\check {\theta},\tilde\eta_{k}) \psi(W_{ji'};\check \theta,\tilde\eta_{k})'
 \Big].
\end{align*}
\endgroup
\item Report the estimate $\check{\theta}$, its standard error $\sqrt{\hat{\sigma}^2/N}$, and/or the $(1-a)$ confidence interval
\begin{align*}
\text{CI}_a:=[\check\theta\pm \Phi^{-1}(1-a/2)\sqrt{\hat \sigma^2 /N}].
\end{align*} 
\end{enumerate}
\end{algorithm}

\subsection{Lower-Level Sufficient Conditions}\label{sec:lower_level}
In this section, we provide lower-level sufficient conditions that guarantee the root-$N$ asymptotic normality \eqref{eq:asymptotic_normality_logit} in the application to the high-dimensional logit dyadic link formation models.
For convenience of stating such conditions, we first introduce additional notations.
We continue to use $Z_{ij}=D_{ij}-X_{ij}'\gamma_0$ from the previous subsection. 
Let $a_N=p\vee N$. 
Let $q$, $c_1$, $C_1$ be strictly positive and finite constants with $q>4$,
 $M_N$ be a positive sequence of constants such that 
$M_N\geq \{\Ep[(D_{12}\vee \|X_{12}\|_{\infty})^{2q}]\}^{1/2q}$.
For any $T\subset [p+1]$, $\delta=(\delta_1,...,\delta_{p+1})'\in \Real^{p+1}$, denote $\delta_T=(\delta_{T,1},...,\delta_{T,p+1})'$ with $\delta_{T,j}=\delta_j$ if $j\in T$ and $\delta_{T,j}=0$ if $j\not \in T$.  Define the minimum and maximum sparse eigenvalues by 
\begin{align*}
\phi_{\min}(m)=
\inf _{\|\delta\|_0 \le m} \frac{\|\sqrt{w_{ij}
} (D_{ij},X_{ij}')'\delta\|_{2,N}}{\|\delta_{T}\|_1}, \quad \phi_{\max}(m)=
\sup _{\|\delta\|_0 \le m} \frac{\|\sqrt{w_{ij}}(D_{ij},X_{ij}')'\delta\|_{2,N}}{\|\delta_{T}\|_1}.
\end{align*}
The following assumptions constitute the lower level sufficient conditions.
Let $s_N\ge 1$ be a sequence of integers.
\begin{assumption}(Sparse eigenvalue conditions).
\label{a:RE}
The sparse eigenvalue conditions hold with probability $1-o(1)$, namely, for some $\ell_N\to \infty$ slow enough, we have
\begin{align*}
1\lesssim  \phi_{\min}(\ell_N s_N)\le
 \phi_{\max}(\ell_N s_N)\lesssim 1.
\end{align*}
\end{assumption}

\begin{assumption}(Sparsity).
	\label{a:sparsity}
	$\|\beta_0\|_{0}+\|\gamma_0\|_{0} \leqslant s_{N}$.
\end{assumption}

\begin{assumption}(Parameters).
\label{a:parameter}
\begin{enumerate}[(i)]
\item $\|\beta_0\|+\|\gamma_0\|\leq C_1$.
\item $\sup_{\theta\in \Theta}|\theta	|\leq C_1$.
\end{enumerate}
\end{assumption}
\begin{assumption}(Covariates).
\label{a:covariates}
The following inequalities hold for some $q>4$:
\begin{enumerate}[(i)]
\item $\max_{j=1,2}\inf _{\|\xi\|=1} \mathrm{E}_{P}\big[\big(\Ep [f_{12}\left(D_{12}, X_{12}'\right) \mid U_{j}]\xi\big)^{2}\big] \geqslant c_{1}$.
\item $\max_{j=1,2}\min_{l\in[p]} \big(\mathrm{E}_{P}\big[\big(\Ep [f_{12}^{2} Z_{12} X_{12,l}\mid U_j]\big)^{2}\big] \wedge \mathrm{E}_{P}\big[\big(\Ep [f_{12}^{2} D_{12} X_{12,l}\mid U_j]\big)^{2}\big] \big) \geqslant c_{1}$.
\item $\sup _{\|\xi\|=1} \mathrm{E}_{P}[\{(D_{12}, X_{12}') \xi\}^{4}] \leqslant C_{1}$.

\item $N^{-1/2+2/q}M_{N}^2s_N\log^2 a_N\leq \delta_N $.
\end{enumerate}
\end{assumption}

\begin{assumption}
\label{a:eigenvalue_logistic}
All eigenvalues of the matrix $\Gamma$ defined in (\ref{true_Gamma_logistic}) are bounded from below by $c_0$.
\end{assumption}

\noindent
Assumption \ref{a:RE} is a sparse eigenvalue condition similar to the one in \cite{bickel2009simultaneous} under i.i.d. setting. Sufficient conditions for the sparse eigenvalue conditions under the related multiway clustering is provided by Proposition 3 in \cite{chiang2020inference}. Assumption \ref{a:sparsity} imposes a restriction on the number of non-zero components in the nuisance parameter vectors by $s_N$, where a constraint on the growth rate of $s_N$ in turn will be introduced in Assumption \ref{a:covariates} (iv). 
Assumption \ref{a:parameter} is standard and requires the parameter space $\Theta$ to be bounded as well as the true nuisance parameter vectors $\beta_0$ and $\gamma_0$ to have bounded $\ell_2$-norms. 
It permits $\|\beta_0\|_1$ and $\|\gamma_0\|_1$ to be growing as $N$ increases. 
Assumption \ref{a:covariates} (i) and (ii) impose lower bounds on eigenvalues and variances of some population conditional moments.
Note that the $U_j$'s here come from the Aldous-Hoover-Kallenberg representation introduced in Equation (\ref{eq:AKH}). 
Assumption \ref{a:covariates} (iii) requires the fourth moments of the covariates to be bounded in a uniform manner. Finally, Assumption \ref{a:covariates} (iv) constraints the rates at which the sparsity index $s_N$, the dimensionality $p$, as well as the bound for the $2q$-th moments of the variable of interest and the maximal components of the covariates vector can grow.
It also implicitly requires these $2q$-th moments to exist (albeit the bounds can be diverging with $N$). 
As an illustration, suppose that $\|(D_{ij},X_{ij}')\|_\infty=1$. 
Then  $q$ can be taken arbitrarily large and $M_N=1$. 
In this case, it essentially requires $N^{-1}s_N^2\log^2 a_N =o(1)$. 
These rate requirements are comparable to those assumed in state-of-the-art results in the literature under i.i.d. sampling (e.g. \cite{BCCW18}).

These assumptions are sufficient for the high-level conditions invoked in Theorem \ref{theorem_DDML_non_linear}, as formally stated in the following lemma.

\begin{lemma}
\label{lemma_logistic_lasso}
Assumptions \ref{a:RE}, \ref{a:sparsity}, \ref{a:parameter}, \ref{a:covariates} and \ref{a:eigenvalue_logistic} imply Assumptions \ref{a:nonlinear_moment_condition} and \ref{a:nonlinear_score_regularity_nuisance_parameters}.
\end{lemma}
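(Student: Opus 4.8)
The plan is to verify term-by-term that Assumptions \ref{a:RE}--\ref{a:eigenvalue_logistic} deliver each clause of Assumptions \ref{a:nonlinear_moment_condition} and \ref{a:nonlinear_score_regularity_nuisance_parameters}, after first committing to explicit auxiliary sequences. I would take the nuisance realization set to be a sparse $\ell_2$-ball $\mathcal T_N=\{\eta=(\beta',\gamma')':\ \|\beta-\beta_0\|\vee\|\gamma-\gamma_0\|\le\tau_N,\ \|\beta\|_0\vee\|\gamma\|_0\lesssim s_N\}$ with $\tau_N\asymp\sqrt{s_N\log a_N/N}$, and set $a_N=p\vee N$, $v_N\asymp s_N$, $D_N\asymp M_N$, and $B_{1N},B_{2N}\asymp 1$. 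With these choices every rate requirement reduces to substitution into the stated inequalities.

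The conditions of Assumption \ref{a:nonlinear_moment_condition} are the lighter half. Part (i) holds since the required radius $c_1N^{-1/2}\log N\to 0$ while $\theta_0$ is interior to the bounded $\Theta$ of Assumption \ref{a:parameter}(ii); Part (ii) follows from the smoothness of the logit link $\Lambda$ together with the moment bounds, which render $(\theta,\eta)\mapsto\Ep[\psi(W_{12};\theta,\eta)]$ twice continuously Gateaux differentiable. Part (iv) is exact Neyman orthogonality: differentiating the score \eqref{score_logistic} in $\beta$ and in $\gamma$ and invoking the conditional-mean model and the weighted-projection property \eqref{decomposition} gives \eqref{orthogonality_equation_logit} and a vanishing $\gamma$-derivative, so $\lambda_N=0\le\delta_N N^{-1/2}$. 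For the identification Part (iii), the scalar Jacobian $J_0$ in \eqref{true_J_logistic} is bounded away from zero by the conditional-moment lower bounds of Assumption \ref{a:covariates}(i)--(ii) read through the Aldous-Hoover-Kallenberg projections \eqref{eq:AKH}, and bounded above by the fourth-moment bound \ref{a:covariates}(iii); the global ``$\wedge\,c_0$'' inequality then comes from strict concavity of the logit criterion on the bounded index set.

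For Assumption \ref{a:nonlinear_score_regularity_nuisance_parameters}, the entropy Part (i) I would obtain by writing $\calF_1$ as a union, over $s_N$-sparse supports, of classes built from the finite-dimensional indices $D\theta+X'\beta$ and $D-X'\gamma$; these are VC-subgraph, and the usual covering bound yields the exponent $v_N\asymp s_N$ against $a_N=p\vee N$, with envelope controlled by $M_N$ so that $\|F_1\|_{P,q}\le D_N\asymp M_N$. Part (ii) is the two-sided bound on $\|f\|_{P,2}$, again from Assumption \ref{a:covariates}. The continuity Parts (iv)(a)--(c) are Taylor estimates: since $\Lambda,\Lambda^{(1)},\Lambda^{(2)}$ are uniformly bounded and $(D_{12},X_{12})$ has bounded $2q$-th moments, the $L^2$-modulus in (a) is quadratic and the first- and second-derivative bounds hold with $B_{1N},B_{2N}\asymp 1$. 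Part (v) holds by construction of $\mathcal T_N$ and Part (vi) is exactly Assumption \ref{a:eigenvalue_logistic}. The rate conditions Part (vii) are then arithmetic: (a) reads $\sqrt{s_N\log a_N/N}\lesssim\tau_N$, an equality up to constants, while (b) and (c)---after inserting $v_N\asymp s_N$, $D_N\asymp M_N$, $\tau_N^2\asymp s_N\log a_N/N$, $B_{1N},B_{2N}\asymp 1$---are both dominated by the single growth condition $N^{-1/2+2/q}M_N^2 s_N\log^2 a_N\le\delta_N$ of Assumption \ref{a:covariates}(iv), using $M_N\ge 1$, $\log a_N\ge 1$, and $\delta_N\ge N^{-1/2}$.

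The crux is Part (iii): showing the post-lasso logit estimator $\tilde\beta_k$ and the post-lasso weighted-projection estimator $\tilde\gamma_k$, computed on the complementary fold $\overline{([N]\setminus I_k)^2}$, lie in $\mathcal T_N$ with probability $1-\Delta_N$. The hard part will be the $\ell_2$-rate $\|\tilde\beta_k-\beta_0\|\vee\|\tilde\gamma_k-\gamma_0\|=\Opn(\sqrt{s_N\log a_N/N})$ under arbitrary dyadic dependence rather than i.i.d. sampling. My approach is to adapt \citet{BelloniChernozhukovWei2016,BCCW18} by replacing their i.i.d. concentration step with a dyadic one: because a centered dyadic average has variance of order $1/N$---the shared-node structure of the representation \eqref{eq:AKH} precludes the $1/N^2$ i.i.d. decay---the maximal score coordinate $\|\mathbb{E}_N[\{Y-\Lambda\}(\cdot)]\|_\infty$ concentrates at rate $\sqrt{\log a_N/N}$, which fixes the penalty level $\lambda\asymp\sqrt{\log a_N/N}$. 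Feeding this deviation bound into the restricted-cone argument together with the sparse-eigenvalue condition Assumption \ref{a:RE} converts prediction-norm control into the stated $\ell_2$-rate, and post-lasso support sparsity follows from Assumptions \ref{a:RE} and \ref{a:sparsity}; Assumption \ref{a:covariates}(iv) ensures the rate is fast enough that $\tau_N\to 0$. Finally, the dissociation in Assumption \ref{a:sampling}(ii) makes $\hat\eta_k$ independent of the evaluation fold $I_k$, so once this rate is in hand it plugs directly into Theorem \ref{theorem_DDML_non_linear}.
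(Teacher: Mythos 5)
Your scaffolding matches the paper's proof almost exactly: the choices $\mathcal T_N$ sparse with radius $\tau_N\asymp\sqrt{s_N\log a_N/N}$, $a_N=p\vee N$, $v_N\asymp s_N$, $D_N\asymp M_N$, $B_{1N},B_{2N}\asymp 1$, the verification of Assumption \ref{a:nonlinear_moment_condition} via exact orthogonality ($\lambda_N=0$) and Jacobian bounds, and the verification of Assumption \ref{a:nonlinear_score_regularity_nuisance_parameters} (i), (ii), (iv)--(vii) via sparse unions of VC-subgraph classes, Taylor estimates, and arithmetic on Assumption \ref{a:covariates}(iv) are precisely the paper's Steps 1--4 and 6--11. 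Your plan for the crux, Part (iii) -- adapt \cite{BelloniChernozhukovWei2016,BCCW18} with a dyadic concentration step -- is also the paper's route (its Theorems \ref{theorem_rate_lasso} and \ref{theorem_linear_weight}). However, three pieces of that crux are missing or would fail as written.

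First, the penalty-level concentration: you infer from per-coordinate variance $O(1/N)$ that $\|\nabla L(\theta_0,\beta_0)\|_\infty$ concentrates at $\sqrt{\log a_N/N}$, but with only the polynomial ($2q$-th) moment bound $M_N$ a union bound over $p\gg N$ coordinates with Chebyshev-type tails produces factors polynomial in $p$, not $\sqrt{\log p}$; the paper instead applies the high-dimensional dyadic CLT (Theorem 3 of \cite{chiang2020inference}) to $\sqrt N\,\nabla L(\theta_0,\beta_0)$ followed by Gaussian concentration (alternatively, a H\'ajek decomposition plus a maximal inequality such as Lemma \ref{lemma_maximal_inequality_dyadic_data}). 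Second, the logistic loss is not quadratic, so the restricted-cone/RE argument you describe is not by itself enough to convert the score bound into rates: Lemmas 1--2 of \cite{BelloniChernozhukovWei2016} additionally require lower bounds on the nonlinear impact coefficient $\bar q$, and verifying these bounds (using $M_N$, the sparse eigenvalues, and Assumption \ref{a:covariates}(iv)) occupies Steps 1--2 of the paper's proof of Theorem \ref{theorem_rate_lasso}; omitting this leaves the quadratic minorization of the logit criterion on the cone unjustified. Third, the projection defining $\gamma_0$ in \eqref{decomposition} is weighted by the unknown $f_{ij}$, and Algorithm \ref{algorithm_logit} runs the $\gamma$-lasso with estimated weights $\hat f_{ij,k}$ from the first-stage logit; verifying Assumption \ref{a:nonlinear_score_regularity_nuisance_parameters}(iii) for $\tilde\gamma_k$ therefore requires controlling how first-stage error propagates into the weighted regression -- the paper's Theorem \ref{theorem_linear_weight} does this by showing $|\hat f_{ij}^2-f_{ij}^2|\le f_{ij}^2/2$ with high probability and then invoking Lemmas L.1--L.3 of \cite{BCCW18} -- whereas your sketch treats the $\gamma$-regression as if the weights were known.
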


As a consequence of this lemma and the general result in Theorem \ref{theorem_DDML_non_linear}, we obtain the following result of the root-$N$ asymptotic normality of the dyadic machine learning estimator $\check\theta$ in the high-dimesnional logit dyadic link formation models.

\begin{theorem}(Inference about Regression Coefficients in the High-Dimensional Logit Dyadic Model)
\label{theorem_logit}
Suppose that Assumptions \ref{a:sampling}, \ref{a:RE}, \ref{a:sparsity}, \ref{a:parameter}, \ref{a:covariates}, and \ref{a:eigenvalue_logistic} hold.
Then, the dyadic machine learning estimator $\check{\theta}$ constructed using score (\ref{score_logistic}) follows
\begin{align*}
\sqrt{N}\sigma^{-1}(\check{\theta}-\theta_0)\leadsto N(0,1),
\end{align*}
where $\sigma^2=J_0^{-1}\Gamma (J_0^{-1})'$, $J_0$ and $\Gamma$ are defined in Equation(\ref{true_J_logistic}) and (\ref{true_Gamma_logistic}) separately.
Moreover, this result continues to hold when $\sigma^2$ is replaced by $\hat{\sigma}^2$ defined in Algorithm \ref{algorithm_logit}. 
\end{theorem}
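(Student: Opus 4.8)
The plan is to read Theorem \ref{theorem_logit} as a concrete specialization of the general nonlinear dyadic DML result, Theorem \ref{theorem_DDML_non_linear}, with every model-specific verification funneled through Lemma \ref{lemma_logistic_lasso}. First I would confirm that the estimator $\check\theta$ produced by Algorithm \ref{algorithm_logit} is exactly the generic estimator $\tilde\theta$ of the general theory: step 5 of the algorithm solves $\frac{1}{K}\sumk\Enk[\psi(W;\theta,\tilde\eta_k)]=0$, which is precisely \eqref{eq:MDML} with the score taken to be \eqref{score_logistic} and the cross-fitted nuisance $\tilde\eta_k=(\tilde\beta_k',\tilde\gamma_k')'$. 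Thus, once the hypotheses of Theorem \ref{theorem_DDML_non_linear} are checked, its conclusion transports verbatim to $\check\theta$.

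Next I would verify that \eqref{score_logistic} is a valid Neyman orthogonal score and that it produces the stated variance components. The moment condition \eqref{eq:existance_condition} is exactly \eqref{IV_estimation_1}, holding because $\Ep[Y_{ij}\mid D_{ij},X_{ij}]=\Lambda(D_{ij}\theta_0+X_{ij}'\beta_0)$ makes the residual conditionally mean-zero while $Z_{ij}=D_{ij}-X_{ij}'\gamma_0$ is a function of the conditioning variables. For orthogonality \eqref{eq:Neyman_orthogonal_condition}, the $\gamma$-direction is automatic from the conditional mean-zero property, and the $\beta$-direction reduces to \eqref{orthogonality_equation_logit}: differentiating $\Ep[\psi]$ in $\beta$ yields a factor $-\Lambda^{(1)}(\cdot)X_{ij}Z_{ij}$ at the truth, and the weighted-projection orthogonality in \eqref{decomposition}, namely $\Ep[f_{ij}V_{ij}X_{ij}]=0$ with $f_{ij}^2=w_{ij}=\Lambda^{(1)}(\cdot)$, forces $\Ep[w_{ij}Z_{ij}X_{ij}]=0$ and kills this derivative. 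Along the way I would record by direct differentiation that $\partial_\theta\psi=-\Lambda(\cdot)\{1-\Lambda(\cdot)\}D_{ij}(D_{ij}-X_{ij}'\gamma)$, so that $J_0=\partial_\theta\Ep[\psi]|_{\theta_0}$ is \eqref{true_J_logistic} and the dyadic covariance object in Assumption \ref{a:nonlinear_score_regularity_nuisance_parameters}(vi) is \eqref{true_Gamma_logistic}; the same computation identifies the algorithm's $\hat J$ and $\hat\Gamma$ with the estimators appearing in Theorem \ref{theorem_DDML_non_linear}.

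With these identifications in hand, I would invoke Lemma \ref{lemma_logistic_lasso} to convert the primitive Assumptions \ref{a:RE}--\ref{a:eigenvalue_logistic} into the high-level Assumptions \ref{a:nonlinear_moment_condition} and \ref{a:nonlinear_score_regularity_nuisance_parameters}. Together with the dyadic sampling Assumption \ref{a:sampling}, Theorem \ref{theorem_DDML_non_linear} then delivers $\sqrt{N}\sigma^{-1}(\check\theta-\theta_0)\leadsto N(0,I_{d_\theta})$ with $\sigma^2=J_0^{-1}\Gamma(J_0^{-1})'$; since $\theta$ is scalar here ($d_\theta=1$), this is the claimed $N(0,1)$ limit. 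The same theorem furnishes $\hat\sigma^2=\sigma^2+o_p(1)$ for $\hat\sigma^2=\hat J^{-1}\hat\Gamma(\hat J^{-1})'$, and Slutsky's theorem yields the conclusion with $\sigma^2$ replaced by $\hat\sigma^2$.

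I expect the main obstacle to reside in the hypotheses feeding Lemma \ref{lemma_logistic_lasso}, specifically in certifying that the cross-fitted $\tilde\eta_k$ lands in $\mathcal{T}_N$ with the rate $\|\tilde\eta_k-\eta_0\|\le\tau_N$ demanded by Assumption \ref{a:nonlinear_score_regularity_nuisance_parameters}(v). The delicate feature is that $\tilde\gamma_k$ is a post-lasso weighted least-squares estimator built from estimated weights $\hat f_{ij,k}^2=\Lambda(D_{ij}\tilde\theta_k+X_{ij}'\tilde\beta_k)\{1-\Lambda(D_{ij}\tilde\theta_k+X_{ij}'\tilde\beta_k)\}$, so its error compounds that of the first-stage logistic lasso; controlling this composition under dyadic dependence (rather than i.i.d.) is exactly where the dyadic lasso-logit rate results, the sparse-eigenvalue Assumption \ref{a:RE}, and the growth restriction in Assumption \ref{a:covariates}(iv) do the heavy lifting inside the lemma. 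Once Lemma \ref{lemma_logistic_lasso} is granted, the remaining work for Theorem \ref{theorem_logit} is the bookkeeping above: matching $\check\theta$, $\hat J$, and $\hat\Gamma$ to the generic objects and reading off the one-dimensional Gaussian limit.
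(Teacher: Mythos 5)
Your proposal is correct and follows essentially the same route as the paper: the paper's proof of Theorem \ref{theorem_logit} consists precisely of invoking Lemma \ref{lemma_logistic_lasso} to convert Assumptions \ref{a:RE}--\ref{a:eigenvalue_logistic} into the high-level Assumptions \ref{a:nonlinear_moment_condition} and \ref{a:nonlinear_score_regularity_nuisance_parameters}, and then applying Theorem \ref{theorem_DDML_non_linear}. The additional bookkeeping you supply (matching $\check\theta$, $\hat J$, $\hat\Gamma$ to the generic objects, checking Neyman orthogonality via \eqref{orthogonality_equation_logit}, and noting $d_\theta=1$) is material the paper delegates to the statement of Algorithm \ref{algorithm_logit} and to the proof of Lemma \ref{lemma_logistic_lasso} itself, so it is consistent with, not divergent from, the paper's argument.
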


\section{Monte Carlo Simulation Studies}

Focusing on the application to the high-dimensional dyadic link formation model introduced in Section \ref{sec:application_dyadic_link_formation_models}, we study and present finite sample performance of the proposed method of dyadic machine learning in this section through Monte Carlo simulations.

We design the data generating process for our Monte Carlo simulation studies as follows.
For each $(i,j) \in \overline{[N]^2}$, generate the random vector $(D_{ij},X_{ij}',\varepsilon_{ij})'$ according to 
\begingroup
\allowdisplaybreaks
\begin{align*}
D_{ij} &= \frac{1}{3} \tilde D_{i} + \frac{1}{3} \tilde D_{j} + \frac{1}{3} \tilde D_{ij},
\\
X_{ij} &= \frac{1}{3} \tilde X_{i} + \frac{1}{3} \tilde X_{j} + \frac{1}{3} \tilde X_{ij},
\\
\varepsilon_{ij} &= F_{\text{Logistic}(0,1)}^{-1} \circ F_{\text{Normal}(0,1)}\left( \sqrt{\frac{1}{3}} \tilde \varepsilon_{i} + \sqrt{\frac{1}{3}} \tilde \varepsilon_{j} + \sqrt{\frac{1}{3}} \tilde \varepsilon_{ij} \right),
\end{align*}
\endgroup
where $\{(\tilde D_{i}, \tilde X_{i}')'\}_{i \in [N]}$, $\{(\tilde D_{ij}, \tilde X_{ij}')'\}_{(i,j) \in \overline{[N]^2}}$, $\{\tilde \varepsilon_{i}\}_{i \in [N]}$ and $\{\tilde \varepsilon_{ij}\}_{(i,j) \in \overline{[N]^2}}$ are independently drawn with the laws
\begin{align*}
(\tilde D_{i}, \tilde X_{i}')' &\sim \text{Normal}(0,\Sigma),&
(\tilde D_{ij}, \tilde X_{ij}')' &\sim \text{Normal}(0,\Sigma),
\\
\tilde \varepsilon_{i} &\sim \text{Normal}(0,1),&
\tilde \varepsilon_{ij} &\sim \text{Normal}(0,1),
\end{align*}
and $\Sigma$ is the variance-covariance matrix whose $(r,c)$-th entry given by $\Sigma_{rc} = 5^{-|r-c|}$.
While $D_{ij}$ is designed as a scalar random variable, we will vary the dimension of the random vector $X_{ij}$ across sets of simulations.

From these primitive variables, construct the dyadic binary decision $Y_{ij}$ in turn by the logistic threshold-crossing model
\begin{align*}
Y_{ij} = \mathbbm{1}\{  D_{ij}\theta_0 + X_{ij}'\beta_0 \ge \varepsilon \}.
\end{align*}
Note that this data generating model implies the logistic regression
\begin{align*}
\text{E}[Y_{ij}|D_{ij},X_{ij}] = \Lambda( D_{ij}\theta_0 + X_{ij}' \beta_0).
\end{align*}
We thus obtain a dyadic sample $\{(Y_{ij},D_{ij},X_{ij}')'\}_{(i,j) \in \overline{[N]^2}}$, which we assume is observed by a researcher.
For the true parameter values, set $\theta_0 = 1$, and let the $c$-th coordinate of $\beta_0$ be $2(-2)^{-c}$ for $c \leq \lfloor \sqrt{N} \rfloor$ and 0 otherwise.
We will experiment with various sample sizes $N$ as well as various dimensions of $X_{ij}$ across sets of simulations.
The number of Monte Carlo iterations is set to 2,500 for each set of simulations.

Table \ref{tab:simulation_results} summarizes Monte Carlo simulation results.
The top panel of the table shows results for the conventional double/debiased machine learning without the dyadic cross fitting or the dyadic robust standard errors.
The bottom panel of the table shows results for our proposed dyadic double/debiased machine learning with the dyadic cross fitting and the dyadic robust standard errors.
Displayed are simulation results that vary with the sample size $N$, the dimension $\text{dim}(X)$ of $X_{ij}$, and the number $K$ of folds for the dyadic cross fitting.
Displayed statistics include 
the simulation mean of $\check\theta$, 
the simulation bias of $\check\theta$,
the simulation standard deviation of $\check\theta$,
the simulation root mean squared error of $\check\theta$,
the simulation inter-quartile range of $\check\theta$,
and the simulation coverage frequencies for nominal sizes of 90\% and 95\%.

\begin{table}
	\centering
	\scalebox{0.82}{
		\begin{tabular}{cccccccccccccc}
		\hline\hline
Method & $N$	& $\text{dim}(X)$	& $K$ & True & Mean	& Bias	& SD	& RMSE	& Q25	& Q50	& Q75	& 90\%	& 95\%\\
		\hline
Conventional DML &
50	& 25	& 5		& 1.000	& 1.144	& 0.144	& 0.316	& 0.347	& 0.928	& 1.139	& 1.354	& 0.363	& 0.422\\
Conventional DML &
100	& 50	& 5		& 1.000	& 1.148	& 0.148	& 0.220	& 0.265	& 0.999	& 1.141	& 1.293	& 0.233	& 0.277\\
		\hline
Conventional DML &
50	& 25	& 10	& 1.000	& 1.145	& 0.145	& 0.316	& 0.348	& 0.929	& 1.139	& 1.356	& 0.360	& 0.421\\
Conventional DML &
100	& 50	& 10	& 1.000	& 1.149	& 0.149	& 0.220	& 0.265	& 0.998	& 1.143	& 1.294	& 0.234	& 0.274\\
		\hline
Conventional DML &
50	& 50	& 5		& 1.000	& 1.253	& 0.253	& 0.316	& 0.405	& 1.044	& 1.253	& 1.464	& 0.332	& 0.392\\
Conventional DML &
100	& 100	& 5		& 1.000	& 1.252	& 0.252	& 0.222	& 0.336	& 1.100	& 1.248	& 1.404	& 0.170	& 0.209\\
		\hline
Conventional DML &
50	& 50	& 10	& 1.000	& 1.256	& 0.256	& 0.316	& 0.407	& 1.044	& 1.257	& 1.465	& 0.324	& 0.388\\
Conventional DML &
100	& 100	& 10	& 1.000	& 1.254	& 0.254	& 0.222	& 0.338	& 1.102	& 1.249	& 1.406	& 0.171	& 0.206\\
		\hline\hline
		\\
		\hline\hline
Method & $N$	& $\text{dim}(X)$	& $K$ & True & Mean	& Bias	& SD	& RMSE	& Q25	& Q50	& Q75	& 90\%	& 95\%\\
		\hline
Dyadic DML &
50	& 25	& 5		& 1.000	& 1.059	& 0.059	& 0.470	& 0.474	& 0.746	& 1.052	& 1.369	& 0.908	& 0.950\\
Dyadic DML &
100	& 50	& 5		& 1.000	& 1.045	& 0.045	& 0.288	& 0.292	& 0.848	& 1.040	& 1.236	& 0.901	& 0.946\\
		\hline
Dyadic DML &
50	& 25	& 10	& 1.000	& 1.047	& 0.047	& 0.518	& 0.520	& 0.688	& 1.034	& 1.394	& 0.922	& 0.965\\
Dyadic DML &
100	& 50	& 10	& 1.000	& 1.039	& 0.039	& 0.303	& 0.305	& 0.825	& 1.042	& 1.239	& 0.916	& 0.957\\
		\hline
Dyadic DML &
50	& 50	& 5		& 1.000	& 1.113	& 0.113	& 0.522	& 0.534	& 0.736	& 1.111	& 1.460	& 0.897	& 0.953\\
Dyadic DML &
100	& 100	& 5		& 1.000	& 1.105	& 0.105	& 0.304	& 0.322	& 0.903	& 1.105	& 1.314	& 0.883	& 0.940\\
		\hline
Dyadic DML &
50	& 50	& 10	& 1.000	& 1.114	& 0.114	& 0.582	& 0.593	& 0.715	& 1.101	& 1.497	& 0.908	& 0.964\\
Dyadic DML &
100	& 100	& 10	& 1.000	& 1.095	& 0.095	& 0.320	& 0.334	& 0.880	& 1.093	& 1.316	& 0.908	& 0.958\\
		\hline\hline
		\end{tabular}
	}
	\caption{Simulation results based on 2,500 Monte Carlo iterations. Displayed are the sample size $N$, dimension $\text{dim}(X)$ of $X_{ij}$, number $K$ of folds for the dyadic cross fitting, simulation mean of $\check{\theta}$, simulation bias of $\check{\theta}$, simulation standard deviation of $\check{\theta}$, simulation root mean square error of $\check{\theta}$, simulation inter quartile range of $\check{\theta}$, and coverage frequency for the nominal sizes of 90\% and 95\%. 
	The top panel of the table displays results for the conventional DML without dyadic cross fitting or dyadic robust standard errors, and the bottom panel of the table displays results for our proposed method of dyadic DML with dyadic cross fitting and dyadic robust standard errors.}
	${}$\label{tab:simulation_results}
\end{table}

First, compare the top panel and bottom panel of the table in terms of simulation statistics of the estimates. 
Notice that the conventional DML (without accounting for dyadic data) produces biased estimates of $\check{\theta}$, while the dyadic DML produces less biased estimates.
Recall that the conventional cross fitting works under independent sampling, while dyadic data are not independent.
As such, for dyadic data, conventional cross fitting may not be able to remove over-fitting biases.
On the other hand, the simulation results support the claim that dyadic cross fitting accounts for dyadic dependence and is successful in removing over-fitting biases.

Second, compare the top panel and bottom panel of the table now in terms of simulation coverage frequencies. 
Notice that the conventional DML (without accounting for dyadic data) suffers from severe under-coverage especially for larger sample sizes, while the dyadic DML maintains correct coverage.
This difference may be imputed to two sources.
One source of this difference is that estimates by the conventional DML are biased while those by the dyadic DML is less biased as discussed in the previous paragraph.
As the sample size increases, it becomes less likely that shorter confidence intervals centered around the biased estimates $\check{\theta}$ contain the true parameter value $\theta_0$.
Another source of the difference is that the conventional standard error without accounting for dyadic sampling is simply too small compared to the dyadic robust standard error.

In summary, the dyadic DML produces less biased estimates and asymptotically accurate coverage.
Failure to account for dyadic dependence in cross fitting results in biased estimates, while
failure to account for dyadic dependence in cross fitting and/or standard errors results in severe under-coverage.

\section{The Determinants of FTAs}

In this section, we reconsider the classic  network setting: the formation of free trade agreements across countries.  We compare our proposed method of dyadic DML to traditional methods for analyzing determinants of free trade agreements (FTA) using both parsimonious and rich empirical models.  In this sense, we contribute towards addressing the ``inherent complexity and ambiguity'' of the ``pure economic theory of trading blocs,'' \citep[][p. 60]{krugman1993regionalism} and add to the literature confirming the assertion ``detailed empirical work is the right direction'' needed to make headway.
\citet{baier2004economic} pioneered empirical studies in this field, and a large number of subsequent studies have followed since that time based on their foundational specification.
We aim to strengthen our understanding of the conclusions produced in this literature by conducting a new empirical analysis with the following econometric details in mind:
1. we control for high-dimensional covariates in order to mitigate unobserved endogeneity or unobserved confoundedness; and
2. we account for the dyadic dependence inherent in trade data and, for the first time, report standard errors robust to this source of misleading inference.

Consider the empirical model
\begin{align*}
\Ep[Y_{ij}|D_{ij},X_{ij}] = \Lambda\left(D_{ij}\theta + X_{ij}'\beta\right)
\text{ for } (i,j) \in \overline{[N]^2},
\end{align*}
where
$Y_{ij}$ is the dummy variable if there is a bilateral FTA between $i$ and $j$ as of year 2000.  The parsimonious specification includes four classic explanatory variables
$(D_{ij},X_{ij}')'$: (A)
the logarithm of the population-weighted bilateral distance between $i$ and $j$ in kilometers,
(B) the sum of the logarithms of the per-capita GDPs of $i$ and $j$ in the baseline year,
(C) the absolute difference of the logarithms of the per-capita GDPs between $i$ and $j$ in the baseline year, and
(D) the absolute difference of the logarithms of the capital-labor ratios between $i$ and $j$ in the baseline year.  Following \citet{baier2004economic} we complete our most parsimonious specification by including in the square of the difference of the logarithms of the capital-labor ratio. 

Bilateral distance is a standard proxy for trade costs in this literature and helps capture the fact smaller bilateral trade costs raise the benefits from a trade agreement, ceteris paribus.  As highlighted by \citet{baier2004economic}, a standard prediction of economic theory is that ``[t]he net gain from an FTA between two countries increases as the distance between them decreases.'' Also see \citet{krugman1991move} and \citet{frankel1993continental,frankel1995trading,frankel1996regional}.
This hypothesis motivates (A). 

Market size is measured by the sum of log country-level GDPs, conditional on trade costs, while similarity in economic development is captured by the absolute difference of log GDPs. The former reflects the common finding that ``[t]he net welfare gain from an FTA between a pair of countries increases the
larger are their economic sizes (i.e. average real GDPs),'' while the latter investigates the notion that ``[t]he net welfare gain from an FTA between a pair of countries increases the more similar are their economic sizes (i.e. real GDPs).'' Also see \citet{krugman1998comment}. These hypotheses motivate (B) and (C). 

Last, we also investigate the degree to which differences in factor endowments increase the likelihood of an FTA through the inclusion of the difference of the log capital-labor ratio across potential trade partners. \citet{baier2004economic} state that ``[t]he net welfare gain from an FTA between a pair of countries increases with wider relative factor endowments, but might eventually decline due to increased specialization.'' Accordingly, we include the absolute difference of the log capital-labor ratio to capture the notion that countries with a greater difference in initial endowments are more likely to have greater differences in comparative advantage across industries and, as such, greater potential gains from trade with each other. This hypothesis motivates (D).  As in \citet{baier2004economic} we also include the square of the difference of the log capital-labor ratio as a control for a diminishing impact of differences in factor endowments. 


Of course, there are wide host of additional bilateral characteristics which affect realized trade flows across countries and, as such, the likelihood of successfully forming an FTA.  For instance, research confirms that time differences, colonial ties, language or ethic overlap, common legal origins, political ties, among many other country-pair characteristics, affect trade flows.  Indeed, to the extent that these co-variates reflect differential market access, trade costs or the potential future benefits from trade agreements, there exclusion from existing work may present a source of omitted variable bias.

To investigate the impact of restricting attention to a parsimonious specification, we add a wide set of additional controls to our benchmark empirical model including the time difference between $i$ and $j$ in hours, a religious proximity index and a wide set of dummy variables capturing whether $i$ and $j$ have ever been in a colonial relationship, whether they have been in a colonial relationship after 1945, whether they had a common colonizer after 1945, whether they share a common official or primary language, whether there is a common language spoken by at least 9\% of the population in both countries, whether one country is current or former hegemon of the other, whether $j$ is current or former hegemon of $i$, whether they have ever been two colonies of the same empire, whether they shared a common legal origin before or after transition to independent statehood.  To saturate our set of potential confounders we also include all of the powers and interactions for each of these variables -- see Appendix \ref{sec:data_appendix} for details of how we construct each co-variate. In total, there are more than 140 explanatory variables in the rich specifications that we consider, and these dimensions are high relative to the effective sample size $N$ of the dyad which consists of $N=229$ economies. Following \citet{baier2004economic}, we set year 1960 as the baseline year for variable measurement, which is roughly the year when FTAs started to be signed across across global trading partners. 


The dependent variable reflects whether any two countries have an existing FTA and comes from the data used by \citet{head2010erosion} and \citet{baier2009estimating}.
Distance and gross domestic product data are retrieved from the CEPII Distance Dataset and World Development Indicators (World Bank), respectively.
The information about colonial relationships, languages, wars, hegemon relationships, and sibling relationships come from the data used by \citet{head2010erosion}, while the nature of bilateral legal relationships and religious proximity are respectively reported in \citet{la2008economic} and \citet{disdier2007je}. See the data descriptions in Appendix \ref{sec:data_appendix} for further details of this data set.


Table \ref{tab:empirical_results} summarizes estimation and inference results based on 50 iterations of resampled cross fitting.
In this table, we report estimates and standard errors for the coefficients of log population-weighted distance, the sum of the log GDPs, the absolute difference of the log GDPs, and the difference of the log capital-labor ratios. These four explanatory variables correspond to hypotheses (A), (B), (C) and (D) suggested in the previous paragraph.
The first column (I) reports results based on the simple, parsimonious logistic regression not including high-dimensional regressors and not accounting for dyadic sampling.
The second column (II) reports results based on the simple logistic regression, while including high-dimensional regressors.  
It does not account for dyadic sampling. Columns (III) and (IV) report results based on conventional machine learning with $K=5$ and 10, respectively.  
The last two columns (V) and (VI) report results based on our proposed dyadic machine learning where we again set $K=5$ and 10, respectively.

\begin{table}
	\centering
		\begin{tabular}{lcccccccccc}
		\hline\hline
			Dependent variable: && Logit && Full Logit && \multicolumn{2}{c}{Conventional ML} && \multicolumn{2}{c}{Dyadic ML}\\
			\cline{3-3}\cline{5-5}\cline{7-8}\cline{10-11}
			free trade agreement && (I) && (II) && (III) & (IV) && (V) & (VI)\\
			\hline
			(A) Distance    
			&&-1.690 &&-1.358 &&-1.662 &-1.660 &&-1.515 &-1.762\\
			
			&&(0.046)&&(0.075)&&(0.081)&(0.079)&&(0.111)&(0.115)\\
			(B) Size (Sum of log GDP)
			&& 0.236 && 0.343 && 0.359 & 0.360 && 0.263 & 0.244 \\
			&&(0.013)&&(0.020)&&(0.008)&(0.007)&&(0.043)&(0.035)\\
			(C) Similarity ($\Delta$ log GDP)
			&&-0.003 &&-0.004 &&-0.004 &-0.004 &&-0.001 & 0.001\\
			
			&&(0.015)&&(0.018)&&(0.014)&(0.014)&&(0.002)&(0.002)\\
			(D) Rel. Factor Endowments
			&& 0.231 && 0.187 &&-0.460 &-0.460 &&-0.432 &-0.396\\
			 \ \ \ \ \ ($\Delta$ log $K/L$)
			&&(0.060)&&(0.072)&&(0.143)&(0.143)&&(0.326)&(0.362)\\
			\\
			Effective sample size
			&& 13,027 && 13,027 && 13,027 & 13,027 && 229 & 229\\
			Dimension $\text{dim}(D',X')'$
			&& 5 && 141 && 141 & 141 && 141 & 141\\
			Number $K$ of folds
			&& N/A && N/A && 5 & 10 && 5 & 10\\
		\hline\hline
		\end{tabular}
	\caption{Estimation and inference results based on 50 iterations of resampled cross fitting. Displayed are estimates and standard errors for the coefficients of (A) the logarithm of population-weighted distance, (B) the sum of the logarithms of GDPs in the baseline year, (C) the absolute difference of the logarithms GDPs in the baseline year and (D) the difference of the logarithms of capital-labor ratios. The first column (I) reports results based on the simple, parsimonious logistic regression without the high-dimensional controls, the second column (II) reports results based on the simple logistic regression with all the covariates. The next two columns (III) and (IV) report results based on conventional machine learning. The last two columns (V) and (VI) report results based on our proposed dyadic machine learning.}
	${}$\label{tab:empirical_results}
\end{table}

We observe the following sharp results. First, the simple logit replicates the sign on each of the included co-variates in \citet{baier2004economic}. Likewise, each co-variate is also statistically significant at conventional levels, save one.\footnote{This extends to the square of the log difference in capital labor ratios.  The estimated coefficient for this co-variate is -0.169 with a standard error of 0.021 in column (I), and -0.117 with as standard error of 0.036 in column (II).  We do not report the estimated coefficient for the square of the log difference in capital labor ratios in columns (III)-(VI) since it is treated as a control variable in $X_{ij}$ rather than part of $D_{ij}$ itself.}  While \citet{baier2004economic} find that greater economic similarity, measured as the difference in log GDPs, is a significant determinant of FTAs, our simple logit suggests that this coefficient is small and insignificantly different from zero. This likely reflects differences in the underlying data.  Our model is based on FTAs signed by the year 2000, four years after the analysis conducted in \citet{baier2004economic} and a period when a significant number of FTAs were signed.  Further, improved data availability allows us to extend their analysis of FTA formation among 54 countries to a setting with 167 countries, including many additional developing countries.\footnote{\citet{baier2004economic} also use a probit model rather than a logit model. We investigated whether difference in methodology resulted in significantly different results in column (I).  Rather, probit results were very similar to the logit specification.  We focus on the logit since it is most directly comparable to the conventional ML and dyadic ML approaches in columns (III)-(VI).}

Second, rows (A), (B) and (D) each highlight a separate, but equally important, feature of dyadic machine learning in this classic setting.  For instance, in row (A) the dyadic ML indicates that the coefficient of (A) the logarithm of the population-weighted distance is negative and significantly different from zero. In this sense, the estimated coefficient is similar to that estimated by the parsimonious logit, the full logit, and conventional ML. However, the estimated standard errors in columns (V) and (VI) are 48-53 percent larger than those in column (II), the rich logit specification, and 37-46 percent larger than those in columns (III)-(IV), where conventional machine learning is employed. Nonetheless, our estimates support the hypothesis that as distance, and trade costs, between two countries decrease the net gain from an FTA between them rises \citep{krugman1991move,frankel1993continental,frankel1995trading,frankel1996regional,baier2004economic}. The dyadic ML estimates again confirm that this common empirical conclusion is robust to the inclusion of high-dimensional covariates and controlling for dyadic dependence among FTA formation data.

In row (B) we observe that the dyadic ML also indicates that the coefficient of (B) economic size, measured as the sum of log GDP, is positive and significantly different from zero. Again, this finding is similar to that indicated in the parsimonious logit, the full logit, and the conventional ML and supports the hypothesis that greater economic size between a pair of countries increases the net welfare gain from an FTA \citep{krugman1998comment,baier2004economic}.  However, our results also suggest that ignoring dyadic dependence is likely to overstate the importance of market size in the formation in FTAs.

Specifically, the magnitude of the estimated coefficient on the sum of the log GDPs changes significantly across columns and is consistent with claim that conventional methods may yield biased estimates in rich specifications.  For instance, in columns (II), (III) and (IV), the estimation routines all return estimates which are substantially larger than those in columns (V) and (VI) where we control dyadic dependence.  These empirical results are consistent with the findings in our Monte Carlo simulations: conventional ML was likely to produce biased estimates in rich specifications.  In Table 2, we find that FTA market size is likely to receive outsized importance; the estimated coefficients on the size variable in columns (V) and (VI) are 27-32 percent smaller than those columns (III) and (IV).

Row (D) illustrates a different source of bias.  In particular, the simple logit indicates that the coefficient on (D) relative factor endowments, measured as the difference of the log capital-labor ratio, is positive and significant in columns (I) and (II).  This result is consistent with existing empirical results that confirm that differences in factor endowments are important determinants of FTAs.  However, once we implement  machine learning variable selection in the high-dimensional controls (columns (III)-(IV)), just the opposite pattern emerges: greater similarity in capital-labor ratios are found to \textit{encourage} FTA formation when we employ conventional ML.  In this sense, benchmark analyses which do not directly account for the inclusion and confoundedness associated with high dimensional co-variates are likely to deliver misleading results.  For instance, \citet{baier2004economic} do control for a number of the above co-variates in the robustness checks of their seminal analysis and, like our results in column (II), find that the inclusion of additional co-variates in the simple logit has little impact on their benchmark findings.

At the same time allowing for high dimensional co-variates and dyadic dependence in columns (V) and (VI) also has a large impact on the recovered standard errors. Indeed, in this case, it changes preceding conclusions: while we maintain the positive estimated coefficient on the relative factor endowments variable after controlling for dyadic dependence, the standard error more than doubles. We can no longer conclude that relative factor endowments are an important determinant of FTA formation, overturning this classic result. More generally, accounting for dyadic dependence significantly increases the recovered standard errors for all of our explanatory variables by at least 37 percent relative to conventional ML, which was already relatively conservative.

Finally, all the methods indicate that the coefficient of (C) economic similarity, measured as the absolute difference of log GDP, is consistently negative, but statistically insignificant.  In this sense, we do not find strong evidence, regardless of methodology, to support the hypothesis that greater economic similarity between a pair or countries increases the likelihood of an FTA being formed between them.

In sum, we confirm that trade costs and market size are key determinants of FTA formation, even after controlling for high-dimensional covariates and robustly accounting for the dyadic dependence in data. Allowing for high-dimensional covariates and robustly accounting for the dyadic dependence are both generally found to substantially increase standard errors and reduce $t$-statistics. Our estimates also suggest that the importance of market size in FTA formation may be biased and overstated when employing conventional methods, including conventional ML, and a rich specification. Indeed, we document that standard estimation approaches, such as logit or conventional ML, may lead researchers to conclude that larger differences in relative factor endowment may encourage or discourage FTA formation. Accounting for dyadic dependence, in contrast, increases standard errors sufficiently that we can no longer conclude that relative factor endowments are an important determinant of FTA formation. In this sense, we further confirm that the use of econometric methods which do not allow for high-dimensional controls and/or dyadic dependence can lead to misleading conclusions, either in terms of point estimates or standard errors in the context of FTA formation models.  Last, in contrast to the existing literature, we do not find evidence that differences in economic similarity or factor endowments are particularly important determinants of FTA formation regardless of which estimation approach we employ.

\section{Summary and Discussions}

This paper presents new methods and theories for the use of machine learning when data are dyadic.
A novel dyadic cross fitting algorithm is proposed to remove over-fitting biases under arbitrary dyadic dependence.
Together with the use of Neyman orthogonal scores, this dyadic cross fitting method enables double/debiased machine learning for dyadic data.
Applying the general results (Theorem \ref{theorem_DDML_non_linear}), we demonstrate how to estimate and conduct inference for high-dimensional logit dyadic link formation models.

With this method applied to empirical data of international economic networks, we reexamine FTA determinants as links formed in the dyad composed of world economies.
We reconfirm two important theoretical implications suggested by the international trade literature, even after robustly accounting for both dyadic dependence and high-dimensional controls.
Namely,
(A) a greater distance between economies makes an FTA less beneficial and thus makes an FTA less likely to be formed; and
(B) larger sizes of economies make an FTA more beneficial and thus make an FTA more likely to be formed.
In the context of a rich specification, we further document that conventional methods return biased FTA market size coefficients. Similarly, while standard approaches suggest larger differences in relative factor endowment are an important determinant of FTA formation, once we account for dyadic dependence we can no longer support this standard conclusion.

Motivated by the application to the dyadic link formation model for the analysis of FTA determinants, we focus our econometric methodology on unconditional moment restrictions in this paper.
In other applications, such as the instrumental variable regression in dyadic data, conditional moment restrictions are more relevant \citep[cf.][]{AiChen2003,AiChen2007,ChenLintonKeilegom2003,ChenPouzo2015}.  
We conjecture that our proposed method with dyadic cross fitting will extend to estimation and inference for models based on conditional moment restrictions.
Formal theoretical investigation in this important direction is left for future research, given the length of the current paper.

\vspace{2cm}
\appendix
\section*{Appendix}

\section{The Case of Linear Scores}\label{sec:linear_score_case}

The main text presents the method of dyadic DML with a general class of nonlinear and nonseparable form of the score function $\psi$.
On the other hand, some of the most commonly used econometric models entail linear and additively separable score functions.
In this appendix section, we focus on the cases with linear Neyman orthogonal scores $\psi$ of the form
\begin{align}
\psi(w;\theta,\eta)=\psi^a(w;\eta)\theta +\psi^b(w;\eta), \text{ for all $w\in \supp(W)$, $\theta\in\Theta$, $\eta\in T$,} 
\label{eq:linear_score}
\end{align}
and develop asymptotic theoretical results under an alternative set of assumptions.

\subsection{Two Examples}\label{sec:two_examples}

Two of the most frequently used instances of linear Neyman orthogonal scores $\psi$ satisfying \eqref{eq:existance_condition}, \eqref{eq:Neyman_orthogonal_condition}, and \eqref{eq:linear_score} are presented as examples below.

\begin{example}[High-Dimensional Linear Models]
Let $W_{ij}=(Y_{ij},D_{ij},X_{ij}')'$, where $Y_{ij}$ and $D_{ij}$ are two random variables, and $X_{ij}$ is a high-dimensional random vector.
Consider the linear model
\begin{align*}
\Ep[Y_{ij}|D_{ij},X_{ij}]= D_{ij}\theta_0 + X_{ij}'\beta_0 \text{ for }(i,j)\in \overline{[N]^2},
\end{align*}
where $\theta$ is a parameter of interest and $\beta$ is a nuisance parameter.
A linear Neyman orthogonal score $\psi$ in this model is given by
$$
\psi(W_{ij};\theta,\eta)=\{Y_{ij}-D_{ij}\theta-X_{ij}'\gamma\}(D_{ij}-X_{ij}'\delta),
$$
where $\eta = (\beta',\gamma',\delta')'$.
See \citet[][Section 4.1]{CCDDHNR18} for more details.
\qed
\end{example}

\begin{example}[High-Dimensional Linear IV Models]
Let $W_{ij}=(Y_{ij},D_{ij},X_{ij}',Z_{ij})'$, where $Y_{ij}$, $D_{ij}$ and $Z_{ij}$ are three random variables, and $X_{ij}$ is a high-dimensional random vector.
Consider the linear IV model
\begin{align*}
\Ep[Y_{ij}|D_{ij},X_{ij}]&= D_{ij}\theta_0 + X_{ij}'\beta_0 &&\text{ for }(i,j)\in \overline{[N]^2} \text{ and }
\\
\Ep[Z_{ij}|X_{ij}]&= X_{ij}'\delta &&\text{ for }(i,j)\in \overline{[N]^2},
\end{align*}
where $\theta$ is a parameter of interest and $(\beta',\delta')'$ is a nuisance parameter.
A linear Neyman orthogonal score $\psi$ in this model is given by
$$
\psi(W_{ij};\theta,\eta)=\{Y_{ij}-D_{ij}\theta-X_{ij}'\gamma\}(Z_{ij}-X_{ij}'\delta),
$$
where $\eta = (\beta',\gamma',\delta')'$.
See \citet[][Section 4.2]{CCDDHNR18} for more details.
\qed
\end{example}

\subsection{Asymptotic Normality and Variance Estimation}\label{sec:asymptotic_normality_and_variance_estimation}

Let $c_0>0$, $c_1>0$, $s>0$, and $q\ge 4$ be some finite constants with $c_0\le c_1$. 
Let $\{\delta_N\}_{N\ge 1}$, $\{\Delta_N\}_{N\ge 1}$, and $\{\tau_N\}_{N\geq 1}$ be sequences of positive constants that converge to zero such that $\delta_N \ge N^{-1/2}$. 
Let $K\ge 2$ be a fixed integer. 
With these notations, consider the following two assumptions.

\begin{assumption}[Linear Neyman Orthogonal Score]\label{a:linear_orthogonal_score}
The following conditions hold for all $P\in \mathcal P_N$ for all $N \ge 4$.
\begin{enumerate}[(i)]
\item The true parameter value $\theta_0$ satisfies (\ref{eq:existance_condition}).
\item $\psi$ is linear in the sense that it satisfies (\ref{eq:linear_score}).
\item The map $\eta \mapsto \Ep[\psi(W_{12};\theta,\eta)]$ is twice continuously Gateaux differentiable on $T$.
\item $\psi$ satisfies either the Neyman orthogonality condition (\ref{eq:Neyman_orthogonal_condition}) or more generally 
the Neyman $\lambda_N$ near orthogonality condition at $(\theta_0,\eta_0)$ with respect to a nuisance realization set $\mathcal T_N\subset T$, i.e.,
\begin{align*}
\lambda_N:=\sup_{\eta \in \mathcal T_N}\Big\| \partial_\eta \Ep\psi(W_{12};\theta_0,\eta_0)[\eta-\eta_0] \Big\|\le \delta_N N^{-1/2}.
\end{align*}
\item The identification condition holds in the sense that the singular values of the matrix $J_0:=\Ep[\psi^a(W_{12};\eta_0)]$ are between $c_0$ and $c_1$.
\end{enumerate}
\end{assumption}

\begin{assumption}[Score Regularity and Nuisance Parameter Estimators]\label{a:regularity_nuisance_parameters}
Let $K$ be a fixed integer.
The following conditions hold for all $P\in \mathcal P_N$ for all $N \ge 4$.
\begin{enumerate}[(i)]
\item Given random subsets $I\subset [N]$  such that $|I|=\lfloor N/K\rfloor$, the nuisance parameter estimator $\hat \eta=\hat\eta((W_{ij})_{(i,j)\in \overline{([N]\setminus I)^2}}) $ belongs to the realization set $\mathcal T_N$ with probability at least $1-\Delta_N$, where $\mathcal T_N$ contains $\eta_0 $.
\item The following moment conditions hold :
\begingroup
\allowdisplaybreaks
\begin{align*}
\sup_{\eta\in \T_N}(\Ep[\|\psi(W_{12};\theta_0,\eta)\|^q])^{1/q} \vee \sup_{\eta\in \T_N}(\Ep[\|\psi^a(W_{12};\eta)\|^q])^{1/q}\le c_1.
\end{align*}
\endgroup
\item The following conditions on the rates $r_N$, $r_N'$ and $\lambda_N'$ hold:
\begingroup
\allowdisplaybreaks
\begin{align*}
r_N:=& \sup_{\eta\in \T_N}
\|\Ep[\psi^a(W_{12};\eta)]-\Ep[\psi^a(W_{12};\eta_0)]\|\le \delta_N,\\
r_N':=& \sup_{\eta\in \T_N}
(\Ep[\|\psi(W_{12};\theta_0,\eta)-\psi(W_{12};\theta_0,\eta_0)\|^2])^{1/2}\le \delta_N,\\
\lambda_N'= & \sup_{r\in (0,1),\eta\in \T_N}\|\partial^2_r \Ep[\psi (W_{12};\theta_0,\eta_0+r(\eta-\eta_0)) ] \|\le \delta_N/\sqrt{N}.
\end{align*}
\endgroup
\item All eigenvalues of the matrix
\begingroup
\allowdisplaybreaks
\begin{align}
\label{true_gamma_linear}
\Gamma:=\Ep[\psi(W_{12};\theta_0,\eta_0)\psi(W_{13};\theta_0,\eta_0)']+\Ep[\psi(W_{12};\theta_0,\eta_0)\psi(W_{31};\theta_0,\eta_0)']
\\\nonumber
+\Ep[\psi(W_{21};\theta_0,\eta_0)\psi(W_{13};\theta_0,\eta_0)']+\Ep[\psi(W_{21};\theta_0,\eta_0)\psi(W_{31};\theta_0,\eta_0)']
\end{align}
\endgroup
are bounded from below by $c_0$. 
\end{enumerate}
\end{assumption}

\noindent
The last two assumptions are counterparts of Assumptions 3.1 and 3.2 in \citet{CCDDHNR18}, as well as Assumptions 2 and 3 in \citet{CKMS2019}, except that the data in this paper are allowed to exhibit dyadic dependence rather than i.i.d. or multiway cluster dependence.

Under these conditions, we guarantee that the dyadic machine learning estimator is asymptotically root-$N$ Gaussian and that its asymptotic variance can be consistently estimated, as formally stated in the following two lemmas. Denote $W_{i0}=\tau_n(U_i,U_0,U_{\{i,0\}})$ and $W_{0i}=\tau_n(U_0,U_i,U_{\{i,0\}})$ where $U_i$ and $U_{\{i,0\}}$ are $\text{Unif}[0,1]$ random variables independent from each other and from the data.

\begin{lemma}[Dyadic DML for Linear Scores]\label{theorem:main_result_linear}
Suppose that Assumptions \ref{a:sampling}, \ref{a:linear_orthogonal_score} and \ref{a:regularity_nuisance_parameters} are satisfied.
If $\delta_N\ge N^{-1/2}$ for all $N\ge 4$, then
\begin{align}
\sqrt{N}\sigma^{-1}(\tilde \theta - \theta_0)
&=
\frac{\sqrt{N}}{K} \sumk\Enk \bar\psi(W) +\Opn(\rho_N)
\nonumber\\
&=\frac{1}{\sqrt{N}}\sum_{i=1}^N \left\{\Ep [\bar\psi(W_{i0})\mid U_i] +\Ep [\bar\psi(W_{0i})\mid U_i]\right\}+\Opn(\rho_N) \leadsto N(0,I_{d_\theta})
\label{eq:main_result_linear}
\end{align}
holds uniformly over $P\in\mathcal P_N$, where the order of remainder follows
$
\rho_N 
\lesssim \delta_N,
$
the influence function takes the form $\bar \psi(\cdot):=-\sigma^{-1}J_0^{-1} \psi(\cdot;\theta_0,\eta_0)$, 
and the variance is given by
\begin{align}
\sigma^2:=J_0^{-1}\Gamma (J_0^{-1})'. \label{eq:population_variance}
\end{align}
\end{lemma}
In view of the second line in \eqref{eq:main_result_linear}, observe that the asymptotic linear representation is independent of the choice of $K$ and also independent of realizations of the folds in the dyadic cross fitting. 
In addition, although the estimator itself is formed by the scores consisting only of a part of all the $i$-$j$-pairs, this linear representation implies that we in fact utilize all the underlying independent latent random variables $(U_i)_{i=1}^N$.
To make use of this lemma for statistical inference in practice, it remains to consistently estimate $\sigma^2$.
The sample-counterpart variance estimator can be given by
\begin{align}
\hat \sigma^2=\hat J^{-1}\hat\Gamma(\hat J^{-1})',
\end{align}
where
\begin{align}
\label{estimator_j}
\hat J:=\frac{1}{K}\sumk \Enk[\psi^a(W;\hat \eta_{k})]
\end{align}
and  $\hat\Gamma$ is defined in \eqref{estimator_gamma}.
This estimator $\hat\sigma^2$ is consistent for $\sigma^2$ under the same set of assumptions as that invoked in Lemma \ref{theorem:main_result_linear}, as formally stated in the following lemma.

\begin{lemma}[Variance Estimator]\label{theorem:variance_estimator_linear}
Under the assumptions required by Lemma \ref{theorem:main_result_linear}, we have
\begin{align*}
\hat \sigma^2=\sigma^2 +\Op(\rho_N).
\end{align*}
Furthermore, the statement of Lemma \ref{theorem:main_result_linear} holds true with $\hat \sigma^2$ in place of $\sigma^2$.
\end{lemma}

\noindent
Lemmas \ref{theorem:main_result_linear} and \ref{theorem:variance_estimator_linear} can be used together for constructing confidence intervals as formally stated below.

\begin{corollary}\label{corollary:inference_t-test}
Suppose that all the assumptions required by Lemma \ref{theorem:main_result_linear} are satisfied.
Let $r$ be a $d_\theta$-dimensional vector of constants. 
The $(1-a)$ confidence interval of $r'\theta_0$ given by
\begin{align*}
\text{CI}_a:=[r'\tilde \theta\pm \Phi^{-1}(1-a/2)\sqrt{r'\hat \sigma^2 r/N}]
\end{align*}
satisfies
\begin{align*}
\sup_{P\in\mathcal P_N}|P_P(r'\theta_0 \in \text{CI}_a)-(1-a)|\to 0.
\end{align*}

\end{corollary}

\section{Lemmas for Dyadic Asymptotic Theory}\label{sec:lemmas_dyadic}
In this section, we present some generic results for sample means of dyadic random arrays.
We use these generic results as auxiliary lemmas to prove our main theoretical results.

We first present a useful maximal inequality for empirical processes under dyadic sampling. It is built upon the Hoeffding type decomposition and the maximal inequalities in \cite{CCK14}, \cite{ChenKato2019}, and \cite{cattaneo2022}. Define 
\begin{align*}
\Gn f:=\frac{\sqrt{n}}{n(n-1)}\sumijn \{f(X_{ij})-\Ep[f(X_{12})]\}.
\end{align*}
\begin{lemma}[A local maximal inequality for dyadic empirical processes]
\label{lemma_maximal_inequality_dyadic_data}
Let $(X_{ij})_{(i,j)\in\overline{[n]^2}}$ be a sample from $\mathcal S$-valued jointly exchangeable random variables $(X_{ij})_{(i,j)\in\overline{\mathbb{N}^{+2}}}$. Assume the function class $\calF\ni f:\calS\to \Real$ with envelope $F\in L^q(P)$, $q\ge 4$, satisfies
$
\sup_{Q }N(\calF,\|\cdot\|_{Q,2},\varepsilon\|F\|_{Q,2})\le (A/\varepsilon)^v,\:\forall \:0<\varepsilon\le 1,
$
for some constants $A\ge e$ and $v\ge 1$, where the supremum is taken over the set of all finite discrete distributions on $\calS$. In addition, suppose that the function classes
\begin{align*}
\calG=\{\Ep[f(X_{12})|U_k=u]:k=1,2, f\in \calF\},\quad\calH=\{\Ep[f(X_{12})|U_1=u_1,U_2=u_2]: f\in \calF\}
\end{align*}
have envelopes $G(u):=\Ep[F(X_{12})|U_1=u]\vee\Ep[F(X_{12})|U_2=u]$ and $H(u_1,u_2):=\Ep[F(X_{12})|U_1=u_1,U_2=u_2]$, respectively. Assume that there exist finite constants 
$
 b_n \ge \overline \sigma_n>0
$
such that 
\begin{align*} 
&\|G\|_{P,q}\vee \|H\|_{P,q}\le b_n,\qquad \sup_{g\in\calG}\|g\|_{P,2}\vee \sup_{h\in\calH}\|h\|_{P,2}\le \overline \sigma_n.
\end{align*}
In addition, suppose that $\Ep[f(X_{12})]=0$ for all $f\in \calF$.
Then, we have
\begin{align*}
\Ep[\Gn f]\lesssim&\:\overline \sigma_n\sqrt{v\log(A\vee n)}
+ \frac{b_n v\log(A\vee n)}{n^{1/2-1/q}}.
\end{align*}

\end{lemma}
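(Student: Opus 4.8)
The plan is to exploit the Aldous--Hoover--Kallenberg representation \eqref{eq:AKH}, writing $X_{ij}\overset{d}{=}\tau_n(U_i,U_j,U_{\{i,j\}})$ with i.i.d.\ latents, and then to split $\Gn$ (understood as bounding $\Ep[\sup_{f\in\calF}|\Gn f|]$) by a Hoeffding--H\'ajek decomposition into a \emph{linear} part supported on the nodes and a \emph{completely degenerate} dyadic remainder. Writing $\bar f=f-\Ep[f(X_{12})]$ and projecting the centered dyadic average onto a single latent $U_i$, the only surviving terms are those in which $i$ appears as the first or the second index, so a direct computation using dissociation gives
\begin{align*}
\Ep\big[\tfrac{1}{n(n-1)}\sumijn \bar f(X_{ij})\,\big|\,U_i\big]=\tfrac1n\big(g_1^f(U_i)+g_2^f(U_i)\big),
\end{align*}
where $g_1^f(u)=\Ep[\bar f(X_{12})\mid U_1=u]$ and $g_2^f(u)=\Ep[\bar f(X_{12})\mid U_2=u]$ are precisely the elements of $\calG$. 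Since $\Ep[\bar f]=0$, the H\'ajek projection of $\Gn f$ therefore equals $\tfrac1{\sqrt n}\sumi\big(g_1^f(U_i)+g_2^f(U_i)\big)$, an i.i.d.\ empirical process indexed by the node latents, and the remainder is a dyadic U-process that is degenerate to first order.

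For the linear part I would apply the local maximal inequality for i.i.d.\ empirical processes of \cite{CCK14} (VC-type version) to the class $\{g_1^f+g_2^f:f\in\calF\}$. This class inherits the uniform entropy bound $(A/\varepsilon)^v$ of $\calF$, because $f\mapsto\Ep[f(X_{12})\mid U_k=\cdot]$ is an $L^2(P)$-contraction and hence preserves covering numbers up to the same constants $A,v$; its weak variance is controlled by $\sup_{g\in\calG}\|g\|_{P,2}\le\overline\sigma_n$ and its $L^q$ envelope by $\|G\|_{P,q}\le b_n$. The local inequality then yields
\begin{align*}
\Ep\Big\|\tfrac1{\sqrt n}\sumi\big(g_1^f(U_i)+g_2^f(U_i)\big)\Big\|_{\calF}\lesssim \overline\sigma_n\sqrt{v\log(A\vee n)}+\frac{b_n\,v\log(A\vee n)}{n^{1/2-1/q}},
\end{align*}
which already matches both terms in the claimed bound; the fact that the weak variance $\overline\sigma_n$, rather than the envelope, appears in the leading term is exactly the ``local'' refinement we need.

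It remains to show that the degenerate remainder is of smaller, or at most comparable, order, which I expect to be the main obstacle. Decomposing it into the two-node interaction $h^f(U_i,U_j)=\Ep[\bar f(X_{ij})\mid U_i,U_j]-g_1^f(U_i)-g_2^f(U_j)$, governed by $\calH$, together with the idiosyncratic piece $\bar f(X_{ij})-\Ep[\bar f(X_{ij})\mid U_i,U_j]$ carried by $U_{\{i,j\}}$, both pieces are completely degenerate, so the remainder is a degenerate dyadic U-process. Here I would invoke the maximal inequality for degenerate U-processes of \cite{ChenKato2019}, again using that $\calH$ and the idiosyncratic class inherit the VC-type entropy of $\calF$ with envelopes and variances bounded by $b_n$ and $\overline\sigma_n$. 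Because complete degeneracy supplies an extra factor of $n^{-1/2}$ relative to the linear term, after the $\sqrt n$ normalization the remainder is dominated by $b_n\,v\log(A\vee n)/n^{1/2-1/q}$ (typically strictly smaller). The delicate steps are verifying entropy inheritance for the idiosyncratic class and bookkeeping the polynomial-in-$\log$ and moment factors produced by the U-process bound so that they are absorbed into the two stated terms; combining the linear and degenerate estimates then completes the proof.
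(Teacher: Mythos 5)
Your proposal follows essentially the same route as the paper's proof: the same Hoeffding/H\'ajek decomposition of $\Gn f$ into the node-level linear part $\frac{1}{\sqrt n}\sum_{i}\{g_1^f(U_i)+g_2^f(U_i)\}$ plus a first-order-degenerate remainder, the local maximal inequality of \cite{CCK14} (Theorem 5.2) for the linear part, and the degenerate U-process inequality of \cite{ChenKato2019} for the node-interaction part $\Ep[f(X_{ij})\mid U_i,U_j]-g_1^f(U_i)-g_2^f(U_j)$; your projection computation and the resulting two rate terms agree with the paper's.

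The one step that would fail as written is your treatment of the idiosyncratic piece $f(X_{ij})-\Ep[f(X_{ij})\mid U_i,U_j]$. This piece is not a U-process in i.i.d. variables: each summand depends on the edge-specific shock $U_{\{i,j\}}$ from the Aldous--Hoover--Kallenberg representation, so it cannot be written as a fixed kernel evaluated at a pair $(Z_i,Z_j)$ of node-level i.i.d. variables, and the Chen--Kato bound for completely degenerate U-processes (their Corollary 5.5, which the paper invokes only for the $\calH$-part) does not cover it. The paper instead exploits conditional independence: given $(U_i)_{i=1}^n$, the summands are independent and mean zero (each carries its own $U_{\{i,j\}}$), so one applies the i.i.d. maximal inequality of \cite{CCK14} conditionally---to the difference class $\{f-\Ep[f\mid U_1,U_2]:f\in\calF\}$, whose VC-type property follows from Lemma A.2 of \cite{GhosalSenvanderVaart2000}---and then integrates out via Jensen's inequality. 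Relatedly, your entropy-inheritance argument (``conditional expectation is an $L^2(P)$-contraction and hence preserves covering numbers up to the same constants'') is too quick: the uniform entropy condition quantifies over all finite discrete measures $Q$, and the correct statement is precisely that lemma of Ghosal, Sen, and van der Vaart, which yields VC type with modified characteristics (e.g. $4\sqrt{A}$ and $v$). With these two repairs your argument coincides with the paper's.
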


\noindent
A proof can be found in Appendix \ref{sec:lemma_maximal_inequality_dyadic_data}.

We now derive the asymptotic linear representation for a sum of jointly exchangeable arrays. The result is well-known but we include the derivation for completeness.
Recall that Assumption \ref{a:sampling}(i)(ii) implies the Aldous-Hoover-Kallenberg representation (e.g. \citealp[Theorem 7.22 and Corollary 7.35]{Kallenberg2006}), which yields that there exists a Borel measurable function $\tau_n$ such that
\begin{align*}
Z_{ij}=\tau_n (U_i,U_j,U_{\{i,j\}}).
\end{align*}
For any real-valued function $f$,
define $\Gn f := \frac{\sqrt{n}}{n(n-1)}\sum\limits_{(i,j)\in \overline{[n]^2}} f(Z_{ij})$. Consider $\Hn f:= \frac{\sqrt{n}}{n(n-1)}\sum\limits_{l \in [n]} \{\sum_{j\ne l }\\ \Ep[f(Z_{lj})|U_l] +
 \sum_{i\ne l}\Ep[f(Z_{il})|U_l]\}$, its H\'ajek projection on functions of each single $U_l$. We will show $\Hn f$ is asymptotically normal with mean zero and variance 
\begin{align*}
\V(\Hn f)= \Cov(f(Z_{12}),f(Z_{13}))+\Cov(f(Z_{12}),f(Z_{31}))+ \Cov(f(Z_{21}),f(Z_{13})) + \Cov(f(Z_{21}),f(Z_{31})) 
\\
+ o(1).
\end{align*}
In addition, a direct calculation shows
\begin{align*}
\V(\Gn f)=\Cov(f(Z_{12}),f(Z_{13}))+\Cov(f(Z_{12}),f(Z_{31}))+ \Cov(f(Z_{21}),f(Z_{13})) + \Cov(f(Z_{21}),f(Z_{31})) 
\\
+ o(1),
\end{align*}
and therefore $\frac{\V(\Hn f)}{\V(\Gn f)}=1 + o(1)$ follows from the assumption that $\V(\Gn f)$ is uniformly bounded away from zero over $n$.
\begin{lemma}[Asymptotic Linear Representation for a Dyadic Sum]
\label{lemma:hajek}
Suppose $(Z_{ij})_{(i,j)\in \overline{[n]^2}}$ are zero mean random vectors and such that
\begin{align*}
Z_{ij}=\tau_n (U_i,U_j,U_{\{i,j\}})
\end{align*}
holds for some Borel measurable $\tau_n$ and  $\{U_i, U_{\{i,j\}}: 1\le i\ne j\le n\}$ are some i.i.d. random variables. In addition, suppose that $\Cov(f(Z_{12}),f(Z_{13}))+\Cov(f(Z_{12}),f(Z_{31}))+ \Cov(f(Z_{21}),f(Z_{13})) + \Cov(f(Z_{21}),f(Z_{31}))$ is bounded away from zero uniformly in $n$.
Then
\begin{align*}
\V(\Hn f)=
\Cov(f(Z_{12}),f(Z_{13}))+\Cov(f(Z_{12}),f(Z_{31}))+ \Cov(f(Z_{21}),f(Z_{13})) + \Cov(f(Z_{21}),f(Z_{31}))
\\
+o(1)
\end{align*}
and thus $\Hn f = \Gn f +o_P(1)$.
\end{lemma}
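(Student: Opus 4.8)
The plan is to recognize $\Hn f$ as the exact H\'ajek projection of $\Gn f$ onto the space of sums of single-node functions, and then to combine the orthogonality of that projection with a direct second-moment computation. First I would invoke the Aldous-Hoover-Kallenberg representation $Z_{ij}=\tau_n(U_i,U_j,U_{\{i,j\}})$ and set $g(u):=\Ep[f(Z_{12})\mid U_1=u]$ and $h(u):=\Ep[f(Z_{12})\mid U_2=u]$. Joint exchangeability gives $\Ep[f(Z_{lj})\mid U_l]=g(U_l)$ and $\Ep[f(Z_{il})\mid U_l]=h(U_l)$ for every $j,i\neq l$, while dissociation gives $\Ep[f(Z_{ij})\mid U_l]=\Ep[f(Z_{ij})]=0$ whenever $l\notin\{i,j\}$. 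Collecting these, $\Ep[\Gn f\mid U_l]=\tfrac{\sqrt n}{n}\{g(U_l)+h(U_l)\}$, and summing over $l$ shows $\Hn f=\sumln\Ep[\Gn f\mid U_l]=\tfrac{1}{\sqrt n}\sumln\{g(U_l)+h(U_l)\}$, an i.i.d.\ sum of mean-zero terms. This identifies $\Hn f$ as the H\'ajek projection and immediately yields $\V(\Hn f)=\V(g(U_1)+h(U_1))$.

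Next I would match this variance to the four-covariance formula. Expanding $\V(g(U_1)+h(U_1))=\Ep[g(U_1)^2]+\Ep[h(U_1)^2]+2\Ep[g(U_1)h(U_1)]$ and conditioning on $U_1$ -- under which $Z_{12}$ and $Z_{13}$, and likewise $Z_{21}$ and $Z_{31}$, become independent by dissociation -- each term collapses onto one of the stated covariances: $\Ep[g(U_1)^2]=\Cov(f(Z_{12}),f(Z_{13}))$, $\Ep[h(U_1)^2]=\Cov(f(Z_{21}),f(Z_{31}))$, and the cross term yields $\Cov(f(Z_{12}),f(Z_{31}))+\Cov(f(Z_{21}),f(Z_{13}))$. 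This reproduces the claimed variance of $\Hn f$, with the $o(1)$ absorbing the harmless discrepancy between the generic index representatives.

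For the equivalence $\Hn f=\Gn f+\op(1)$, I would use the projection orthogonality $\Ep[(\Gn f-\Hn f)\Hn f]=0$, which holds because $\Ep[\Hn f\mid U_m]=\Ep[\Gn f\mid U_m]$ for each $m$. Hence $\Ep[(\Gn f-\Hn f)^2]=\V(\Gn f)-\V(\Hn f)$, and it remains to show $\V(\Gn f)=\V(\Hn f)+o(1)$. I would compute $\V(\Gn f)=\tfrac{n}{(n(n-1))^2}\sum\sum\Cov(f(Z_{ij}),f(Z_{kl}))$ and classify ordered pairs of dyads by the number of shared nodes: pairs sharing no node vanish by dissociation; the four ways of sharing exactly one node each contribute $\sim n^3$ terms, which after scaling reproduce exactly the four covariances above; and pairs sharing two nodes, i.e.\ $(k,l)\in\{(i,j),(j,i)\}$, number only $O(n^2)$ and contribute $O(n^{-1})$. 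Thus $\V(\Gn f)=\V(\Hn f)+O(n^{-1})$, so $\Ep[(\Gn f-\Hn f)^2]\to 0$, and Chebyshev's inequality delivers $\Gn f-\Hn f=\op(1)$.

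The main obstacle I anticipate is the bookkeeping in the variance of $\Gn f$: because the dyad is directed, $Z_{ij}$ and $Z_{ji}$ differ and $g\neq h$ in general, so one must carefully track which role (first or second index) each shared node plays in each single-overlap configuration to confirm they assemble into precisely the four covariances without double-counting or omission. Dissociation and the projection identity carry the rest of the argument, and the asymptotic normality of the i.i.d.\ sum $\Hn f$ then follows from the Lindeberg-L\'evy central limit theorem using the assumed moment bound on $f$ together with the uniform lower bound on the limiting variance.
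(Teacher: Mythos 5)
Your proposal is correct and follows essentially the same route as the paper: identify $\Hn f$ as the H\'ajek projection of $\Gn f$, compute its variance by conditioning on each $U_l$ and exploiting the conditional independence built into the representation $Z_{ij}=\tau_n(U_i,U_j,U_{\{i,j\}})$, and then pass to $\Gn f$ by comparing variances, exactly as the paper does via the law of total covariance and its preamble remark that a direct calculation gives $\V(\Gn f)$ the same four-covariance form. The only differences are cosmetic and in your favor: collapsing the conditional expectations to $g(U_l)+h(U_l)$ makes your variance identity for $\Hn f$ exact rather than up to $O(n^{-1})$ remainders, and your explicit orthogonality-plus-Chebyshev step spells out the projection argument that the paper leaves to standard H\'ajek projection theory.
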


\noindent
A proof can be found in Appendix \ref{sec:lemma:hajek}

\section{Proofs of the Main Results}

This section presents proofs of the main results.
We prove the results for the linear case first, followed by the results for the general nonlinear case presented in the main text, because the latter result uses the former result along with a linearization of the nonlinear score.

\subsection{Proof of Lemma \ref{theorem:main_result_linear}}\label{sec:theorem:main_result_linear}

\begin{proof}
Fix a sequence $\{P_N\}_{N\geq 1}$ such that $P_N\in\mathcal{P}_N$ for all $N$.
We use the short-hand notation $\Ikc$ to denote $\overline{([N]\setminus I_k)^2}$.
$[N]$ is partitioned equally into $K$ parts, which implies $|I_k|=|I|$ for all $k \in [K]$. Assume without loss of generality that $N$ is divisible by $K$. 
Let $\E_N$ denote the event $\hat \eta_{k} \in \T_N$ for all $k\in [K]$. 
Assumption \ref{a:regularity_nuisance_parameters} (i) implies $P(\E_N)\ge 1- K\Delta_N$.
Although the proof strategies for asymptotic theory in the existing DML literature do not apply here due to the dyadic clustering, we shall follow the proof structure of Theorem 3.1 in \citet{CCDDHNR18} and Theorem 1 in \cite{CKMS2019} for the sake of comparability.
\bigskip\\
\noindent \textbf{Step 1.}
This step is the main step, showing the linear representation and asymptotic normality for the proposed estimator. 
Denote 
\begingroup
\allowdisplaybreaks
\begin{align*}
&\hat J:=\frac{1}{K}\sumk \Enk[\psi^a(W;\hat \eta_{k})],
\qquad R_{N,1}:=\hat J - J_0,\\
&R_{N,2}:=\frac{1}{K} \sumk\Enk[\psi(W;\theta_0,\hat\eta_{k})]
- \frac{1}{K} \sumk\Enk [\psi(W;\theta_0,\eta_0)].  
\end{align*}
\endgroup
We will later show in Steps 2, 3, 4 and 5, respectively, that
\begingroup
\allowdisplaybreaks
\begin{align}
&\|R_{N,1}\|=\Opn(N^{-1/2} + r_N), \label{eq:step2}\\
&\|R_{N,2}\|=\Opn(N^{-1/2}r_N'+\lambda_N+\lambda_N'), \label{eq:step3}\\
&\left\|\frac{\sqrt{N}}{K}\sumk\Enk [\psi(W;\theta_0,\eta_0)]\right\|=\Opn(1), \qquad\text{and}\label{eq:step4}\\
&\|\sigma^{-1}\|=\Opn(1). \label{eq:step5}
\end{align}
\endgroup
Under Assumptions \ref{a:linear_orthogonal_score} and \ref{a:regularity_nuisance_parameters}, $N^{-1/2}+r_N\le \rho_N=o(1)$ and all singular values of $J_0$ are bounded away from zero, and so all singular values of $\hat J$ are bounded away from zero with $P_N$-probability at least $1-o(1)$. 
Therefore, with the same $P_N$ probability, the dyadic machine learning solution is uniquely written as
\begin{align*}
\tilde \theta =- \hat J^{-1} \frac{1}{K}\sumk \Enk[\psi^b(W;\hat \eta_{k})],
\end{align*}
and 
\begingroup
\allowdisplaybreaks
\begin{align}
\sqrt{N}(\tilde \theta - \theta_0)
=&
-\sqrt{N} \hat J^{-1}\frac{1}{K} \sumk\Big(\Enk[\psi^b(W;\hat \eta_{k})] +\hat J \theta_0\Big) \nonumber\\
=&-\sqrt{N} \hat J^{-1}\frac{1}{K} \sumk \Enk[\psi(W;\theta_0,\hat \eta_{k})] \nonumber\\
=&
-\Big(J_0 + R_{N,1}\Big)^{-1} \times \Big(
\frac{\sqrt{N}}{K}\sumk\Enk[\psi(W;\theta_0,\eta_0)]+\sqrt{N}R_{N,2}
\Big).
\label{eq:step1_1}
\end{align}
\endgroup
Using the fact that 
\begin{align*}
\Big(J_0 + R_{N,1}\Big)^{-1}  - J_0^{-1}=-(J_0 +R_{N,1})^{-1}R_{N,1} J_0^{-1},
\end{align*}
we have 
\begingroup
\allowdisplaybreaks
\begin{align*}
\|(J_0+R_{N,1})^{-1} - J_0^{-1}\| = &
\|(J_0 +R_{N,1})^{-1}R_{N,1} J_0^{-1}\|\le  \|(J_0 +R_{N,1})^{-1}\|\,\|R_{N,1}\|\, \|J_0^{-1}\|\\
=&\Opn(1)\Opn(N^{-1/2}+ r_N)\Opn(1)=\Opn(N^{-1/2}+ r_N).
\end{align*}
\endgroup
Furthermore, $r_N'+N^{1/2}(\lambda_N+\lambda_N')\le \rho_N=o(1)$, and it holds that 
\begingroup
\allowdisplaybreaks
\begin{align*}
\Big\|\frac{\sqrt{N}}{K}\sumk\Enk [\psi(W_{ij};\theta_0,\eta_0)]+\sqrt{N}R_{N,2}\Big\|
\le&
\Big\|\frac{\sqrt{N}}{K}\sumk\Enk [\psi(W_{ij};\theta_0,\eta_0)]\Big\| +\Big\|\sqrt{N} R_{N,2}\Big\|  \\
=&
\Opn(1) +\opn(1)=\Opn(1),
\end{align*}
\endgroup
where the first equality is due to (\ref{eq:step3}) and (\ref{eq:step4}).
Combining above the two bounds gives 
\begin{align}
\Big\|\Big(J_0 + R_{N,1}\Big)^{-1}  - J_0^{-1}\Big\|\times\Big\|\frac{\sqrt{N}}{K}\sumk\Enk [\psi(W_{ij};\theta_0,\eta_0)]+\sqrt{N}R_{N,2}\Big\| =&
\Opn(N^{-1/2}+ r_N)\Opn(1) \nonumber\\
=&
\Opn(N^{-1/2}+ r_N).
\label{eq:step1_2}
\end{align}
From (\ref{eq:step5}), (\ref{eq:step1_1}) and (\ref{eq:step1_2}), it therefore follows that
\begin{align*}
\sqrt{N}\sigma^{-1}(\tilde \theta - \theta_0)
&=
\underbrace{\frac{\sqrt{N}}{K} \sumk\Enk  \bar\psi(W)}_{=:\Gnk \bar\psi} +\Opn(\rho_N).
\end{align*} 
Applying Lemma \ref{lemma:hajek} to each $I_k$, we obtain the following independent linear representation for $\Gnk \bar\psi$
\begin{align*}
\Hnk \bar{\psi}
&= \frac{\sqrt{N}}{\sumk |I_k|(|I_k|-1)}\sumk\sum_{l \in I_k}\left \{\sum_{j \in I_k,j\ne l } \Epn[\bar{\psi}(W_{lj})|U_l] + \sum_{i \in I_k,i\ne l}\Epn[\bar{\psi}(W_{il})|U_l]\right\}
\\
&=\frac{1}{\sqrt{N}}\sum_{i=1}^N \left\{\Ep [\bar\psi(W_{i0})\mid U_i] +\Ep [\bar\psi(W_{0i})\mid U_i]\right\},
\end{align*}
which is independent of $K$,
and it holds $P_N$-a.s. that
\begingroup
\allowdisplaybreaks
\begin{align*}
V(\Gnk \bar\psi)=& V(\Hnk \bar\psi) + O(N^{-1})=J_0^{-1}\Gamma (J_0^{-1})' + O(N^{-1})
\qquad\text{and}\\
\Gnk \bar\psi =& \Hnk \bar\psi + \Opn(N^{-1/2})
\end{align*}
\endgroup
under Assumption \ref{a:regularity_nuisance_parameters} (iv).
Recall that $q\ge 4$.
The third moments of both summands of $\Hnk \bar\psi$ are bounded under Assumptions \ref{a:linear_orthogonal_score} (v) and \ref{a:regularity_nuisance_parameters} (ii) (iv).
We have verified all the conditions for Lyapunov's CLT, and an application of Lyapunov's CLT and Cramer-Wold device thus gives
\begin{align*}
&\Hnk \bar\psi \leadsto N(0,I_{d_\theta}).
\end{align*}
An application of Theorem 2.7 of \cite{vdV98} concludes the proof.



\bigskip
\noindent \textbf{Step 2.}
For the fixed $K$, it suffices to show that, for any $k \in [K]$ and $(i,j)\in \overline{[N]^2}$,
\begin{align*}
\Big\|\Enk[\psi^a(W;\hat \eta_{k})]-\Epn[\psi^a(W_{12};\eta_0)]\Big\|
=
\Opn(N^{-1/2} + r_N).
\end{align*}
For any fixed $k \in [K]$,
\begin{align*}
&\Big\|\Enk[\psi^a(W;\hat \eta_{k})]-\Epn[\psi^a(W_{12};\eta_0)]\Big\|
\le \mathcal I_{1,k} + \mathcal I_{2,k},
\end{align*}
where
\begingroup
\allowdisplaybreaks
\begin{align*}
\mathcal I_{1,k} &= \Big\|\Enk[\psi^a(W;\hat \eta_{k})]-\Epn[\psi^a(W_{ij};\hat \eta_{k})|\Ikc]\Big\|,
\\
\mathcal I_{2,k} &= \Big\|\Epn[\psi^a(W_{ij};\hat \eta_{k})|\Ikc]-\Epn[\psi^a(W_{12};\eta_0)]\Big\|.
\end{align*}
\endgroup
Then $\mathcal I_{2,k}\le r_N$ with $P_N$-probability $1-o(1)$ follows directly from Assumptions \ref{a:sampling} (ii) and \ref{a:regularity_nuisance_parameters} (iii). 
Denote $\tilde\psi^a_{ij,m}=\psi^a_m(W_{ij};\hat \eta_{k})-\Epn[\psi^a_m(W_{ij};\hat \eta_{k})|\Ikc]$ and $\tilde \psi^a_{ij}=(\tilde \psi^a_{ij,m})_{m\in [d_\theta]}$. 
To bound $\mathcal I_{1,k}$, note that conditional on $\Ikc$, it holds that
\begingroup
\allowdisplaybreaks
\begin{align*}
\Epn[\mathcal I_{1,k}^2|\Ikc]
=&
\Epn\Big[\Big\|\Enk[\psi^a(W;\hat \eta_{k})]-\Epn[\psi^a(W_{ij};\hat \eta_{k})|\Ikc]\Big\|^2\Big|\Ikc\Big]\\
=&
\frac{1}{\big(|I_k|(|I_k|-1)\big)^2}\Epn\Big[\sum_{m=1}^{d_\theta}\Big(\sumIJ\tilde \psi^a_{ij,m}\Big)^2\Big|\Ikc\Big]\\
=&
\frac{1}{\big(|I_k|(|I_k|-1)\big)^2}\sumIJ\sum_{(i,j')\in \overline{I_k^2},j'\neq j}\Epn\Big[\sum_{m=1}^{d_\theta}\tilde \psi^a_{ij,m}\tilde \psi^a_{ij',m}\Big|\Ikc\Big]\\
&+
\frac{1}{\big(|I_k|(|I_k|-1)\big)^2}\sumIJ\sum_{(i',j)\in \overline{I_k^2},i'\neq i}\Epn\Big[\sum_{m=1}^{d_\theta}\tilde \psi^a_{ij,m}\tilde \psi^a_{i'j,m}\Big|\Ikc\Big]
\\
&+
\frac{1}{\big(|I_k|(|I_k|-1)\big)^2}\sumIJ\Epn\Big[\sum_{m=1}^{d_\theta}(\tilde \psi^a_{ij,m})^2\Big|\Ikc\Big]
+ 0
\\
= &
\frac{1}{\big(|I_k|(|I_k|-1)\big)^2} \sumIJ\sum_{(i,j')\in \overline{I_k^2},j'\neq j}\Epn[\langle\tilde\psi^a_{ij},\tilde\psi^a_{ij'}\rangle|\Ikc]\\
&+
\frac{1}{\big(|I_k|(|I_k|-1)\big)^2} \sumIJ\sum_{(i',j)\in \overline{I_k^2},i'\neq i}\Epn[\langle\tilde\psi^a_{ij},\tilde\psi^a_{i'j}\rangle|\Ikc]\\
&+
\frac{1}{\big(|I_k|(|I_k|-1)\big)^2} \sumIJ\Epn[\|\tilde\psi^a_{ij}\|^2|\Ikc]
\\
\lesssim&
\frac{1}{|I_k|-1} \Epn\Big[\Big\|\psi^a(W_{ij};\hat\eta_{k})-\Epn[\psi^a(W_{ij};\hat\eta_{k})|\Ikc]\Big\|^2\Big|\Ikc\Big] \\
\le&
\frac{1}{|I_k|-1} \Epn[\|\psi^a(W_{ij};\hat\eta_{k})\|^2|\Ikc] \\
\lesssim& \frac{c_1^2}{|I_k|-1}
\end{align*}
\endgroup
by an application of Cauchy-Schwartz's inequality and Assumptions \ref{a:sampling} and \ref{a:regularity_nuisance_parameters} (ii). 
Note that $N\lesssim |I_k|-1\lesssim N$. 
Hence an application of Lemma \ref{lemma:conditional_convergence} (i) implies 
$
\mathcal I_{1,k}=\Opn(N^{-1/2}).
$
This completes a proof of (\ref{eq:step2}).

\bigskip
\noindent \textbf{Step 3.}
It again suffices to show that, for any $k \in [K]$, we have
\begin{align*}
\Big\|\Enk[\psi(W;\theta_0,\hat\eta_{k})]
- \Enk[\psi(W;\theta_0,\eta_0)]\Big\|=\Opn(N^{-1/2}r_N'+ \lambda_N +  \lambda_N').
\end{align*}
Denote
\begin{align*}
\mathbb{G}_{I_k}[\phi(W)]=\frac{\sqrt{N}}{|I_k|(|I_k|-1)}\sumIJ \Big( \phi(W_{ij}) - \int \phi(w) dP_N \Big), 
\end{align*}
where $\phi$ is $P_N$-integrable function on $\supp(W)$.
Then,
\begin{align*}
&\Big\|\Enk[\psi(W;\theta_0,\hat\eta_{k})]
- \frac{1}{|I_k|(|I_k|-1)} \sumIJ\psi(W_{ij};\theta_0,\eta_0)\Big\|
\le
\frac{\mathcal I_{3,k}+\mathcal I_{4,k}}{\sqrt{N}},
\end{align*}
where
\begingroup
\allowdisplaybreaks
\begin{align*}
\mathcal I_{3,k}
:=&
\big\|\mathbb{G}_{I_k}[\psi(W;\theta_0,\hat \eta_{k})]-\mathbb{G}_{I_k}[\psi(W;\theta_0,\eta_0)]\big\|,
\\
\mathcal I_{4,k}
:=&\sqrt{N}
\Big\|
\Epn[\psi(W_{ij};\theta_0,\hat \eta_{k})|\overline{I_k^2}]- \Epn[\psi(W_{12};\theta_0,\eta_0)]
\Big\|.
\end{align*}
\endgroup
Denote $\tilde \psi_{ij,m}:=\psi_m(W_{ij};\theta_0,\hat \eta_{k})-\psi_m(W_{ij};\theta_0,\eta_0)$ and $\tilde \psi_{ij}=(\tilde \psi_{ij,m})_{m\in[d_\theta]}$. 
To bound $\mathcal I_{3,k}$, notice that using a similar argument to that for the bound of $\mathcal I_{1,k}$, one has
\begingroup
\allowdisplaybreaks
\begin{align*}
&\Epn\Big[\|\mathcal I_{3,k}\|^2\big|\Ikc \Big]
\\
=&\Epn\Big[\big\|\mathbb{G}_{I_k}\big[\psi(W;\theta_0,\hat \eta_{k})-\psi(W;\theta_0,\eta_0)\big]\big\|^2\big|\Ikc\Big]\\
=&\Epn\Big[\frac{N}{\big(|I_k|(|I_k|-1)\big)^2}\sum_{m=1}^{d_\theta}\Big\{\sumIJ \Big(\tilde\psi_{ij,m}
-\Epn\tilde\psi_{ij,m}
\Big)\Big\}^2\Big|\Ikc\Big]\\
=&
\frac{N}{\big(|I_k|(|I_k|-1)\big)^2}\sumIJ\sum_{(i,j')\in \overline{I_k^2}, j'\ne j} \Epn\Big[\sum_{m=1}^{d_\theta}\Big(\tilde\psi_{ij,m}-\Epn\tilde\psi_{ij,m}
\Big)\Big(\tilde\psi_{ij',m}-\Epn\tilde\psi_{ij',m}
\Big)\Big|\Ikc\Big]\\
&+
\frac{N}{\big(|I_k|(|I_k|-1)\big)^2}\sumIJ\sum_{(i',j)\in \overline{I_k^2}, i'\ne i} \Epn\Big[\sum_{m=1}^{d_\theta}\Big(\tilde\psi_{ij,m}-\Epn\tilde\psi_{ij,m}
\Big)\Big(\tilde\psi_{i'j,m}-\Epn\tilde\psi_{i'j,m}
\Big)\Big|\Ikc\Big]\\
&+
\frac{N}{\big(|I_k|(|I_k|-1)\big)^2}\sumIJ \Epn\Big[\sum_{m=1}^{d_\theta}\Big(\tilde\psi_{ij,m}-\Epn\tilde\psi_{ij,m}
\Big)^2\Big|\Ikc\Big]
+0\\
=&
\frac{N}{\big(|I_k|(|I_k|-1)\big)^2}\sumIJ\sum_{(i,j')\in \overline{I_k^2}, j'\ne j} \Epn\Big[\langle\tilde\psi_{ij}-\Epn\tilde\psi_{ij}
,\tilde\psi_{ij'}-\Epn\tilde\psi_{ij'}
\rangle\Big|\Ikc\Big]\\
&+
\frac{N}{\big(|I_k|(|I_k|-1)\big)^2}\sumIJ\sum_{(i',j)\in \overline{I_k^2}, i'\ne i} \Epn\Big[\langle\tilde\psi_{ij}-\Epn\tilde\psi_{ij}
,\tilde\psi_{i'j}-\Epn\tilde\psi_{i'j}
\rangle\Big|\Ikc\Big]\\
&+
\frac{N}{\big(|I_k|(|I_k|-1)\big)^2}\sumIJ \Epn\Big[\Big\|\tilde\psi_{ij}-\Epn\tilde\psi_{ij}
\Big\|^2\Big|\Ikc\Big]\\
\lesssim&
 \Epn\Big[\Big\|\psi(W_{ij};\theta_0,\hat\eta)-\psi(W_{ij};\theta_0,\eta_0)-\Epn[\psi(W_{ij};\theta_0,\hat\eta)-\psi(W_{ij};\theta_0,\eta_0)]\Big\|^2\Big|\Ikc\Big] \\
\leq&
 \Epn[\|\psi(W_{ij};\theta_0,\hat\eta)-\psi(W_{ij};\theta_0,\eta_0)\|^2|\Ikc] \\
 \le&
 \sup_{\eta \in \T_N}\Epn[\|\psi(W_{ij};\theta_0,\eta)-\psi(W_{ij};\theta_0,\eta_0)\|^2|\Ikc] \\
 =&
  \sup_{\eta \in \T_N}\Epn[\|\psi(W_{12};\theta_0,\eta)-\psi(W_{12};\theta_0,\eta_0)\|^2] = (r_N')^2,
\end{align*}
\endgroup
where the first inequality follows from Cauchy-Schwartz's inequality, the second-to-last equality is due to Assumption \ref{a:sampling}, and the last equality is due to Assumption \ref{a:regularity_nuisance_parameters} (iii).
Hence, $\mathcal I_{3,k}=\Opn(r_N')$.
To bound $\mathcal I_{4,k}$, let
\begin{align*}
f_{k}(r):=
\Epn\Big[\psi(W_{ij};\theta_0,\eta_0 + r(\hat \eta_{k}-\eta_0))\big|\Ikc\Big]
-
\Epn[\psi(W_{12};\theta_0,\eta_0)],
\qquad r\in[0,1].
\end{align*}
An application of the mean value expansion coordinate-wise gives
\begin{align*}
f_{k}(1)=f_{k}(0)+f_{k}'(0)+f_{k}''(\tilde r)/2,
\end{align*}
where $\tilde r \in (0,1)$.
Note that $f_{k}(0)=0$ under Assumption \ref{a:linear_orthogonal_score} (i), and
\begin{align*}
\|f_{k}'(0)\|=\Big\| \partial_\eta \Epn \psi(W_{12};\theta_0,\eta_0)[\hat \eta_{k}-\eta_0] \Big\|\le \lambda_N
\end{align*}
under Assumption \ref{a:linear_orthogonal_score} (iv).
Moreover, under Assumption \ref{a:regularity_nuisance_parameters} (iii), on the event $\E_N$, we have
\begin{align*}
\|f_{k}''(\tilde r)\|\le \sup_{r\in(0,1)}\|f_{k}''(r)\|\le \lambda_N'.
\end{align*}
This completes a proof of (\ref{eq:step3}).

\bigskip
\noindent \textbf{Step 4.}
Note that
\begingroup
\allowdisplaybreaks
\begin{align*}
&
\Epn\Big[\Big\|\frac{\sqrt{N}}{K}\sumk\Enk[\psi(W;\theta_0,\eta_0)]\Big\|^2\Big]
\\
=&
\frac{N}{\big(\sumk |I_k|(|I_k|-1)\big)^2}\Epn\Big[\sum_{m=1}^{d_\theta}\Big(\sumk\sum_{(i,j)\in \overline{I_k^2}}\psi_m(W_{ij};\theta_0,\eta_0)\Big)^2\Big]\\
=&
\frac{N}{\big(\sumk |I_k|^2-N\big)^2}\sumk\sumIJ\sum_{(i,j')\in \overline{I_k^2}, j'\neq j}\Epn\Big[\sum_{m=1}^{d_\theta}\psi_m(W_{ij};\theta_0,\eta_0)\psi_m(W_{ij'};\theta_0,\eta_0)\Big]\\
&+
\frac{N}{\big(\sumk |I_k|^2-N\big)^2}\sumk\sumIJ\sum_{(i',j)\in \overline{I_k^2}, i'\neq i}\Epn\Big[\sum_{m=1}^{d_\theta}\psi_m(W_{ij};\theta_0,\eta_0)\psi_m(W_{i'j};\theta_0,\eta_0)\Big]\\
&+
\frac{N}{\big(\sumk |I_k|^2-N\big)^2}\sumk\sumIJ\Epn\Big[\sum_{m=1}^{d_\theta}\psi^2_m(W_{ij};\theta_0,\eta_0)\Big] + 0\\
\lesssim& 
\Epn[\|\psi(W_{12};\theta_0,\eta_0)\|^2]\le c_1^2,
\end{align*}
\endgroup
where the first equality is due to $|I_k|=|I|$ for all $k \in [K]$.
Therefore, an application of Markov's inequality implies
\begin{align*}
\Big\|\frac{\sqrt{N}}{K}\sumk \Enk[\psi(W_{ij};\theta_0,\eta_0)]\Big\|=\Opn(1).
\end{align*}
This completes a proof of (\ref{eq:step4}).

\bigskip
\noindent \textbf{Step 5.}
Note that all the singular values of $J_0$ are bounded from above by $c_1$ under Assumption \ref{a:linear_orthogonal_score} (v) and all the eigenvalues of $\Gamma$ are bounded from below by $c_0$ under Assumption \ref{a:regularity_nuisance_parameters} (iv). 
Therefore, we have $\|\sigma^{-1}\|\le c_1/\sqrt{ c_0}$ and thus
$
\|\sigma^{-1}\|
=\Opn(1).
$
This completes a proof of (\ref{eq:step5}).
\end{proof}

\subsection{Proof of Lemma \ref{theorem:variance_estimator_linear}}\label{sec:theorem:variance_estimator_linear}
\begin{proof}
Step 2 of the proof of Lemma \ref{theorem:main_result_linear} proves $\|\hat J - J_0\| = O_p(N^{-1/2} + r_N)$, and Assumption \ref{a:linear_orthogonal_score} (v) implies $\|J_0^{-1}\| \leq c_0^{-1}$.
Therefore, to prove the claim of the lemma, it suffices to show
\begingroup
\allowdisplaybreaks
\begin{align*}
 \ \Big\|\frac{1}{K}\sumk
 &\Big\{
\frac{|I_k|-1}{(|I_k|(|I_k|-1))^2}
\Big[
\sum_{i \in I_k}\sum_{\substack{{j,j' \in I_k}\\{j,j'\neq i}}}\psi(W_{ij};\tilde \theta,\hat\eta_{k}) \psi(W_{ij'};\tilde \theta,\hat\eta_{k})' 
+\sum_{j \in I_k}\sum_{\substack{{i,i' \in I_k}\\{i,i'\neq j}}}\psi(W_{ij};\tilde \theta,\hat\eta_{k}) \psi(W_{i'j};\tilde \theta,\hat\eta_{k})' \\
&+\sum_{i \in I_k}\sum_{\substack{{j,j' \in I_k}\\{j,j'\neq i}}}\psi(W_{ij};\tilde \theta,\hat\eta_{k}) \psi(W_{j'i};\tilde \theta,\hat\eta_{k})' 
+\sum_{j \in I_k}\sum_{\substack{{i,i' \in I_k}\\{i,i'\neq j}}}\psi(W_{ij};\tilde \theta,\hat\eta_{k}) \psi(W_{ji'};\tilde \theta,\hat\eta_{k})' 
 \Big]\Big\}
\\
&-\{
\Ep[\psi(W_{12};\theta_0,\eta_0)\psi(W_{13};\theta_0,\eta_0)]+\Ep[\psi(W_{21};\theta_0,\eta_0)\psi(W_{31};\theta_0,\eta_0)]\\
&+\Ep[\psi(W_{12};\theta_0,\eta_0)\psi(W_{31};\theta_0,\eta_0)]
+\Ep[\psi(W_{21};\theta_0,\eta_0)\psi(W_{13};\theta_0,\eta_0)]\}
\Big\|=\Op(\rho_N).
\end{align*}
\endgroup
Moreover, since $K$ and $d_{\theta}$ are constants, it suffices to show that, each of the following four objects is $\Op(\rho_N)$ for each $k\in [K]$ and $l,m\in[d_\theta]$:
\begingroup
\allowdisplaybreaks
\begin{align*}
&\Big|
\frac{|I_k|-1}{(|I_k|(|I_k|-1))^2}\sum_{i \in I_k}\sum_{\substack{{j,j' \in I_k}\\{j,j'\neq i}}} \psi_l(W_{ij};\tilde \theta,\hat\eta_{k}) \psi_m(W_{ij'};\tilde \theta,\hat\eta_{k})
- 
\Ep[\psi_l(W_{12};\theta_0,\eta_0)\psi_m(W_{13};\theta_0,\eta_0)]\Big|,
\\
&\Big|
\frac{|I_k|-1}{(|I_k|(|I_k|-1))^2}\sum_{j \in I_k}\sum_{\substack{{i,i' \in I_k}\\{i,i'\neq j}}} \psi_l(W_{ij};\tilde \theta,\hat\eta_{k}) \psi_m(W_{i'j};\tilde \theta,\hat\eta_{k})
- 
\Ep[\psi_l(W_{21};\theta_0,\eta_0)\psi_m(W_{31};\theta_0,\eta_0)]\Big|,
\\
&\Big|
\frac{|I_k|-1}{(|I_k|(|I_k|-1))^2}\sum_{i \in I_k}\sum_{\substack{{j,j' \in I_k}\\{j,j'\neq i}}} \psi_l(W_{ij};\tilde \theta,\hat\eta_{k}) \psi_m(W_{j'i};\tilde \theta,\hat\eta_{k})
- 
\Ep[\psi_l(W_{12};\theta_0,\eta_0)\psi_m(W_{31};\theta_0,\eta_0)]\Big|,
\\
&\Big|
\frac{|I_k|-1}{(|I_k|(|I_k|-1))^2}\sum_{\substack{{j \in I_k}\\{i,i' \in I_k}}}\sum_{\substack{{i\neq i'}\\{i,i'\neq j}}} \psi_l(W_{ij};\tilde \theta,\hat\eta_{k}) \psi_m(W_{ji'};\tilde \theta,\hat\eta_{k})
- 
\Ep[\psi_l(W_{21};\theta_0,\eta_0)\psi_m(W_{13};\theta_0,\eta_0)]\Big|
.
\end{align*}
\endgroup

We will show the first statement, and the three others will follow analogously. 
Let the left-hand side of the first equation be denoted by $\mathcal I_{k,lm}$. 
Apply the triangle inequality to get
\begingroup
\allowdisplaybreaks
\begin{align*}
\mathcal I_{k,lm}\le \mathcal I_{k,lm,1}+ \mathcal I_{k,lm,2},
\end{align*}
where
\begin{align*}
\mathcal I_{k,lm,1}&:=
\Big|\frac{1}{|I_k|^2(|I_k|-1)}\sum_{i \in I_k}\sum_{\substack{{j,j' \in I_k}\\{j,j'\neq i}}} \Big\{\psi_l(W_{ij};\tilde \theta,\hat\eta_{k}) \psi_m(W_{ij'};\tilde \theta,\hat\eta_{k})
-
 \psi_l(W_{ij};\theta_0,\eta_0) \psi_m(W_{ij'};\theta_0,\eta_0) \Big\}
\Big|\\
\mathcal I_{k,lm,2}:&=
\Big|\frac{1}{|I_k|^2(|I_k|-1)}\sum_{i \in I_k}\sum_{\substack{{j,j' \in I_k}\\{j,j'\neq i}}} \psi_l(W_{ij};\theta_0,\eta_0) \psi_m(W_{ij'};\theta_0,\eta_0)
-
\Ep[\psi_l(W_{12};\theta_0,\eta_0)\psi_m(W_{13};\theta_0,\eta_0)]
\Big|.
\end{align*}
We first find a bound for $\mathcal I_{k,lm,2}$. Since $q>4$, it holds that
\begin{align*}
&\Ep\big[\mathcal I_{k,lm,2}^2\big]
\\
=&
\frac{1}{|I_k|^4(|I_k|-1)^2}\Ep\Big[
\Big|\sum_{i \in I_k}\sum_{\substack{{j,j' \in I_k}\\{j,j'\neq i}}} \psi_l(W_{ij};\theta_0,\eta_0) \psi_m(W_{ij'};\theta_0,\eta_0)
-
\Ep[\psi(W_{12};\theta_0,\eta_0)\psi(W_{13};\theta_0,\eta_0)]
\Big|^2
\Big]\\
\le &
\frac{1}{|I_k|^4(|I_k|-1)^2}\Ep\Big[
\sum_{i,i'\in I_k }\sum_{\substack{j,\jmath \in I_k\setminus\{i\} \\ j,\jmath' \in I_k\setminus \{i'\}}} \psi_l(W_{ij};\theta_0,\eta_0) \psi_m(W_{i\jmath};\theta_0,\eta_0)
\psi_l(W_{i'j};\theta_0,\eta_0) \psi_m(W_{i'\jmath'};\theta_0,\eta_0)
\Big]\\
&+\frac{1}{|I_k|^4(|I_k|-1)^2}\Ep\Big[
\sum_{i\in I_k}\sum_{\substack{{j,j',\jmath,\jmath'\in I_k}\\{ j,j',\jmath,\jmath' \neq i}}} \psi_l(W_{ij};\theta_0,\eta_0) \psi_m(W_{ij'};\theta_0,\eta_0)\psi_l(W_{i\jmath};\theta_0,\eta_0) \psi_m(W_{i\jmath'};\theta_0,\eta_0)
\Big]\\
&+o((|I_k|-1)^{-1}) + 0\\
\lesssim&\frac{1}{|I_k|-1}\Ep[\|\psi(W;\theta_0,\eta_0)\|^4]\lesssim c_1^4/N=O(N^{-1}).
\end{align*}
\endgroup

Now, to bound $\mathcal I_{k,lm,1}$, we make use of the following identity borrowed from the proof of Theorem 3.2 in \citet{CCDDHNR18}: for any numbers $a$, $b$, $\delta a$, $\delta b$ such that
$|a|\vee |b|\le c$ and $|\delta a |\vee |\delta b| \le r$, it holds that 
$
|(a+\delta a)(b+\delta b)-ab|\le 2r( c+r).
$ 
Denote $\psi_{ij,h}:=\psi_h(W_{ij};\theta_0,\eta_0)$ and $\hat\psi_{ij,h}:=\psi_h(W_{ij};\tilde \theta,\hat\eta_{k})$ for $h\in\{l,m\}$, and apply the above identity with $a=\psi_{ij,l}$, $b=\psi_{ij',m}$, $a+\delta a =\hat\psi_{ij,l}$, $b+ \delta b=\hat\psi_{ij',m}$, $r=|\hat\psi_{ij,l}-\psi_{ij,l}|\vee |\hat\psi_{ij',m}-\psi_{ij',m}|$ and $c=|\psi_{ij,l}|\vee|\psi_{ij',m}|$.
Then,
\begingroup
\allowdisplaybreaks
\begin{align*}
\mathcal I_{k,lm,1}=&
\Big|\frac{1}{|I_k|^2(|I_k|-1)}\sum_{i \in I_k}\sum_{\substack{{j,j'\in I_k}\\{j,j'\neq i}}} \Big\{
\hat\psi_{ij,l} \hat\psi_{ij',m}
-
\psi_{ij,l} \psi_{ij',m}  \Big\}
\Big|\\
\le &
\frac{1}{|I_k|^2(|I_k|-1)}\sum_{i \in I_k}\sum_{\substack{{j,j'\in I_k}\\{j,j'\neq i}}}|
\hat\psi_{ij,l} \hat\psi_{ij',m}
-
\psi_{ij,l} \psi_{ij',m} |\\
\le& 
\frac{2}{|I_k|^2(|I_k|-1)}\sum_{i \in I_k}\sum_{\substack{{j,j'\in I_k}\\{j,j'\neq i}}} (|\hat\psi_{ij,l}-\psi_{ij,l}|\vee |\hat\psi_{ij',m}-\psi_{ij',m}|)
\\
&\qquad \times \Big(|\psi_{ij,l}|\vee|\psi_{ij',m}|
+|\hat\psi_{ij,l}-\psi_{ij,l}|\vee |\hat\psi_{ij',m}-\psi_{ij',m}|\Big)\\
\le& 
\Big(\frac{2}{|I_k|^2(|I_k|-1)}\sum_{i \in I_k}\sum_{\substack{{j,j'\in I_k}\\{j,j'\neq i}}} |\hat\psi_{ij,l}-\psi_{ij,l}|^2\vee |\hat\psi_{ij',m}-\psi_{ij',m}|^2\Big)^{1/2}\\
&\qquad\times
 \Big(\frac{2}{|I_k|^2(|I_k|-1)}\sum_{i \in I_k}\sum_{\substack{{j,j'\in I_k}\\{j,j'\neq i}}}\Big\{|\psi_{ij,l}|\vee|\psi_{ij',m}|
+|\hat\psi_{ij,l}-\psi_{ij,l}|\vee |\hat\psi_{ij',m}-\psi_{ij',m}|\Big\}^2\Big)^{1/2}\\
\le&
\Big(\frac{2}{|I_k|^2(|I_k|-1)}\sum_{i \in I_k}\sum_{\substack{{j,j'\in I_k}\\{j,j'\neq i}}} |\hat\psi_{ij,l}-\psi_{ij,l}|^2\vee |\hat\psi_{ij',m}-\psi_{ij',m}|^2\Big)^{1/2}\\
&\times\Big\{
 \Big(\frac{2}{|I_k|^2(|I_k|-1)}\sum_{i \in I_k}\sum_{\substack{{j,j'\in I_k}\\{j,j'\neq i}}}|\psi_{ij,l}|^2\vee|\psi_{ij',m}|^2\Big)^{1/2} 
\\
&
+
  \Big(\frac{2}{|I_k|^2(|I_k|-1)}\sum_{i \in I_k}\sum_{\substack{{j,j'\in I_k}\\{j,j'\neq i}}} |\hat\psi_{ij,l}-\psi_{ij,l}|^2\vee |\hat\psi_{ij',m}-\psi_{ij',m}|^2\Big)^{1/2}\Big\},
\end{align*}
\endgroup
where the second to the last inequality follows the Cauchy-Schwartz's inequality and Minkowski's inequality.
Notice that
\begingroup
\allowdisplaybreaks
\begin{align*}
&\sum_{i \in I_k}\sum_{\substack{{j,j'\in I_k}\\{j,j'\neq i}}}|\psi_{ij,l}|^2\vee|\psi_{ij',m}|^2\le   |I_k|\sumIJ \|\psi(W_{ij};\theta_0,\eta_0)\|^2,\\
&\sum_{i \in I_k}\sum_{\substack{{j,j'\in I_k}\\{j,j'\neq i}}}|\hat \psi_{ij,l}-\psi_{ij,l}|^2\vee|\hat \psi_{ij',m}-\psi_{ij',m}|^2\le |I_k|\sumIJ \|\psi(W_{ij};\tilde\theta,\hat \eta_{k})-\psi(W_{ij};\theta_0,\eta_0)\|^2.
\end{align*}
\endgroup
Thus, the above bound for $\mathcal I_{k,lm,1}$ implies that
\begin{align*}
\mathcal I_{k,lm,1}^2
\lesssim&
R_N\times \Big(\frac{1}{|I_k|(|I_k|-1)}\sumIJ
\| \psi(W_{ij};\theta_0, \eta_0) 
\|^2
+
R_N
\Big),
\end{align*}
where
\begin{align*}
R_N:=\frac{1}{|I_k|(|I_k|-1)}\sumIJ
\| \psi(W_{ij};\tilde\theta,\hat \eta_{k}) - \psi(W_{ij};\theta_0,\eta_0) 
\|^2.
\end{align*}
Note that
\begin{align*}
\frac{1}{|I_k|(|I_k|-1)}\sumIJ
\| \psi(W_{ij};\theta_0,\eta_0) 
\|^2=\Op(1),
\end{align*}
which is implied by Markov's inequality and the calculations
\begin{align*}
\Ep\Big[\frac{1}{|I_k|(|I_k|-1)}\sumIJ
\| \psi(W_{ij};\theta_0,\eta_0) 
\|^2\Big]=&\Ep[ \|\psi(W_{12};\theta_0,\eta_0) 
\|^2]\le c_1^2
\end{align*}
under Assumptions \ref{a:sampling} and \ref{a:regularity_nuisance_parameters} (ii). 
Finally, to bound $R_N$, use Assumption \ref{a:linear_orthogonal_score} (ii) to get
\begin{align*}
R_N\lesssim&
 \frac{1}{|I_k|(|I_k|-1)}\sumIJ
\| \psi^a(W_{ij};\hat \eta_{k})(\tilde \theta -\theta_0)
\|^2 +
\frac{1}{|I_k|(|I_k|-1)}\sumIJ
\| \psi(W_{ij};\theta_0,\hat\eta_{k}) -\psi(W_{ij};\theta_0,\eta_0) 
\|^2.
\end{align*}
The first term on the right-hand side is bounded by
\begin{align*}
 \Big(\frac{1}{|I_k|(|I_k|-1)}\sumIJ
\| \psi^a(W_{ij};\hat \eta_{k})
\|^2 \Big)\times\|\tilde \theta -\theta_0\|^2=\Op(1)\times \Op(N^{-1})=\Op(N^{-1})
\end{align*}
due to Assumption \ref{a:regularity_nuisance_parameters} (ii), Markov's inequality, and Lemma \ref{theorem:main_result_linear}.
Furthermore, given that $(W_{ij})_{(i,j)\in \Ikc}$ satisfies $\hat \eta_{k}\in\mathcal T_N$,
\begin{align*}
\Ep\Big[ \|\psi(W_{ij};\theta_0,\hat\eta_{k}) -\psi(W_{ij};\theta_0,\eta_0)\|^2\Big|\Ikc\Big]\le&
\sup_{\eta\in \mathcal T_N}\Ep\Big[ \|\psi(W_{ij};\theta_0,\eta) -\psi(W_{ij};\theta_0,\eta_0)\|^2\Big|\Ikc\Big] \le (r_N')^2
\end{align*}
due to Assumptions \ref{a:sampling} and \ref{a:regularity_nuisance_parameters} (iii).
Also, the event $\hat \eta_{k}\in\mathcal T_N$ happens with probability $1-o(1)$, and we have $R_N=\Op(N^{-1}+(r'_N)^2)$. 
Thus, we conclude that 
\begin{align*}
\mathcal I_{k,lm,1}=\Op(N^{-1/2}+r'_N).
\end{align*}
This completes the proof of Lemma \ref{theorem:variance_estimator_linear}.
\end{proof}

\subsection{Proof of Theorem \ref{theorem_DDML_non_linear}}\label{sec:theorem:main_result_linear}
\begin{proof}
With the aid of Lemma \ref{lemma_linearization_subsample_DML} stated and proved below, which shows the approximate linearity of the subsample dyadic machine learning estimators $\tilde{\theta}$, the proof will be same as that in the linear case presented in the proof of Lemma \ref{theorem:main_result_linear}.
\end{proof}
\begin{lemma}[Linearization for Subsample DML for Nonlinear Scores]
\label{lemma_linearization_subsample_DML}
Suppose that the conditions of Theorem \ref{theorem_DDML_non_linear} hold. 
For any $k=1,...,K$, the estimator $\tilde{\theta}$ follows
\begin{align}
\label{lemma_subsample_DML}
\sqrt{N}\sigma^{-1}(\tilde{\theta}-\theta_0)=\sqrt{N}\Enk \bar{\psi}(W)+\Op(\delta_N)
\end{align}
uniformly over $P\in\mathcal P_N$, where the influence function takes the form $\bar \psi(\cdot):=-\sigma^{-1}J_0^{-1} \psi(\cdot;\theta_0,\eta_0)$.
\end{lemma}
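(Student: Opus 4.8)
The plan is to carry out the dyadic analogue of the linearization argument for nonlinear Neyman-orthogonal scores (cf.\ Lemma 6.3 in \citet{CCDDHNR18} and Theorem 2.1 in \citet{BCCW18}), replacing every i.i.d.\ empirical-process tool by its dyadic counterpart from Appendix \ref{sec:lemmas_dyadic}. Fix a sequence $P_N\in\mathcal P_N$ and work on the event $\mathcal E_N=\{\hat\eta_k\in\mathcal T_N\text{ for all }k\}$, which by Assumption \ref{a:nonlinear_score_regularity_nuisance_parameters}(iii) has $P_N$-probability at least $1-K\Delta_N$. Write the empirical estimating function $\hat\psi_N(\theta):=\tfrac1K\sumk\Enk[\psi(W;\theta,\hat\eta_k)]$, so that $\tilde\theta$ is, up to the $\epsilon_N$ slack in \eqref{eq:epsilon_solution}, a zero of $\hat\psi_N$. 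The four steps are: (i) establish consistency and a rate for $\tilde\theta$; (ii) expand $\hat\psi_N$ in $\theta$ about $\theta_0$ and identify the Jacobian as $J_0+\opn(1)$; (iii) control the nuisance-induced bias via Neyman near-orthogonality and the second-order smoothness bound; and (iv) invert $J_0$ to read off the influence-function representation, then average over folds.

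First I would obtain consistency and a rate. Applying the dyadic maximal inequality (Lemma \ref{lemma_maximal_inequality_dyadic_data}) to the class $\mathcal F_1$ with the uniform-entropy and envelope bounds of Assumption \ref{a:nonlinear_score_regularity_nuisance_parameters}(i), together with the $L^2$-continuity in \ref{a:nonlinear_score_regularity_nuisance_parameters}(iv)(a) and the bound $\|\eta-\eta_0\|\le\tau_N$ of (v), the centered empirical process $\sup_\theta\|\hat\psi_N(\theta)-\Ep[\psi(W_{12};\theta,\eta_0)]\|$ is $\opn(1)$; the rate condition (vii)(a) is exactly what guarantees that $\tau_N$ dominates this estimation rate. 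Combined with the identification inequality $2\|\Ep[\psi(W_{12};\theta,\eta_0)]\|\ge\|J_0(\theta-\theta_0)\|\wedge c_0$ of Assumption \ref{a:nonlinear_moment_condition}(iii) and the fact that the ball in \ref{a:nonlinear_moment_condition}(i) keeps $\tilde\theta\in\Theta$, the approximate-zero property forces $\|J_0(\tilde\theta-\theta_0)\|\wedge c_0$ to be small, yielding $\tilde\theta-\theta_0=\Opn(\tau_N)$.

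Next I would linearize $0\approx\hat\psi_N(\tilde\theta)$ into three pieces: the leading stochastic term $\tfrac1K\sumk\Enk[\psi(W;\theta_0,\eta_0)]$; a Jacobian term $\{\partial_\theta\hat\psi_N(\bar\theta)\}(\tilde\theta-\theta_0)$ obtained from a coordinatewise mean-value expansion, where $\partial_\theta\hat\psi_N(\bar\theta)=J_0+\opn(1)$ by Assumption \ref{a:nonlinear_moment_condition}(ii)--(iii), the rate from the previous step, and a further call to Lemma \ref{lemma_maximal_inequality_dyadic_data}; and a nuisance-plus-equicontinuity remainder. The hold-out nuisance bias $\Epn[\psi(W_{12};\theta_0,\hat\eta_k)\mid\overline{([N]\setminus I_k)^2}]-\Ep[\psi(W_{12};\theta_0,\eta_0)]$ I would expand in the Gateaux direction $r\mapsto\eta_0+r(\hat\eta_k-\eta_0)$: its first-order term vanishes up to $\lambda_N=\delta_N N^{-1/2}$ by the Neyman near-orthogonality of \ref{a:nonlinear_moment_condition}(iv), while its second-order term is controlled through \ref{a:nonlinear_score_regularity_nuisance_parameters}(iv)(b)--(c), so that after the $\sqrt N$ scaling the rate condition (vii)(c), $N^{1/2}B_{1N}^2B_{2N}\tau_N^2\le C_0\delta_N$, renders the whole bias $\Opn(\delta_N)$. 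The centered equicontinuity remainder is handled exactly as the term $\mathcal I_{3,k}$ in Step 3 of the proof of Theorem \ref{theorem:main_result_linear}, again via Lemma \ref{lemma_maximal_inequality_dyadic_data} and conditions (vii)(a)--(b). Collecting these, inverting $J_0$ (singular values in $[c_0,c_1]$) and multiplying by $\sigma^{-1}$ with $\bar\psi=-\sigma^{-1}J_0^{-1}\psi(\cdot;\theta_0,\eta_0)$ gives the linear representation \eqref{lemma_subsample_DML}; summing the fold contributions and applying the Hájek projection of Lemma \ref{lemma:hajek} then reproduces the conclusion of Theorem \ref{theorem_DDML_non_linear} from the linear case.

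The main obstacle will be the dyadic stochastic-equicontinuity control, namely bounding the centered empirical process of $\psi(\cdot;\theta,\eta)$ uniformly over the growing index set $\Theta\times\mathcal T_N$ under joint exchangeability rather than independence. This is precisely where the Hoeffding/Hájek-type decomposition behind Lemma \ref{lemma_maximal_inequality_dyadic_data} is indispensable, and the delicate point is verifying its hypotheses for the conditional-expectation classes $\mathcal G$ and $\mathcal H$ (with envelopes built from $F_1$) so that the variance proxy $\overline\sigma_n$ may be taken of order $\tau_N$. The rate conditions in Assumption \ref{a:nonlinear_score_regularity_nuisance_parameters}(vii) are calibrated exactly so that the resulting bound, together with the second-order nuisance remainder governed by $N^{1/2}B_{1N}^2B_{2N}\tau_N^2$, collapses to $\Opn(\delta_N)$.
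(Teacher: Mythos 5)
Your overall architecture (preliminary rate, orthogonality for the nuisance bias, dyadic maximal inequality for equicontinuity, fold aggregation, H\'ajek projection) matches the paper's, but two specific steps would fail under the stated assumptions. First, your Jacobian term rests on a coordinatewise mean-value expansion of the \emph{empirical} score, $\hat\psi_N(\tilde\theta)=\hat\psi_N(\theta_0)+\partial_\theta\hat\psi_N(\bar\theta)(\tilde\theta-\theta_0)$, together with the claim $\partial_\theta\hat\psi_N(\bar\theta)=J_0+\opn(1)$. This is not licensed: Assumption \ref{a:nonlinear_moment_condition}(ii) only makes the \emph{population} map $(\theta,\eta)\mapsto\Ep[\psi(W_{12};\theta,\eta)]$ twice Gateaux differentiable, so $\theta\mapsto\psi(w;\theta,\eta)$ need not be differentiable at all (the lemma is meant to cover non-smooth scores), and even when it is, Assumption \ref{a:nonlinear_score_regularity_nuisance_parameters}(i) provides entropy and envelope control only for the class $\mathcal F_1$ of scores, not for any class of $\theta$-derivatives, so no uniform law of large numbers for $\partial_\theta\hat\psi_N$ is available. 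The paper avoids this entirely: it writes $\sqrt{N}\Enk[\psi(W;\theta,\eta)]=\sqrt{N}\Enk[\psi(W;\theta_0,\eta_0)]+\mathbb{G}_{I_k}[\psi(W;\theta,\eta)-\psi(W;\theta_0,\eta_0)]+\sqrt{N}\Epn[\psi(W_{12};\theta,\eta)]$, Taylor-expands only the last (population) term so that $J_0(\theta-\theta_0)+D_0[\eta-\eta_0]$ plus a second-order remainder appears, and routes \emph{all} stochastic fluctuation through the equicontinuity term $\mathcal I_5=\sup_\theta\|\mathbb{G}_{I_k}[\psi(\cdot;\theta,\hat\eta_k)-\psi(\cdot;\theta_0,\eta_0)]\|$ over the class $\mathcal F_2$, controlled by Lemma \ref{lemma_maximal_inequality_dyadic_data} conditionally on the held-out fold.

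Second, because $\tilde\theta$ is only an approximate $\epsilon_N$-solution in the sense of \eqref{eq:epsilon_solution}, writing ``$0\approx\hat\psi_N(\tilde\theta)$'' hides the term $\inf_{\theta\in\Theta}\sqrt{N}\|\Enk[\psi(W;\theta,\hat\eta_k)]\|$, which must itself be shown to be $\Opn(\delta_N)$; nothing in your proposal does this. The paper's Step 5 handles it by constructing the one-step point $\bar\theta_0=\theta_0-J_0^{-1}\Enk[\psi(W;\theta_0,\eta_0)]$, verifying $\bar\theta_0\in\Theta$ with probability $1-o(1)$ via the ball condition of Assumption \ref{a:nonlinear_moment_condition}(i) (which exists precisely for this purpose, not, as you state, to keep $\tilde\theta\in\Theta$ --- that holds by definition of the estimator), and bounding the score at $\bar\theta_0$ using the same $\mathcal I_4$, $\mathcal I_5$ machinery together with the exact cancellation $\Enk[\psi(W;\theta_0,\eta_0)]+J_0(\bar\theta_0-\theta_0)=0$. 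A smaller but consequential slip: the preliminary rate is $\|\tilde\theta-\theta_0\|\lesssim B_{1N}\tau_N$, not $\Opn(\tau_N)$; the extra $B_{1N}$ is exactly why the variance proxy for $\mathcal F_2$ is taken as $CB_{1N}\tau_N$ and why condition \ref{a:nonlinear_score_regularity_nuisance_parameters}(vii)(c) carries the factor $B_{1N}^2$, so dropping it breaks the calibration you invoke at the end.
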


\begin{proof}
Fix any $k=1,...,K$, and any sequence $\{P_N\}_{N\geq 1}$ such that $P_N\in \mathcal P_N$ for all $N\geq 1$.  
We shall follow roughly the proof structure of Theorem 6.3 in  \citet{CCDDHNR18}. Note that, however, the modification of asymptotic theory to accommodate the current dyadic setting is far from trivial due to the new cross-fitting strategy and the need of controlling empirical processes with dyadic clustered data. 

To prove the asserted claim, it suffices to show that $\tilde{\theta}$ satisfies (\ref{lemma_subsample_DML}) with $P$ replaced by $P_N$. 

\bigskip
\noindent \textbf{Step 1.} (Preliminary result)
We claim that, with $P_N$-probability $1-o(1)$,
\begin{align}
\label{inequality_theta_bound}
\big\|\tilde{\theta}-\theta_{0}\big\| \lesssim B_{1N}\tau_{N}.
\end{align}
To show this claim, note that the definition of $\tilde{\theta}$ implies that
\begin{align*}
\left\| \frac{1}{K} \sumk  \Enk\left[\psi (W ; \tilde{\theta}, \hat{\eta}_{k})\right]\right\|
\leq
\inf_{\theta\in\Theta}
\left\| \frac{1}{K} \sumk  \Enk\left[\psi\left(W ; \theta, \hat{\eta}_{ k}\right)\right] \right\|+\epsilon_{N},
\end{align*}
which in turn implies via the triangle inequality that, with $P_N$-probability $1-o(1)$,
\begin{align}
\label{triangle_I}
\left\|\left.\Epn\left[\psi\left(W_{12} ; \theta, \eta_{0}\right)\right]\right|_{\theta=\tilde{\theta}}\right\| \leq \epsilon_{N}+2 \mathcal{I}_{1}+2 \mathcal{I}_{2},
\end{align}
where
\begingroup
\allowdisplaybreaks
\begin{align*}
&\mathcal{I}_{1}:=\sup _{\theta \in \Theta}\left\|\Enk[\psi(W ; \theta, \hat\eta)]-\Enk\left[\psi\left(W; \theta, \eta_{0}\right)\right]\right\|\lesssim B_{1N}\tau_N ,
\\
& \mathcal{I}_{2}:=\max _{\eta \in\left\{\eta_{0}, \hat{\eta}_{k}\right\}} \sup _{\theta \in \Theta}\left\| \frac{1}{K}\sumk \Enk[\psi(W ; \theta, \eta)]-\Epn[\psi(W_{12} ; \theta, \eta)]\right\|\lesssim \tau_N,
\end{align*}
and the bounds of $\mathcal{I}_1$  and $\mathcal{I}_2$ will be derived in step 2. Recall that $\epsilon_N=o(\delta_NN^{-1/2})$.
Since $\delta_NN^{-1/2}\lesssim \tau_N$, we have $\epsilon_N=o(\tau_N)$.
\endgroup

Hence, it follows from (\ref{triangle_I}) and Assumption \ref{a:nonlinear_moment_condition} that, with $P_N$-probability $1-o(1)$,
\begin{align}
\left\|J_{0}\left(\tilde{\theta}-\theta_{0}\right)\right\| \wedge c_{0} \leq\left\|\left.\Epn\left[\psi\left(W_{12} ; \theta, \eta_{0}\right)\right]\right|_{\theta=\tilde{\theta}}\right\|\lesssim B_{1N}\tau_N.
\end{align}
Combining this bound with the fact that the singular values of $J_0$ are bounded away from zero, which holds by Assumption \ref{a:nonlinear_moment_condition}, proves the claim of this step.

\bigskip
\noindent \textbf{Step 2.} 
In this step, we shall establish that
$\mathcal{I}_1\lesssim B_{1N}\tau_N \text{ and  }  \mathcal{I}_2\lesssim\tau_N$ with probability $1-o(1)$.
Notice that with probability $1-o(1)$, we have $\mathcal{I}_1\leq 2\mathcal{I}_{1,1}+\mathcal{I}_{1,2}$ and $\mathcal{I}_2\leq \mathcal{I}_{1,1}$, where 
\begin{align*}
&\mathcal{I}_{1,1}=\sup_{\theta\in\Theta,\eta\in\mathcal{T}_N}\Big\|\Enk[\psi(W;\theta,\eta)]-\Epn[\psi(W_{12};\theta,\eta)]\Big\|,
\\
&\mathcal{I}_{1,2}=\sup_{\theta\in\Theta,\eta\in\mathcal{T}_N}\Big\|\Epn[\psi(W_{12};\theta,\eta)]-\Epn[\psi(W_{12};\theta,\eta_0)]\Big\|.
\end{align*}
By Taylor expansion,
\begin{align*}
\mathcal{I}_{1,2}\leq \sup_{\theta\in\Theta,\eta\in\mathcal{T}_N,r\in[0,1)}\big\|\partial_r\Epn\big[\psi\big(W_{12};\theta,\eta_0+r(\eta-\eta_0)\big)\big]\big\|
\leq B_{1N}\sup_{\eta\in\mathcal{T}_N}\|\eta-\eta_0\|\leq B_{1N}\tau_N,
\end{align*}
where the second and last inequalities follow from Assumption \ref{a:nonlinear_score_regularity_nuisance_parameters} (iv) and (v).
To bound $\mathcal{I}_{1,1}$, we apply Lemma \ref{lemma_maximal_inequality_dyadic_data} to the class $\mathcal{F}_1$ defined in Assumption \ref{a:nonlinear_score_regularity_nuisance_parameters} (i) with $n=N$, $v=v_N$, $A=a_N$, $b_n=D_N$ and $\bar{\sigma}_n=C_0$,  and the function classes defined in Lemma \ref{lemma_maximal_inequality_dyadic_data} are given by
$\mathcal{G}=\{\Ep[f(W_{12})|U_{k}=\cdot]:k=1,2,f\in\mathcal{F}_1\}$ and $\mathcal{H}=\{\Ep[f(W_{12})|U_1=\cdot,U_2=\cdot]:f\in\mathcal{F}_1\}$ with envelopes $G(u)=\Ep[F_1(W_{12})|U_1=u]\vee \Ep[F_1(W_{12})|U_2=u]$ and $H(u_1,u_2)=\Ep[F_1(W_{12})|U_1=u_1,U_2=u_2]$, respectively. Observe that $a_N\geq N\vee D_N$, $v_N\geq 1$. Furthermore, by Jensen's inequality, $\|G\|_{P,q}\vee \|H\|_{P,q}\le\|F_1\|_{P,q}\leq D_N$ under Assumption \ref{a:nonlinear_score_regularity_nuisance_parameters} (i), and $\sup_{g\in\mathcal{G}}\|g\|_{P,2}\vee \sup_{h\in\mathcal{H}}\|h\|_{P,2}\leq C_0$ following Assumption \ref{a:nonlinear_score_regularity_nuisance_parameters} (ii).
Then, an application of Lemma \ref{lemma_maximal_inequality_dyadic_data} yields that with probability $1-o(1)$,
\begin{align}
\mathcal{I}_{1,1}\lesssim N^{-1/2}(\sqrt{v_N\log a_N}+N^{-1/2+1/q}v_ND_N\log a_N).
\end{align}
By Assumption \ref{a:nonlinear_score_regularity_nuisance_parameters} (vii), we have $(v_N\log a_N)^{1/2}\lesssim N^{1/2}\tau_N$ and $N^{-1/2}N^{-1/2+1/q}v_ND_N\log a_N\lesssim N^{-1/2}\delta_N\lesssim N^{-1/2}\lesssim\tau_N$. Hence, $\mathcal{I}_{1,1}\lesssim \tau_N$, which completes the proof of this step.

\bigskip
\noindent \textbf{Step 3.} (Linearization)
Here in this step, we prove the claim of the lemma. 
First, by the definition of $\tilde{\theta}$, we have 
\begin{align}
\sqrt{N}\left\| \Enk[\psi (W ; \tilde{\theta}, \hat{\eta}_{k})]\right\| \leq
 \inf _{\theta \in \Theta} \sqrt{N}\left\| \Enk[\psi (W ; \theta, \hat{\eta}_{k} )]\right\|+\epsilon_{N} \sqrt{N}.
\end{align}
Then, for any $\theta\in \Theta$ and $\eta \in \mathcal T_N$, we have
\begingroup
\allowdisplaybreaks
\begin{align}
\label{empirical_expansion}
\nonumber
 \sqrt{N} \Enk[\psi(W ; \theta, \eta)]
 =
 & \sqrt{N} \Enk\left[\psi\left(W ; \theta_{0}, \eta_{0}\right)\right]+\mathbb{G}_{I_k}\left[\psi(W ; \theta, \eta)-\psi\left(W ; \theta_{0}, \eta_{0}\right)\right] 
 \\
  &+\sqrt{N}\Epn[\psi(W_{12} ; \theta, \eta)],
\end{align}
\endgroup
where $\mathbb G_{I_k}[\psi(W;\theta,\eta)]:=\frac{\sqrt{N}}{|I_k|(|I_k|-1)}\sumIJ\big(\psi(W_{ij})-\int\psi(w)dP_N(w)\big)$
and we are using the fact that $\Epn[\psi(W_{12};\theta_0,\eta_0)]=0$. 
\\Hence, by a Taylor expansion of the function 
$
r \mapsto \Epn\left[\psi\big(W_{12} ; \theta_{0}+r\left(\theta-\theta_{0}\right), 
\eta_{0}+r\left(\eta-\eta_{0}\right)\big)\right]
$,
\begingroup
\allowdisplaybreaks
It holds that
\begin{align}
\label{taylor_expansion}
\nonumber
\Epn[\psi(W_{12} ; \theta, \eta)]=& J_{0}\left(\theta-\theta_{0}\right)+D_0\left[\eta-\eta_{0}\right] 
\\
 &+2^{-1} \partial_{r}^{2} \Epn\left[\psi(W_{12} ; \theta_{0}+r\left(\theta-\theta_{0}\right), \eta_{0}+r\left(\eta-\eta_{0}\right))\right]\big|_{r=\bar{r}},
\end{align}
\endgroup
where $\bar{r}\in (0,1)$.
Applying (\ref{empirical_expansion}) with $\theta=\tilde{\theta}$ and $\eta=\hat{\eta}_k$, we have, with $P_N$-probability $1-o(1)$,
\begin{align}
\sqrt{N}\left\|\Enk\left[\psi\left(W ; \theta_{0}, \eta_{0}\right)\right]+J_{0}\left(\tilde{\theta}-\theta_{0}\right)\right\| \leq \lambda_{N} \sqrt{N}+\epsilon_{N} \sqrt{N}+\mathcal{I}_{3}+\mathcal{I}_{4}+\mathcal{I}_{5},
\end{align}
where 
\begin{align}
& \mathcal{I}_{3} :=\inf _{\theta \in \Theta} \sqrt{N}\left\| \Enk[\psi (W ; \theta, \hat{\eta}_{k} )]\right\|
,
\label{I_3}\\
&\mathcal{I}_{4}:=\sqrt{N} \sup _{ r\in[0,1),\theta\in\Theta,\eta \in \mathcal{T}_{N}}\left\|  \partial_{r}^{2} \Epn\left[\psi(W_{12} ; \theta_{0}+r\left(\theta-\theta_{0}\right), \eta_{0}+r\left(\eta-\eta_{0}\right))\right] \right\| ,
\label{I_4}\\
& \mathcal{I}_{5}: =\sup _{\theta\in\Theta}\left\|\mathbb{G}_{I_k}\big[\psi(W ; \theta, \hat{\eta}_{k})-\psi(W ; \theta_{0}, \eta_{0})\big]\right\|.\label{I_5}
\end{align}
Step 5 shows that $\mathcal{I}_3=O_{P_N}(\delta_N)$, and Step 4 shows that $\mathcal{I}_4=O_{P_N}(\delta_N)$ and $\mathcal{I}_5=O_{P_N}(\delta_N)$.
Because all singular values of $J_0$ are bounded below from zero by Assumption \ref{a:nonlinear_moment_condition} (iii), we have
\begingroup
\allowdisplaybreaks
\begin{align*}
\left\|J_{0}^{-1} \sqrt{N} \Enk\left[\psi\left(W ; \theta_{0}, \eta_{0}\right)\right]+\sqrt{N}(\tilde{\theta}-\theta_{0})\right\| =O_{P_{N}}\left(\delta_N\right) .
\end{align*}
\endgroup
The asserted claim now follows by multiplying both parts of the display by $\sigma^{-1}$ (inside the norm on the left-hand side) and noting that singular values of $\sigma^2$ are bounded below from zero by Assumption \ref{a:regularity_nuisance_parameters}.

\bigskip
\noindent \textbf{Step 4.}
Here in this step, we derive  bounds on $\mathcal{I}_4$ and $\mathcal I_5$. 
First, observe that with probability $1-o(1)$,
\begin{align*}
\mathcal{I}_4\leq \sqrt{N}B_{2,N}\sup_{\theta\in\Theta,\eta\in\mathcal{T}_N}\Big(\|\theta-\theta_{0}\|^2\vee\|\eta-\eta_0\|^2\Big)\lesssim \sqrt{N}B_{1N}^2B_{2N}\tau_N^2\lesssim \delta_N,
\end{align*}
where the first inequality follows from Assumption \ref{a:nonlinear_score_regularity_nuisance_parameters} (iv) and the second from Assumption \ref{a:nonlinear_score_regularity_nuisance_parameters} (v) and Equation (\ref{inequality_theta_bound}), and the last from Assumption \ref{a:nonlinear_score_regularity_nuisance_parameters} (vii) c. Next, we turn to bounding $\mathcal{I}_5$. 
Without loss of generality, let us fix $k$ such that $1,2\notin I_k$ following identical distribution implied by Assumption \ref{a:sampling} (i).
We have
\begin{align*}
\mathcal{I}_{5} \lesssim \sup _{f \in \mathcal{F}_{2}}\left|\mathbb{G}_{I_k}(f)\right|, \quad \mathcal{F}_{2}=\left\{\psi_{j}\left(\cdot; \theta, \hat{\eta}_{k}\right)-\psi_{j}\left(\cdot; \theta_{0}, \eta_{0}\right): j\in[d_\theta],\left\|\theta-\theta_{0}\right\| \lesssim CB_{1N}\tau_{N}\right\}.
\end{align*}
To bound $\sup _{f \in \mathcal{F}_{2}}\left|\mathbb{G}_{I_k}(f)\right|$, we apply Lemma \ref{lemma_maximal_inequality_dyadic_data} conditional on $(W_{ij})_{(i,j) \in \Ikc}$  so that $\hat{\eta}_k$ can be treated as fixed. 
Observe that, with $P_N$-probability $1-o(1)$, 
\begin{align*}
\sup_{f\in\mathcal{F}_2}\|f\|_{P_N,2}
&\leq \sup_{\|\theta-\theta_{0}\| \leq CB_{1N}\tau_{N}, \eta\in\mathcal{T}_N}\Ep\Big[\big\|\psi(W_{12};\theta,\eta)-\psi(W_{12};\theta_0,\eta_0)\big\|^2\Big]^{1/2}
\\
&\leq \sup_{\|\theta-\theta_{0}\| \leq CB_{1N}\tau_{N}, \eta\in\mathcal{T}_N} C_0(\|\theta-\theta_0\|\vee\|\eta-\eta_0\|)
\lesssim B_{1N}\tau_N,
\end{align*}
where the second inequality follows from  Assumption \ref{a:nonlinear_score_regularity_nuisance_parameters} (iv). 
Thus, we apply Lemma \ref{lemma_maximal_inequality_dyadic_data} to the empirical process $\{\mathbb{G}_{I_k}(f),f\in \mathcal F_2\}$ with $n=N$, $\overline{\sigma}_n=CB_{1N}\tau_N$ (where $n=N$), $v=v_N$, $b_n=D_N$, $A=D_N$, 
 an envelope $F_2=2F_1$ for sufficiently large constant $C$, and 
the function classes defined in Lemma \ref{lemma_maximal_inequality_dyadic_data} given by $\mathcal{G}=\{\Ep[f(W_{12})|U_k=\cdot]:k=1,2,f\in\mathcal{F}_2\}$ and $\mathcal{H}=\{\Ep[f(W_{12})|U_1=\cdot,U_2=\cdot]:f\in\mathcal{F}_2\}$ with envelopes $G(u)=\Ep[F_2(W_{12})|U_1=u]\vee \Ep[F_2(W_{12})|U_2=u]$ and $H(u_1,u_2)=\Ep[F_2(W_{12})|U_1=u_1,U_2=u_2]$, respectively. Applying Lemma \ref{lemma_maximal_inequality_dyadic_data}  conditional on $(W_{ij})_{(i,j) \in \Ikc}$ yields that, with $P_N$-probability $1-o(1)$,
\begin{align}
\sup _{f \in \mathcal{F}_{2}}\left|\mathbb{G}_{N}(f)\right| \lesssim B_{1N}\tau_N\sqrt{v_N\log a_N}+N^{-1/2+1/q}v_ND_N\log a_N,
\end{align}
because of $\left\|G\right\|_{P_N, q}\vee \left\|H\right\|_{P_N, q}\le  2 D_N$ following from Jensen's inequality and Assumption \ref{a:nonlinear_score_regularity_nuisance_parameters} (i) and $a_N\geq N\vee D_N$, $v_N\geq 1$, and $\sup_{g\in\mathcal{G}}\|g\|_{P_N,2}\vee \sup_{h\in\mathcal{H}}\|h\|_{P_N,2}\leq CB_{1N}\tau_N$ by Jensen's inequality and Assumption \ref{a:nonlinear_score_regularity_nuisance_parameters} (iv) a and Equation (\ref{inequality_theta_bound}), $\sup_{f\in\mathcal{F}_2}\|f\|_{P_N,2}\leq CB_{1N}\tau_N$,
 and an application of Lemma \ref{lemma_covering_entrpoy} leads to
\begin{align*}
\log \sup _{Q} N\left( \mathcal{F}_{2},\|\cdot\|_{Q, 2},\epsilon\left\|F_{2}\right\|_{Q, 2}\right) \leq 2 v_N \log (2 a_N / \epsilon), \quad \text { for all } 0<\epsilon \leq 1 ,
\end{align*}
where  $\mathcal{F}_{2} \subset \mathcal{F}_{1, \hat{\eta}_{k}}-\mathcal{F}_{1, \eta_{0}}$ with  $\calF_{1,\eta}=\{\psi(\cdot;\theta,\eta):\theta \in \Theta\}$ for fix $\eta$.

\bigskip
\noindent \textbf{Step 5.}
Here in this step, we derive a bound on $\mathcal I_3$ in (\ref{I_3}). Let $\bar{\theta}_0=\theta_0-J_0^{-1}\Enk[\psi(W;\theta_0,\eta_0)]$. Then, $||\bar{\theta}_0-\theta_0||=O_{P_N}(1/\sqrt{N})$ as $\Epn\big[\big\|\sqrt{N} \Enk[\psi\left(W ; \theta_{0}, \eta_{0}\right)]\big\|\big]$ is bounded and the singular values of $J_0$ are bounded below from zero by Assumption \ref{a:nonlinear_moment_condition} (iii). Therefore, $\bar{\theta}_0\in \Theta$ with $P_N$-probability $1-o(1)$ by Assumption \ref{a:nonlinear_moment_condition} (i). 
Hence, with the same probability,
\begin{align*}
\inf _{\theta \in \Theta} \sqrt{N}\left\|\Enk\left[\psi\left(W ; \theta, \hat{\eta}_{k}\right)\right]\right\| 
\leq \sqrt{N}\left\|\Enk\left[\psi\left(W ; \bar{\theta}_{0}, \hat{\eta}_{k}\right)\right]\right\|,
\end{align*}
and so it suffices to show that, with $P_N$-probability $1-o(1)$,
\begin{align*}
\sqrt{N}\left\|\Enk\left[\psi\left(W ; \bar{\theta}_{0}, \hat{\eta}_{k}\right)\right]\right\|=O\left(\delta_N\right).
\end{align*}
To prove this, substitute $\theta=\bar{\theta}_0$ and $\eta=\hat{\eta}_k$ into (\ref{empirical_expansion}), and use the Taylor expansion in (\ref{taylor_expansion}). 
Then, with $P_N$-probability $1-o(1)$, it holds that
\begin{align*}
 \sqrt{N}\left\|\Enk\left[\psi\left(W ; \bar{\theta}_{0}, \hat{\eta}_{k}\right)\right]\right\| 
 &
 \leq \widetilde{\mathcal{I}}_4+\widetilde{\mathcal{I}}_5+\sqrt{N}\big|\Enk[\psi(W_{ij};\theta_0,\eta_0)]+J_0(\bar{\theta}_0-\theta_0)+D_0[\hat{\eta}-\eta_0]\big|
,
\end{align*}
where $\widetilde{\mathcal{I}}_4$ and $\widetilde{\mathcal{I}}_5$ follow the definitions of $\mathcal{I}_4$ and $\mathcal{I}_5$ in (\ref{I_4}) and (\ref{I_5}) with $\theta=\bar{\theta}_0$. As $\Enk[\psi(W_{ij};\theta_0,\eta_0)]+J_0(\bar{\theta}_0-\theta_0)=0$ by the definition of $\bar{\theta}_0$,
combining this with the bounds on $\mathcal I_4$ and $\mathcal I_5$ derived above gives the claim of this step and completes the proof for the lemma.
\end{proof}

\subsection{Proof of Lemma \ref{lemma_logistic_lasso}}\label{sec:lemma_logistic_lasso}
\begin{proof}
Recall that $\theta$ is a scalar.
Observe that the score $\psi$ is nonlinear in $\theta$:
\begin{align}
\psi(W;\theta,\eta)=\{Y_{ij}-\Lambda(D_{ij}\theta+X_{ij}'\beta)\}(D_{ij}-X_{ij}' \gamma),
\end{align}
where the nuisance parameter is $\eta=(\beta',\gamma')'$.

We will make use of the short-hand notations of $\Lambda_{ij,0}$ for $\Lambda(D_{ij}\theta_{0}+X_{ij}'\beta_0)$, $\Lambda_{ij,0}^{(1)}$ for $\Lambda^{(1)}(D_{ij}\theta_0+X_{ij}'\beta_0)$, 
 and
 $\Xi_{ij}(\theta,\beta,r)$ for $D_{ij}\theta+X_{ij}'\big(\beta_0+r(\beta-\beta_0)\big)$, where $r\in (0,1)$.
We split the proof into eleven steps.

\bigskip
\noindent \textbf{Step 1.} We first verify the Assumption \ref{a:nonlinear_moment_condition} (iv). 
We have that $\Ep[\psi(W;\theta_{0},\eta_{0})]=0$ by Equation (\ref{IV_estimation_1}). 
Also, Equation (\ref{orthogonality_equation_logit}) indicates that the moment conditions are locally insensitive to the nuisance parameter $\beta$. Hence, it suffices to show that the moment conditions are locally insensitive to $\gamma$.
For any $\eta=(\beta',\gamma')'\in \mathcal{T}_N$, the Gateaux derivative in the direction $\eta-\eta_{0}=(\beta'-\beta'_0, \gamma'-\gamma'_0)'$ is given by 
\begingroup
\allowdisplaybreaks
\begin{align*}
& |\partial_{\eta} \Ep\psi(W_{ij};\theta_{0},\eta_{0})[\eta-\eta_{0}]| \\
\leq &\Big|\Ep\big[X_{ij}'(Y_{ij}-\Lambda_{ij,0})\big](\gamma-\gamma_0)\Big|+\Big|\Ep\big[X_{ij}'Z_{ij}\Lambda^{(1)}_{ij,0} \big](\beta-\beta_0)\Big|
=0,
\end{align*}
\endgroup
where the first inequality follows from the triangle inequality, and the equality follows from Equation (\ref{orthogonality_equation_logit}) and the law of iterated expectations, as $\Ep[X_{ij}'Z_{ij}\Lambda^{(1)}_{ij,0} ]=\Ep[f_{ij} X_{ij}V_{ij}]=0$ by Equation (\ref{decomposition}). 
This yields Assumption \ref{a:nonlinear_moment_condition} (iv) with $\lambda_N=0$.

\bigskip
\noindent \textbf{Step 2.}
Let us verify Assumption \ref{a:nonlinear_moment_condition} (iii).
Note that  for any $\theta\in\Theta$
\begin{align}
\label{equation_first_order_score}
\Ep[\partial_{\theta}\psi(W_{ij};\theta,\eta_0)]&=\Ep\big[-\Lambda^{(1)}(D_{ij}\theta+X_{ij}'\beta)D_{ij}Z_{ij}\big]\leq \Big(\Ep\big[D_{ij}^2\big]\Ep\big[Z_{ij}^2\big]\Big)^{1/2}
\\ \nonumber
&\leq \Big(\Ep\big[D_{ij}^4\big]\Big)^{1/4}\Big(\Ep\big[Z_{ij}^4\big]\Big)^{1/4}\leq C_1,
\end{align}
where the first inequality follows from Cauchy-Schwarz inequality and $|\Lambda^{(1)}(t)|\leq 1$ for all $t\in\mathbb{R}$, the second inequality follows from Jensen's inequality, and the last inequality follows from  Assumption \ref{a:covariates}(iii), as $\|\gamma_0\|\leq C_1$ by Assumption \ref{a:parameter}. 
Note that 
\begingroup
\allowdisplaybreaks
\begin{align*}
|J_0|=\big|\partial_{\theta}\{\Ep[\psi(W_{ij};\theta,\eta_{0})]\}|_{\theta=\theta_{0}}\big|
=\big|\Ep\big[ \Lambda_{ij,0}^{(1)}D_{ij}Z_{ij}\big]\big|
=\big|\Ep\big[f_{ij}^2D_{ij}Z_{ij}\big]\big|
=\big|\Ep\big[f_{ij}^2Z_{ij}^2\big]\big|
\gtrsim 1,
\end{align*}
\endgroup
where the second equality follows by applying Corollary 5.9 in \citet*{bartle2014elements}, as the derivative $\{\partial_{\theta}\psi(w;\theta,\eta_0):\theta\in\Theta\}$ exists and is bounded by an integrable function $w\mapsto d(d-x'\gamma_0)$ by Equation (\ref{equation_first_order_score}), the last equality follows from $\Ep[f_{ij}X_{ij}V_{ij}]=0$,
and the last inequality follows form Assumption \ref{a:covariates} (i) and Jensen's inequality, as $\|\gamma_0\|\leq C_1$ by Assumption \ref{a:parameter}.
Also 
$|J_0|=\big|\Ep\big[f_{ij}^2Z_{ij}^2\big]\big|\leq \Ep[Z_{ij}^2]\leq \big(\Ep\big[Z_{ij}^4\big]\big)^{1/2}\lesssim 1$ where the second inequality follows from Jensen's inequality and the last follows from Assumption \ref{a:covariates} (iii), as $\|\gamma_0\|\leq C_1$ by Assumption \ref{a:parameter}.
In addition, 
\begin{align*}
\Ep[\psi(W_{ij};\theta,\eta_0)]=J_0(\theta-\theta_0)+\frac{1}{2}\partial_\theta^2\{\Ep[\psi(W_{ij};\theta,\eta_0)]\}|_{\theta=\bar{\theta}}(\theta-\theta_0)^2,
\end{align*}
where $\bar{\theta}$ is between $\theta$ and $\theta_0$. Notice that  for any $\theta\in\Theta$
\begin{align}
\label{equation_second_derivative}
\Ep\big[\partial_{\theta}^2\psi(W_{ij};\theta,\eta_0)\big]
&=\Ep\big[-\Lambda^{(2)}(D_{ij}\theta+X_{ij}'\gamma_0)D_{ij}^2Z_{ij}\big]\leq \Ep\big[\big|D_{ij}^2Z_{ij}\big|\big]
\\
&\leq \Big(\Ep\big[D_{ij}^4\big]\Ep\big[Z_{ij}^2\big]\Big)^{1/2}\leq \big(\Ep\big[D_{ij}^4\big]\big)^{1/2}\big(\Ep\big[Z_{ij}^4\big]\big)^{1/4}\leq C_1,
\end{align}
where the first inequality follows from the fact that $|\Lambda^{(2)}(t)|\leq 1$ for all $t\in\mathbb{R}$, the second follows from Cauchy-Schwarz inequality, the third inequality follows from Jensen's inequality, the last follows from Assumption \ref{a:covariates} (iii), as $\|\gamma_0\|\leq C_1$ by Assumption \ref{a:parameter}. The existence of $\Ep\big[\partial_{\theta}^2\psi(W_{ij};\theta,\eta_0)\big]$ implies that $\{\partial_{\theta}^2\psi(w;\theta,\eta_0):\theta\in\Theta\}$ is bounded by an integrable function $w\mapsto d^2(d-x'\gamma_0)$ by Equation (\ref{equation_second_derivative}). Notice that $\psi(w;\theta,\eta)$ is twice continuously differentiable with respect to $\theta$ on $\Theta$ and both the first and second derivative are bounded by  integrable functions. By applying Corollary 5.9 in \citet*{bartle2014elements}
 we have
\begin{align*}
\partial_\theta^2\{\Ep[\psi(W_{ij};\theta,\eta_0)]\}|_{\theta=\bar{\theta}}=\Ep\big[\partial_\theta^2\psi(W_{ij};\theta,\eta_0)\big]|_{\theta=\bar{\theta}}\lesssim 1.
\end{align*}
This yields Assumption \ref{a:nonlinear_moment_condition} (iii). 
Given that Assumption \ref{a:nonlinear_moment_condition} (i)(ii) holds trivially, Step 1 and Step 2 show that all conditions of Assumption \ref{a:nonlinear_moment_condition} hold.

\bigskip
\noindent \textbf{Step 3.}
Next, we verify Assumption \ref{a:nonlinear_score_regularity_nuisance_parameters} (i) with $v_N=Cs_N$ and $D_N=M_N$ for a sufficient large $C$. Recall that $w=(y,d,x')'$. Define $\tau_N=C(s_N\log a_N/N)^{1/2}$ and 
\begin{align*}
\mathcal{G}_1=&\big\{w\mapsto d-x'\gamma_0\big\},
\\
\mathcal{G}_2=&\big\{w\mapsto (d,x')\xi: \xi\in\mathbb{R}^{p+1}, \|\xi\|_0\leq Cs_N,\|\xi\|\leq C\big\},
\\
\mathcal{G}_3=&\big\{w\mapsto\xi d+x'\beta_0: |\xi|\leq C\big\},
\\
T_N=&\big\{\eta=\big(\eta^{(1)},\eta^{(2)}\big):\eta^{(1)}\in\mathbb{R}^{p},\eta^{(2)}\in\mathbb{R}^{p}\big\},
\\
\mathcal{T}_N=&\{\eta_0\}\cup\big\{\eta=\big(\eta^{(1)},\eta^{(2)}\big)\in T_N:\|\eta^{(1)}\|_0\vee\|\eta^{(2)}\|_0\leq Cs_N,
\\
&\|\eta^{(1)}-\beta_0\|\vee\|\eta^{(2)}-\gamma_0\|\leq \tau_N,\|\eta^{(2)}-\gamma_0\|_1\leq C\sqrt{s_N}\tau_N\big\}
\end{align*}
for sufficiently large $C$. 
Moreover, define
\begin{align*}
\mathcal{F}_{1,1}=&\big\{w\mapsto \psi(w;\theta,\eta): \theta\in\Theta, \eta\in\mathcal{T}_N\setminus \{\eta_0\}
\big\},
\\
\mathcal{F}_{1,2}=&\big\{w\mapsto \psi(w;\theta,\eta_0): \theta\in\Theta\big\}.
\end{align*}
Then, $\mathcal{F}_1=\mathcal{F}_{1,1}\cup\mathcal{F}_{1,2}$, and 
\begin{align*}
&\mathcal{F}_{1,1} \subset\big(y-\Lambda(\mathcal{G}_{2})\big) \cdot \mathcal{G}_{2}, \\
&\mathcal{F}_{1,2} \subset\big(y-\Lambda( \mathcal{G}_{3})\big) \cdot \mathcal{G}_{1}.
\end{align*}
Pointwise measurability of these classes follows from their continuity.
Observe that $\mathcal{G}_1$ consists of a single function, which implies  that $\mathcal{G}_1$ is trivially a VC-subgraph class, and 
$\mathcal{G}_3$ is a sum of a single function and 
subset of one dimensional vector space spanned by a single function, which means $\mathcal{G}_3$ is also a VC-subgraph class implied by Example 19.17 in \cite{van1996weak}. By Theorem 2.6.7 in \cite{van1996weak}, the uniform entropy numbers obey
\begin{align}
\sup _{Q} \log N\left(\mathcal{G}_{i},\|\cdot\|_{Q, 2},\varepsilon\left\|\widetilde{G}_{i}\right\|_{Q, 2}\right) \leqslant C \log (C / \varepsilon), \quad \text { for all } 0<\varepsilon \leq 1 \text{ and } i=1,3,
\end{align}
where $\widetilde{G}_1=|d-x'\gamma_0|$ and $\widetilde{G}_3=\sup_{|\xi|\leq C}|\xi d+x'\beta_0|$  are their envelopes, and both $\widetilde{G}_1$ and $\widetilde{G}_3$ are integrable by Assumption \ref{a:covariates} (iii).  $\mathcal{G}_2$ is a union over 
${p+1\choose Cs_N}$
VC-subgraph classes with indices $O(s_N)$, since for a large $p$ such that $p\gg s_N$, ${p\choose s_N}=O(p^{s_N})$. 
Therefore, $\Lambda(\mathcal{G}_2)$ is too. 
By Lemma \ref{lemma_covering_entrpoy} (i),
\begin{align*}
\sup _{Q} \log N\left(\mathcal{F}_{1,1},\|\cdot\|_{Q, 2},\varepsilon\left\|\widetilde{F}_{1,1}\right\|_{Q, 2}\right) \leqslant C s_N\log (a_N / \varepsilon), \quad \text { for all } 0<\varepsilon \leq 1,
\end{align*}
where the envelope of $\mathcal{F}_{1,1}$ is given by
\begin{align*}
\widetilde{F}_{1,1}(W)=\sup_{\gamma\in\mathbb{R}^{p}:\|\gamma-\gamma_0\|_1\leq C\sqrt{s_N}\tau_N}2|D_{ij}-X_{ij}'\gamma|.
\end{align*}
Observe that $|D_{ij}-X_{ij}'\gamma|\leq |D_{ij}-X_{ij}'\gamma_0|+|X_{ij}'(\gamma-\gamma_0)|\leq |D_{ij}-X_{ij}'\gamma_0|+\|X_{ij}\|_\infty\|\gamma-\gamma_0\|_1\leq |D_{ij}-X_{ij}'\gamma_0|+\|X_{ij}\|_\infty C\sqrt{s_N}\tau_N$.  Using this and Jensen's inequality, we have
\begin{align*}
\|\widetilde{F}_{1,1}\|_{P,q}&\leq M_N+M_N C\sqrt{s_N}\tau_N\lesssim M_N.
\end{align*}
Also, we have
\begin{align*}
\sup _{Q} \log N\left(\mathcal{F}_{1,2},\|\cdot\|_{Q, 2},\varepsilon\left\|\widetilde{F}_{1,2}\right\|_{Q, 2}\right) \leqslant C s_N\log (a_N / \varepsilon), \quad \text { for all } 0<\varepsilon \leq 1,
\end{align*}
where 
$
\widetilde{F}_{1,2}(W)=2|Z_{ij}|
$. Then we have $\|\widetilde{F}_{1,2}\|_{P,q} \lesssim M_N$.
Therefore, by applying Lemma \ref{lemma_covering_entrpoy}, Assumption \ref{a:nonlinear_score_regularity_nuisance_parameters} (i) holds with envelope $F_1=\widetilde{F}_{1,1}\vee\widetilde{F}_{1,2}$ and $D_N=M_N$.

\bigskip
\noindent \textbf{Step 4.}  Next, we verify Assumption \ref{a:nonlinear_score_regularity_nuisance_parameters} (ii). Let us define $\mathcal{T}_N^{(2)}$ as $p+1$ to $2p$ dimensions of $\mathcal{T}_N$.
Notice that we have
\begin{align*}
\Ep\big[\psi(W_{ij};\theta,\eta)^2\big]&=\Ep\Big[\big\{Y_{ij}-\Lambda(D_{ij}\theta+X_{ij}'\beta)\big\}^2(D_{ij}-X_{ij}' \gamma)^2\Big]
\\
&\leq \big(\Ep\big[(D_{ij}-X_{ij}' \gamma)^4\big]\big)^{1/2}\lesssim 1,
\end{align*}
where the first inequality follows from Jensen's inequality and the fact that $|\Lambda(t)|\leq 1$ for all $t\in\mathbb{R}$, and the last follows from Assumption \ref{a:covariates} (iii), as $\|\gamma\|\leq \|\gamma_0\|+\sup_{\gamma\in \mathcal{T}_N^{(2)}}\|\gamma-\gamma_0\|\leq C_1+\tau_N\lesssim 1$ by Assumption \ref{a:sparsity}. Then, we have
\begin{align*}
\Ep\big[\psi(W_{ij};\theta,\eta)^2\big]&=\Ep\Big[\Ep[\big\{Y_{ij}-\Lambda(D_{ij}\theta+X_{ij}'\beta)\big\}^2|D,X](D_{ij}-X_{ij}' \gamma)^2\Big]
\\
&\geq \Ep\Big[f_{ij}^2(D_{ij}-X_{ij}' \gamma)^2\Big]\geq c_1,
\end{align*}
where the first equality follows from the law of iterated expectations and the first inequality from the fact that $f_{ij}^2=\var(Y_{ij}|D_{ij},X_{ij})$ is obtained when $\Lambda(D_{ij}\theta+X_{ij}'\beta)=E[Y_{ij}]$, and the last from Assumption \ref{a:covariates} (i) and Jensen's inequality.

\bigskip
\noindent \textbf{Step 5.} 
Next, we verify Assumption \ref{a:nonlinear_score_regularity_nuisance_parameters} (iii). Notice that Assumptions \ref{a:RE},
\ref{a:sparsity}, \ref{a:parameter}, and \ref{a:covariates} imply Assumptions \ref{a2:RE}, \ref{a2:sparsity}, and \ref{a2:covariates}. 
Then by Theorems \ref{theorem_rate_lasso} and \ref{theorem_linear_weight} invoked implied by Assumptions \ref{a:RE}
\ref{a:sparsity}, \ref{a:parameter}, and \ref{a:covariates}, with probability $1-o(1)$, 
\begin{align*}
&\|\hat{\beta}-\beta_0\|\vee \|\tilde{\beta}-\beta_0\|\lesssim \sqrt{s_N\log a_N/N}, \quad \|\hat{\beta}-\beta_0\|_1\vee \|\tilde{\beta}-\beta_0\|_1\lesssim s_N\sqrt{\log a_N/N},
\\
&  \|\hat{\gamma}-\gamma_0\|_1\vee \|\tilde{\gamma}-\gamma_0\|_1\lesssim s_N\sqrt{\log a_N/N}.
\end{align*}
$\eta_0\in\mathcal{T}_N$ holds by construction of $\mathcal{T}_N$.
Therefore, Assumption \ref{a:nonlinear_score_regularity_nuisance_parameters}(iii) holds.

\bigskip
\noindent \textbf{Step 6.} 
Now, we verify  Assumption \ref{a:nonlinear_score_regularity_nuisance_parameters} (iv) a. Define 
\begingroup
\allowdisplaybreaks
\begin{align*}
I_{1,1}&=2|X_{ij}'(\gamma-\gamma_0)|
\qquad\text{and}\qquad
I_{1,2}=|D_{ij}(\theta-\theta_0)+X_{ij}'(\beta-\beta_0)||Z_{ij}|.
\end{align*}
\endgroup
Then 
$
|\psi(W;\theta,\eta)-\psi(W;\theta_0,\eta_0)|\leq I_{1,1}+I_{1,2},
$
as $|\Lambda^{(1)}(t)|\leq 1$ for all $t\in\mathbb{R}$. 
Also, $\Ep\big[\|\psi(W_{ij};\theta,\eta)-\psi(W_{ij};\theta_0,\eta_0)\|^2\big]\lesssim \Ep\big[I_{1,1}^2\big]+\Ep\big[I_{1,2}^2\big]$.
In addition, it holds that
\begingroup
\allowdisplaybreaks
\begin{align}
\label{equation_matrix_norm}
\Ep\big[I_{1,1}^2\big]&=\Ep\Big[\big(X_{ij}'(\gamma-\gamma_0)\big)^2\Big]
=\Ep\Big[(\gamma-\gamma_0)'X_{ij}X_{ij}'(\gamma-\gamma_0)\Big]
\\&\nonumber
\leq \|\gamma-\gamma_0\|\cdot\|\Ep[X_{ij}X_{ij}']\|\cdot\|\gamma-\gamma_0\|
\leq \sup_{\|\xi\|=1}\Ep\big[(X_{ij}'\xi)^2\big]\cdot\|\gamma-\gamma_0\|^2
\\&\nonumber
\leq \Big(\sup_{\|\xi\|=1}\Ep\big[(X_{ij}'\xi)^4\big]\Big)^{1/2}\|\gamma-\gamma_0\|^2\lesssim \|\gamma-\gamma_0\|^2
\qquad\text{and}\\
\Ep\big[I_{1,2}^2\big]
&
\nonumber\leq 
\Ep\big[Z_{ij}^2\big]\times \Ep\Big[\big(D_{ij}(\theta-\theta_0)+X_{ij}'(\beta-\beta_0)\big)^2\Big]
\\
\nonumber&\lesssim  \big(\Ep\big[Z_{ij}^4\big]\big)^{1/2}\times \left(\Ep\big[D_{ij}^2\big]|\theta-\theta_0|^2+ \big\|\Ep[X_{ij}X_{ij}']\big\|\|\beta-\beta_0\|\right)
\lesssim |\theta-\theta_0|^2+\|\beta-\beta_0\|^2,
\end{align}
\endgroup
where the first inequality follows from Cauchy-Schwarz inequality, the third inequality follows from the Jensen's inequality, the fourth inequality follows from Assumption \ref{a:covariates} (iii),
the fourth line follows from Cauchy-Schwarz inequality, the first inequality in the last line follows from Jensen's inequality and Cauchy-Schwarz inequality, the last inequality follows from  Assumption \ref{a:covariates}  (iii) as $\|\gamma_0\|\lesssim 1$, and  the same argument as in (\ref{equation_matrix_norm}) to obtain the bound for $\big\|\Ep[X_{ij}X_{ij}']\big\|$. 
Therefore, Assumption \ref{a:nonlinear_score_regularity_nuisance_parameters}
(iv) a holds.

\bigskip
\noindent \textbf{Step 7.}
We verify Assumption \ref{a:nonlinear_score_regularity_nuisance_parameters} (iv) b  in this step. Let us treat $B_{1N}$ as a constant. By applying Corollary 5.9 in \citet*{bartle2014elements}, we have
\begin{align*}
\partial_r\Ep\big[\psi\big(W_{12};\theta,\eta_0+r(\eta-\eta_0)\big)\big]=\Ep\big[\partial_r \psi\big(W_{12};\theta,\eta_0+r(\eta-\eta_0)\big)\big].
\end{align*}
For any $r\in [0,1)$, let us define
\begingroup
\allowdisplaybreaks
\begin{align*}
&I_{2,1}=-X_{ij}'(Y_{ij}-\Lambda(\Xi_{ij}(\theta,\beta,r)))(\gamma-\gamma_0)
\qquad\text{and}\\
&I_{2,2}=-\Lambda^{(1)}(\Xi_{ij}(\theta,\beta,r))\big(Z_{ij}-rX_{ij}'(\gamma-\gamma_0)\big)X_{ij}'(\beta-\beta_0).
\end{align*}
\endgroup
Then, $\partial_r \psi(W;\theta,\eta_0+r(\eta-\eta_0))=I_{2,1}+I_{2,2}$ and $\Ep[\partial_r \psi(W;\theta,\eta_0+r(\eta-\eta_0))]=\Ep[I_{2,1}]+\Ep[I_{2,2}]$.
Observe that,
\begin{align*}
\Ep[|I_{2,1}|]&\leq \Ep\big[|X_{ij}'(\gamma-\gamma_0)|\big]
\leq \Big(\Ep\big[\big(X_{ij}'(\gamma-\gamma_0)\big)^2\big]\Big)^{1/2} =\Big(\Ep\big[(\gamma-\gamma_0)'X_{ij}X_{ij}'(\gamma-\gamma_0)\big]\Big)^{1/2} 
\\
&\leq \Big(\|\gamma-\gamma_0\|\cdot\|\Ep[X_{ij}X_{ij}']\|\cdot\|\gamma-\gamma_0\|\Big)^{1/2}\leq \|\gamma-\gamma_0\|\cdot \Big(\sup_{\|\xi\|=1}\Ep\big[(X_{ij}'\xi)^2\big]\Big)^{1/2}
\\
&\leq \|\gamma-\gamma_0\|\cdot \Big(\sup_{\|\xi\|=1}\Ep\big[(X_{ij}'\xi)^4\big]\Big)^{1/4}\lesssim \|\gamma-\gamma_0\|,
\end{align*}
where the first inequality follows from $|Y_{ij}-\Lambda(\Xi_{ij}(\theta,\beta,r))|\leq 1$,
the second follows from Jensen's inequality,
the third inequality follows from Cauchy-schwarz inequality, the last line follows from Jensen's inequality, and the last inequality follows from Assumption \ref{a:covariates} (iii).

Since $\|\gamma_0\|\leq C_1$ by Assumption \ref{a:parameter}, we have
 $\Ep\big[Z_{ij}^4\big]=\Ep\big[(D_{ij}-X_{ij}'\gamma_0)^4\big]\lesssim 1$ by Assumption \ref{a:covariates}(iii). Based on this, we obtain
\begin{align*}
\Ep[|I_{2,2}|]
&\leq
\Ep\Big[\big|Z_{ij}X_{ij}'(\beta-\beta_0)\big|\Big]+\Ep\Big[\big|X_{ij}'(\gamma-\gamma_0)X_{ij}'(\beta-\beta_0)\big|\Big]
\\
&\leq \left(\Ep\Big[Z_{ij}^2\Big]\cdot \Ep\Big[\big(X_{ij}'(\beta-\beta_0)\big)^2\Big]\right)^{1/2}+\left(\Ep\Big[\big(X_{ij}'(\gamma-\gamma_0)\big)^2\Big]\cdot\Ep\Big[\big(X_{ij}'(\beta-\beta_0)\big)^2\Big] \right)^{1/2}
\\
&\lesssim \left(\Ep\big[Z_{ij}^4\big]\right)^{1/4}\cdot \left(\Ep\Big[\big(X_{ij}'(\beta-\beta_0)\big)^2\Big]\right)^{1/2}+\left(\Ep\Big[\big(X_{ij}'(\gamma-\gamma_0)\big)^2\Big]\right)^{1/2}\cdot\left(\Ep\Big[\big(X_{ij}'(\beta-\beta_0)\big)^2\Big] \right)^{1/2}
\\
&\lesssim \|\beta-\beta_0\|+\|\gamma-\gamma_0\|\|\beta-\beta_0\|\lesssim \|\beta-\beta_0\|,
\end{align*}
where the first inequality follows from linearity of the expectation operator and  $|\Lambda^{(1)}(t)|\leq 1$ for all $t\in \mathbb{R}$, the second follows from Cauchy-schwarz inequality, the third follows from Jensen's inequality, 
the fourth follows because $\left(\Ep\big[Z_{ij}^4\big]\right)^{1/4}=O(1)$ by the above statement,
 and the last inequality follows because $\sup_{\gamma\in \mathcal{T}_N^{(2)}}\|\gamma-\gamma_0\|\leq C_1(s_N\log a_N/N)^{1/2}=o(1)$ by Assumptions \ref{a:sparsity} and \ref{a:covariates} (iv). 
Therefore, Assumption \ref{a:nonlinear_score_regularity_nuisance_parameters} (iv) b holds.

\bigskip
\noindent \textbf{Step 8.} We verify  Assumption \ref{a:nonlinear_score_regularity_nuisance_parameters} (iv) c in this step. Let us treat $B_{2N}$ as a constant.
 For any $\eta=(\beta',\gamma')'\in\mathcal{T}_N$ and $r\in [0,1)$, let us first consider the terms in $\Ep[\partial_r^2 \psi(W;\theta_0+r(\theta-\theta_0),\eta_0+r(\eta-\eta_0))]$. 
Define 
\begingroup
\allowdisplaybreaks
\begin{align*}
I_{3,1}=&X_{ij}'(\gamma-\gamma_0)\Lambda^{(1)}(\Xi_{ij}(\theta,\beta,r))X_{ij}'(\beta-\beta_0),
\\
I_{3,2}=&-\Lambda^{(2)}(\Xi_{ij}(\theta,\beta,r))(Z_{ij}-rX_{ij}'(\gamma-\gamma_0))(X_{ij}'(\beta-\beta_0))^2+\Lambda^{(1)}(\Xi_{ij}(\theta,\beta,r))X_{ij}'(\gamma-\gamma_0)X_{ij}'(\beta-\beta_0).
\end{align*}
\endgroup
Then, $\partial_r^2\psi(W;\theta,\eta_0+r(\eta-\eta_0))=I_{3,1}+I_{3,2}$ and $\partial_r^2\Ep[\psi(W;\theta,\eta_0+r(\eta-\eta_0))]=\Ep[I_{3,1}]+\Ep[I_{3,2}]$.
Observe that
\begingroup
\allowdisplaybreaks
\begin{align*}
\Ep[|I_{3,1}|]&\leq \Big(\Ep\big[\big(X_{ij}'(\gamma-\gamma_0)\big)^2\big]\times\Ep\big[\big(X_{ij}'(\beta-\beta_0)\big)^2\big]\Big)^{1/2}
\\
&\leq\big\|\Ep[X_{ij}X_{ij}']\big\|\|\gamma-\gamma_0\|\|\beta-\beta_0\|
\lesssim \|\eta-\eta_0\|^2,
\\
\Ep[|I_{3,2}|]&\leq \Ep\Big[Z_{ij}\big(X_{ij}'(\beta-\beta_0)\big)^2\Big]+\Ep\Big[X_{ij}'(\gamma-\gamma_0)\big(X_{ij}'(\beta-\beta_0)\big)^2\Big]
\\
&+\Big(\Ep\big[\big(X_{ij}'(\gamma-\gamma_0)\big)^2\big]\times\Ep\big[\big(X_{ij}'(\beta-\beta_0)\big)^2\big]\Big)^{1/2}
\\
&\leq 
\Big(\Ep[Z_{ij}^2]\cdot \Ep\big[\big(X_{ij}'(\beta-\beta_0)\big)^4\big]\Big)^{1/2}+\Big(\Ep\big[\big(X_{ij}'(\gamma-\gamma_0)\big)^2\big]\cdot \Ep\big[\big(X_{ij}'(\beta-\beta_0)\big)^4\big]\Big)^{1/2}+\|\eta-\eta_0\|^2
\\
& 
\lesssim \|\eta-\eta_0\|^2,
\end{align*}
\endgroup
where the first and second inequalities follow from Cauchy-Schwarz inequality and $|\Lambda^{(1)}(t)|\leq 1$ for all $t\in \mathbb{R}$, 
the last inequality in the second line follows from the same argument as in (\ref{equation_matrix_norm}) to obtain the bound for $\big\|\Ep[X_{ij}X_{ij}']\big\|$, 
 the third line follows from Cauchy-Schwarz inequality and $|\Lambda^{(2)}(t)|\leq 1$ for all $t\in\mathbb{R}$, and the bound of the first two terms for $\Ep[|I_{3,2}|]$ follows by using the same technique as the one used for the bound for $\Ep[|I_{2,2}|]$ in step 7, and the bound of the last term for $\Ep[|I_{3,2}|]$ is same as the bound for $\Ep[|I_{3,1}|]$. Also, the terms in $\Ep[\partial_r^2 \psi(W;\theta_0+r(\theta-\theta_0),\eta_0+r(\eta-\eta_0))]$ can be bounded similarly. 
Note that $B_{2N}$ is set to some appropriately large constant.
This verifies Assumption \ref{a:nonlinear_score_regularity_nuisance_parameters}
(iv).

\bigskip
\noindent \textbf{Step 9.} 
Assumption \ref{a:nonlinear_score_regularity_nuisance_parameters} (v) follows directly from Step 5.

\bigskip
\noindent \textbf{Step 10.} 
Assumption \ref{a:nonlinear_score_regularity_nuisance_parameters} (vi) is assumed in Assumption \ref{a:eigenvalue_logistic}.

\bigskip
\noindent \textbf{Step 11.} 
Finally, we verify Assumption \ref{a:nonlinear_score_regularity_nuisance_parameters} (vii).  
Condition (a) follows from construction of $\tau_N$ and $v_N$. Condition (b) holds since $N^{-1/2+1/q}v_ND_N\log a_N\lesssim N^{-1/2+1/q}s_NM_N\log a_N\lesssim \delta_N$ by Assumption \ref{a:covariates} (iv).  Condition (c) holds since $N^{1/2}B_{1N}^2B_{2N}\tau_N^2\lesssim N^{-1/2}s_N\log a_N\lesssim \delta_N$ by Assumption \ref{a:covariates} (iv). This concludes the proof.

\end{proof}

\subsection{Proof of Theorem \ref{theorem_logit}}\label{sec:theorem_logit}
\begin{proof}
Observe that the statement of Theorem \ref{theorem_logit} follows from Theorem \ref{theorem_DDML_non_linear} as soon as we can verify Assumption \ref{a:nonlinear_moment_condition} and \ref{a:nonlinear_score_regularity_nuisance_parameters} hold for the logistic case. 
This follows directly from Lemma \ref{lemma_logistic_lasso}.

\end{proof}

\section{Auxiliary Results for the Application to Logistic Regressions}
\label{lasso_converge_rate}
\subsection{Theory for Lasso Logistic Regression}

In this section, we establish rates of convergence of the Post-Lasso Logistic estimator in Section \ref{model}. 
Let $W_{ij}=(Y_{ij},D_{ij},X_{ij}')'\in [0,1]\times\mathbb{R}\times\mathbb{R}^p$, $\tilde X_{ij}=(D_{ij},X_{ij}')'$ and without loss of generality that $\|\tilde X_{\cdot,k}\|_{2,N}=1$ for all $k\in[p+1]$,  $\mathbb{E}_N[\cdot]=\frac{1}{N(N-1)}\sumij[\cdot]$, and $\Ep[Y_{ij}|D_{ij},X_{ij}]=\Lambda(\tilde X_{ij}'\xi_0)$ for all $(i,j)\in\overline{[N]^2}$, where $\xi_0=(\theta_0,\beta_0')'$.  Define the nuisance parameter $\gamma_0$ by
\begin{align*}
f_{ij}D_{ij}=f_{ij}X_{ij}' \gamma_{0} + V_{ij}, \quad \Ep[f_{ij}X_{ij}'V_{ij}]=0,
\end{align*}
where $f_{ij}^2=\Lambda(\tilde X_{ij}'\xi_0)\big(1-\Lambda(\tilde X_{ij}'\xi_0)\big)$.
The Lasso-Logistic estimator $(\hat{\theta}, \hat{\beta})$ is defined by 
\begin{align}
\label{lasso_estimator}
(\hat{\theta},\hat{\beta}') '\in \arg \min _{\theta,\beta} \mathbb{E}_N[ L(W_{ij};\theta, \beta)]+\frac{\lambda_1}{N}\| (\theta,\beta')\|_{1}.
\end{align}
The weighted-Lasso estimator $\hat\gamma$ is defined by 
\begin{align}
\hat{\gamma} \in \arg \min _{\gamma}\left(\frac{1}{2} \mathbb{E}_{N}\left[\widehat{f}_{ij}^{2}\left(D_{ij}-X_{ij}' \gamma\right)^{2}\right]+\frac{\lambda_2}{N}\left\| \gamma\right\|_{1}\right),
\end{align}
where $\hat{f}_{ij}^{2}=\Lambda(\tilde X_{ij}'\hat\xi)\big(1-\Lambda(\tilde X_{ij}'\hat\xi)\big)$. Denote $Z_{ij}=D_{ij}-X_{ij}\gamma_0'$.

For any $T\subset [p]$, $\delta=(\delta_1,...,\delta_{p+1})'\in \Real^{p+1}$, denote $\delta_T=(\delta_{T,1},...,\delta_{T,p+1})'$ with $\delta_{T,j}=\delta_j$ if $j\in T$ and $\delta_{T,j}=0$ if $j\not \in T$. for the vector that Define the minimum and maximum sparse eigenvalues by 
\begin{align*}
\phi_{\min}(m)=
\inf _{\|\delta\|_0 \le m} \frac{\|f_{ij}\tilde X_{ij}'\delta\|_{2,N}}{\|\delta_{T}\|_1}, \quad \phi_{\max}(m)=
\sup _{\|\delta\|_0 \le m} \frac{\|f_{ij}\tilde X_{ij}'\delta\|_{2,N}}{\|\delta_{T}\|_1}.
\end{align*}
We state the following assumptions on Lasso Logit model.
\begin{assumption}(Sparse eigenvalue conditions).
\label{a2:RE}
The sparse eigenvalue conditions hold with probability $1-o(1)$, namely, for some $\ell_N\to \infty$ slow enough, we have
\begin{align*}
1\lesssim  \phi_{\min}(\ell_N s_N)\le
 \phi_{\max}(\ell_N s_N)\lesssim 1.
\end{align*}
\end{assumption}

\begin{assumption}(Sparsity).
\label{a2:sparsity}
There exists $s_N$, such that  $\left\|\theta_0\right\|_{0}+\left\|\beta_0\right\|_{0}+\max_{k\in[p]}\|\gamma_0^k\|_{0} \leqslant s_{N}$.
\end{assumption}
Let $M_{N}$ be a sequence of positive constants such that 
$M_{N}\geq \big(\mathrm{E}_{P}\big[(D_{12} \vee\|X_{12}\|_{\infty})^{2 q}\big]\big)^{1 /2 q}$. Also, denote $\Delta_{N}$ for a sequence of positive constants that converges to zero. 
\begin{assumption}(Covariates).
\label{a2:covariates}
 For $q>4$, the following inequalities hold:
\begin{enumerate}[(i)]
\item $\max_{j=1,2}\inf _{\|\xi\|=1} \mathrm{E}_{P}\big[\big(\Ep [f_{12}\left(D_{12}, X_{12}'\right) \mid U_{j}]\xi\big)^{2}\big] \geqslant c_{1}$.
\item $\max_{j=1,2}\min_{l\in[p]} \big(\mathrm{E}_{P}\big[\big(\Ep [f_{12}^{2} Z_{12} X_{12,l}\mid U_j]\big)^{2}\big] \wedge \mathrm{E}_{P}\big[\big(\Ep [f_{12}^{2} D_{12} X_{12,l}\mid U_j]\big)^{2}\big] \big) \geqslant c_{1}$.
\item $\sup _{\|\xi\|=1} \mathrm{E}_{P}\big[\big((D_{12}, X_{12}) \xi\big)^{4}\big] \leqslant C_{1}$.

\item $N^{-1/2+2/q}M_{N}^2s_N\log^2 a_N\leq \Delta_N $.

\end{enumerate}
\end{assumption}

\begin{theorem}(Rates of Convergence)
\label{theorem_rate_lasso}
Suppose that Assumptions \ref{a2:RE}, \ref{a2:sparsity}, and \ref{a2:covariates} hold. In addition, suppose that the penalty choice  $\lambda_1=K_1 \sqrt{N\log (pN)}$ for a $K_1>0$. Then there exist a constant $C$ such that with probability $1-o(1)$,
\begingroup
\allowdisplaybreaks
\begin{align*}
&\|f_{ij}\tilde X_{ij}'(\hat\beta-\beta_0)\|_{2,N}\vee\|f_{ij}\tilde X_{ij}'(\tilde\beta-\beta_0)\|_{2,N}\vee\|\hat\beta-\beta_0\|\vee \|\tilde\beta-\beta_0\|\lesssim\sqrt{\frac{s_N\log(pN)}{N}},
\\
&\|\hat\beta-\beta_0\|_1\vee \|\tilde\beta-\beta_0\|_1\lesssim\sqrt{\frac{s_N^2\log (pN)}{N}}.
\end{align*}
\endgroup
\end{theorem}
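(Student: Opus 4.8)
The plan is to adapt the standard analysis of $\ell_1$-penalized logistic regression (following \citet{BelloniChernozhukovWei2016} and \citet{BCCW18}) to the dyadic-dependent setting, replacing each appeal to i.i.d.\ concentration by the dyadic maximal inequality of Lemma \ref{lemma_maximal_inequality_dyadic_data}. Write $\xi_0=(\theta_0,\beta_0')'$, let $\hat\xi=(\hat\theta,\hat\beta')'$ be the Lasso-Logistic minimizer in \eqref{lasso_estimator}, and set $\hat\delta=\hat\xi-\xi_0$ and $\varepsilon_{ij}=Y_{ij}-\Lambda(\tilde X_{ij}'\xi_0)$, so that the score of the log-likelihood at the truth is $-\mathbb{E}_N[\varepsilon_{ij}\tilde X_{ij}]$. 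The first step is to show that, with the stated choice $\lambda_1\asymp\sqrt{N\log(pN)}$, the penalty dominates this score, i.e.\ $\|\mathbb{E}_N[\varepsilon_{ij}\tilde X_{ij}]\|_\infty\le \lambda_1/(cN)$ on an event of probability $1-o(1)$.

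For this score-control step I would apply Lemma \ref{lemma_maximal_inequality_dyadic_data} to the finite, mean-zero class $\{w\mapsto \varepsilon\,\tilde x_k:k\in[p+1]\}$, whose covering numbers are bounded by its cardinality so that the entropy index may be taken as $v\asymp 1$ and $A\asymp a_N$, with envelope moments controlled by $M_N$ via Assumption \ref{a2:covariates}(iii). Dividing the resulting bound on the expected supremum of the dyadic empirical process by $\sqrt{N}$ yields $\|\mathbb{E}_N[\varepsilon_{ij}\tilde X_{ij}]\|_\infty\lesssim \sqrt{\log(pN)/N}+N^{-(1-1/q)}M_N\log(pN)$, the second (degenerate-kernel) term being negligible relative to $\lambda_1/N$ by Assumption \ref{a2:covariates}(iv). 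On this event the basic inequality coming from optimality of $\hat\xi$, together with the decomposition of $\|\hat\xi\|_1-\|\xi_0\|_1$ over $T=\mathrm{supp}(\xi_0)$ (with $|T|\le s_N$ by Assumption \ref{a2:sparsity}), forces $\hat\delta$ into the restricted cone $\|\hat\delta_{T^c}\|_1\le \bar c\,\|\hat\delta_{T}\|_1$.

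The next step converts the basic inequality into a prediction-norm bound. Here I would invoke the self-concordance of the logistic log-likelihood to lower-bound the Bregman divergence $\mathbb{E}_N[L(\cdot\,;\hat\xi)]-\mathbb{E}_N[L(\cdot\,;\xi_0)]-\mathbb{E}_N[\varepsilon_{ij}\tilde X_{ij}]'\hat\delta$ by $\|f_{ij}\tilde X_{ij}'\hat\delta\|_{2,N}^2$ times a factor that degrades only with $\max_{(i,j)}|\tilde X_{ij}'\hat\delta|$. Combining this with the cone restriction, the sparse eigenvalue condition of Assumption \ref{a2:RE} (which, since $\phi_{\min}(\ell_N s_N)\gtrsim 1$, gives $\|\hat\delta_{T}\|_1\lesssim \sqrt{s_N}\,\|f_{ij}\tilde X_{ij}'\hat\delta\|_{2,N}$), and the score-domination bound, one obtains $\|f_{ij}\tilde X_{ij}'\hat\delta\|_{2,N}^2\lesssim (\lambda_1/N)\sqrt{s_N}\,\|f_{ij}\tilde X_{ij}'\hat\delta\|_{2,N}$, hence the prediction-norm rate $\sqrt{s_N\log(pN)/N}$. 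The $\ell_2$ rate then follows from $\phi_{\min}\gtrsim 1$, and the $\ell_1$ rate from the cone inequality, giving $\sqrt{s_N}\cdot\sqrt{s_N\log(pN)/N}=\sqrt{s_N^2\log(pN)/N}$.

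Finally, to carry the same rates over to the post-Lasso estimator $\tilde\xi$, I would first bound the cardinality of the selected support by $O(s_N)$ using the sparse eigenvalue condition as in the sparsity arguments of \citet{BelloniChernozhukovWei2016}, and then apply an oracle inequality for the unpenalized restricted logistic MLE on that support, re-running steps analogous to those above but without a penalty term. The main obstacle I anticipate is controlling the self-concordance correction $\max_{(i,j)}|\tilde X_{ij}'\hat\delta|$ uniformly over the high-dimensional error direction under dyadic dependence: this requires bounding $\max_{(i,j)}\|\tilde X_{ij}\|_\infty$ by a power of $N$ times $M_N$ and combining it with a preliminary $\ell_1$ bound, and it is precisely the growth restriction in Assumption \ref{a2:covariates}(iv), $N^{-1/2+2/q}M_N^2 s_N\log^2 a_N\le \Delta_N$, that keeps this term asymptotically harmless and closes the argument.
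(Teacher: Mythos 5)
Your overall architecture --- score domination, cone restriction, a self-concordance lower bound on the curvature, sparse eigenvalues, then a sparsity bound plus an oracle inequality for the post-Lasso step --- is essentially the same skeleton as the paper's proof, which simply packages the middle steps by citing Lemmas 1 and 2 of \citet{BelloniChernozhukovWei2016} and verifying their nonlinear impact coefficient condition; your handling of the correction term $\max_{(i,j)}|\tilde X_{ij}'\hat\delta|$ via $\max_{(i,j)}\|\tilde X_{ij}\|_\infty\lesssim_P N^{1/q}M_N$ and Assumption \ref{a2:covariates}(iv) is the same mechanism the paper uses to lower-bound $\bar q$. The step that does not go through as you describe it is score domination, and the gap there is genuine.

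Lemma \ref{lemma_maximal_inequality_dyadic_data} is a bound \emph{in expectation}: it controls $\Ep\big[\sup_f|\Gn f|\big]$ only. From an expectation bound the only route to a probability statement is Markov's inequality, and at the threshold $\lambda_1/(cN)=(K_1/c)\sqrt{\log(pN)/N}$ Markov yields a failure probability bounded by a fixed constant of order $c/K_1$ --- small if $K_1$ is chosen large, but not $o(1)$. To make the failure probability vanish you would have to inflate the threshold, hence $\lambda_1$, hence the final rates, by a factor diverging to infinity. Nor can you rescue the step by a union bound over the $p+1$ coordinates: under Assumption \ref{a2:covariates} the data have only polynomial ($2q$-th) moments, so coordinatewise Chebyshev/Rosenthal tails combined with a union bound over $p$ terms would need a number of moments growing with $\log p$. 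And no Talagrand-type concentration inequality around the mean is available for dyadic empirical processes in this paper, so the expectation bound cannot be upgraded. This is exactly why the paper's Step 3 takes a different route: it invokes the high-dimensional central limit theorem for dyadic data (Theorem 3 of \cite{chiang2020inference}) to approximate the law of $\|\sqrt{N}\nabla L(\theta_0,\beta_0)\|_\infty$ by that of a Gaussian maximum $\|\mathbf{G}\|_\infty$, and then applies Gaussian concentration, whose sub-Gaussian tails allow the failure probability to be driven to zero while the penalty grows only like $\sqrt{\log(p/\zeta)}$. Since the $1-o(1)$ guarantee is what later feeds Assumption \ref{a:nonlinear_score_regularity_nuisance_parameters}(iii) (the nuisance estimator must lie in $\mathcal{T}_N$ with probability $1-\Delta_N$, $\Delta_N\to 0$), a constant failure probability is not sufficient, and your argument needs this distributional-approximation device (or an equivalent concentration result for dyadic sums) to close.
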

\begin{proof}
We will apply Lemmas 1 and 2 in \cite{BelloniChernozhukovWei2016} and then obtain a high-probability bound for $\lambda_{1}$.
First note that by Lemma 2.7 in \cite{LecueMendelson2017}, Assumption \ref{a2:RE} implies that the following restricted eigenvalue condition holds with probability $1-o(1)$:  for $T=\supp(\theta_0,\beta_0)$, $|T|\geq 1$, and $c\geq 1$,
\begin{align*}
\kappa_{c_0}=\inf_{\delta\in\mathfrak D_{c_0}}\frac{\|f_{ij}\tilde X_{ij}'\delta\|_{2,N}}{\|\delta_{T}\|_1}>0
\end{align*}
where $\mathfrak D_{c_0}=\{\delta:\|\delta_{T^c}\|_1\leq c_0 \|\delta_T\|_1\}$, $c_0=(c+1)/(c-1)$.

\noindent \textbf{Step 1.}
For a subset $A\subset \mathbb{R}^{p+1}$, define what is known in the literature as the nonlinear impact coefficient by
\begin{align*}
\bar{q}_A=\inf _{\delta \in A} \frac{\mathbb{E}_{N}\left[f_{ij}^2\left|\tilde{X}_{ij}^{\prime} \delta\right|^{2}\right]^{3 / 2} }{ \mathbb{E}_{N}\left[f_{ij}^2\left|\tilde{X}_{ij}^{\prime} \delta\right|^{3}\right]}.
\end{align*}
To apply their Lemma 1,  we verify the condition $\bar{q}_{\mathfrak D_{c_0}}>3\left(1+\frac{1}{c}\right) \lambda_1 \sqrt{s_N} /\left(n \kappa_{c_0}\right)$ with probability $1=o(1)$. Observe that
\begin{align*}
\bar{q}_{\mathfrak D_{c_0}}=&\inf _{\delta \in \mathfrak D_{c_0}} \frac{\mathbb{E}_{N}\left[f_{ij}^2\left|\tilde{X}_{ij}^{\prime} \delta\right|^{2}\right]^{3 / 2} }{ \mathbb{E}_{N}\left[f_{ij}^2\left|\tilde{X}_{ij}^{\prime} \delta\right|^{3}\right]}
\ge
\inf _{\delta \in \mathfrak D_{c_0}} \frac{\mathbb{E}_{N}\left[w_{ij}\left|\tilde{X}_{ij}^{\prime} \delta\right|^{2}\right]^{1 / 2} }{ \max_{(i,j)\in \overline{[N]^2}}\|\tilde X_{ij}\|_\infty\|\delta\|_1}
\gtrsim_P 
\inf _{\delta \in \mathfrak D_{c_0}} \frac{\mathbb{E}_{N}\left[w_{ij}\left|\tilde{X}_{ij}^{\prime} \delta\right|^{2}\right]^{1 / 2} }{N^{1/q} M_N\|\delta\|_1}\\
\ge& 
\inf _{\delta \in \mathfrak D_{c_0}} \frac{\mathbb{E}_{N}\left[w_{ij}\left|\tilde{X}_{ij}^{\prime} \delta\right|^{2}\right]^{1 / 2} }{N^{1/q} M_N(1+c_0)\sqrt{s_N}\|\delta_T\|}\ge \frac{\kappa_{c_0}}{N^{1/q} M_N(1+c_0)\sqrt{s_N}}\ge \frac{1}{\Delta_N^{1/2} N^{1/4}}\gtrsim \sqrt{\frac{s_N \log a_N}{\Delta_N  N}},
\end{align*}
where we have used restricted eigenvalue condition, $M_{N}s_N/ N^{1/2-2/q}\leq \Delta_N$ and $s_N\log a_N/ N^{1/2}\leq \Delta_N$, both implied by Assumption \ref{a2:covariates} (iv), $\Delta_N=o(1)$, as well as the choice of $\lambda_{1}=(N \log a_N)^{1/2}$.
Now, we can invoke the first part of Lemma 1 in \cite{BelloniChernozhukovWei2016} and obtain
\begin{align*}
\|f_{ij}\tilde X_{ij}'(\hat \xi-\xi_0)\|_{2,N}	=O\left(\frac{\lambda_1 \sqrt{s_N}}{N}\right),\quad \|\hat \xi-\xi_0\|_{1}	=O\left(\frac{\lambda_1 s_N}{N}\right).
\end{align*}
Furthermore, note that $\max_{(i,j)\in\NN}\|\tilde X_{ij}\|_\infty\|\hat \xi-\xi_0\|_{1}=(N^{1-2/q}M_N^2 s_N\log a_N)^{1/2}=o(1)$. By the second part of Lemma 1 in \cite{BelloniChernozhukovWei2016}, $\|\hat \xi\|_0\lesssim s_N$ and $\EN[ L(W_{ij};\hat\theta, \hat\beta)-L(W_{ij};\theta_0, \beta_0)]=O\left(N^{-1}\lambda_1 \sqrt{s_N}\right)$.

\noindent \textbf{Step 2.}
We invoke Lemma 2 in \cite{BelloniChernozhukovWei2016}. First let us verify the condition that for some $C>0$ such that $s_N+\hat s_N\le Cs_N$, where $\hat s_N=\|\hat \xi\|_0$, and $A_{Cs_N}=\{\delta\in \Real^{p+1}: \|\delta\|_0\le Cs_N  \}$, then
\begin{align*}
\bar q_{A_{Cs_N}}>\max\left\{ 
\frac{\sqrt{Cs_N}\|\nabla \EN[ L(W_{ij};\xi_0)]\|_\infty}{\sqrt{\phi_{\min}(Cs_N)} },\sqrt{0\vee\left\{\EN[ L(W_{ij};\tilde\xi)- L(W_{ij};\xi_0)]\right\} }
 \right\}=O_P\left(\sqrt{\frac{s_N\log a_N}{N}}\right).
\end{align*}
where the equality follows from regularized event and the previous step.
Observe that
\begin{align*}
\bar{q}_{A_{Cs_N}}=&\inf _{\delta \in A_{Cs_N}} \frac{\mathbb{E}_{N}\left[f_{ij}^2\left|\tilde{X}_{ij}^{\prime} \delta\right|^{2}\right]^{3 / 2} }{ \mathbb{E}_{N}\left[f_{ij}^2\left|\tilde{X}_{ij}^{\prime} \delta\right|^{3}\right]}
\ge
\inf _{\delta \in A_{Cs_N}} \frac{\mathbb{E}_{N}\left[w_{ij}\left|\tilde{X}_{ij}^{\prime} \delta\right|^{2}\right]^{1 / 2} }{ \max_{(i,j)\in \overline{[N]^2}}\|\tilde X_{ij}\|_\infty\|\delta\|_1}\\
\ge&
\inf _{\|\delta\|_0 \le Cs_N} \frac{\mathbb{E}_{N}\left[w_{ij}\left|\tilde{X}_{ij}^{\prime} \delta\right|^{2}\right]^{1 / 2} }{ \max_{(i,j)\in \overline{[N]^2}}\|\tilde X_{ij}\|_\infty\sqrt{Cs_N}\|\delta\|}
\gtrsim_P 
\frac{\sqrt{\phi_{\min}(Cs_N)}}{\sqrt{Cs_N}N^{1/q} M_N}\gtrsim \frac{\log^{1/4} a_N}{\Delta_N^{1/2} N^{1/4}}\gtrsim\sqrt{\frac{s_N\log a_N}{N}},
\end{align*}
where we have used sparse eigenvalue condition and Assumption \ref{a2:covariates} (iv). 
Now, invoke Lemma 2 in \cite{BelloniChernozhukovWei2016} and obtain
\begin{align*}
\|f_{ij}\tilde X_{ij}'(\tilde \xi-\xi_0)\|_{2,N}	=O\left(\frac{\lambda_1 \sqrt{s_N}}{N}\right),\quad \|\tilde \xi-\xi_0\|_{1}	=O\left(\frac{\lambda_1 s_N}{N}\right).
\end{align*}

\noindent \textbf{Step 3.}
We now claim that, for some $K>0$ large enough, let $\zeta\in(0,1)$ and 
\begin{align*}
\lambda_1=K\sqrt{N\log (p/\zeta)},
\end{align*}
then with probability $1-\zeta-o(1)$, for a fixed $c>1$, it holds that
\begin{align*}
P(\lambda_1/N\geq c\|\nabla L(\theta_0,\beta_0)\|_{\infty})\geq 1-\zeta-o(1).
\end{align*}
The proof relies on Theorem 3 in  \cite{chiang2020inference}. First, we verify its required conditions. Condition (3.3) in  \cite{chiang2020inference} is directly implied by definition of $M_N$ and Assumption \ref{a2:covariates} with their $q$ set to be $2q$ here. For $N$ large enough, their Conditions (3.4) and (3.5) are implied by Assumption \ref{a2:covariates} (iii) and (i), respectively. 
	Now, by Theorem 3 in  \cite{chiang2020inference}, we have
	\begin{align*}
	\sup_{t\in \Real} |P(\|\sqrt{N}\nabla L(\theta_0,\beta_0)\|_\infty\le t)-P(\|\mathbf G\|_\infty\le t)|=o(1), 
	\end{align*}
	where $\mathbf G\sim N(0,\Sigma)$, $\Sigma$ is the asymptotic variance of $\sqrt{N}\nabla L(\theta_0,\beta_0)$.
	Then the Gaussian concentration inequality implies that with probability $1-\zeta-o(1)$,
	\begin{align*}
	P(\lambda_1/N\geq c\|\nabla L(\theta_0,\beta_0)\|_{\infty})\geq 1-\zeta-o(1).
	\end{align*}
Now, combining the result with the bound from Step 2 concludes the proof.
\end{proof}

\subsection{Theory for Linear Regression with Estimated Weight}
In this section, we establish rates of convergence  of Post-Lasso-OLS estimator in Section \ref{model}.

\begin{theorem}(Rates and Sparsity for Lasso with Estimated Weights)
\label{theorem_linear_weight}
Suppose that Assumptions \ref{a2:RE}, \ref{a2:sparsity} and  \ref{a2:covariates} hold for all $P\in \mathcal{P}_N$. In addition, suppose that the penalty level $\lambda_2=K_2 \sqrt{N\log (pN)}$,  $K_2>0$ a positive constant. Then uniformly over $P\in\mathcal{P}_N$, with probability $1-o(1)$,
\begin{align*}
&\|\hat f_{ij}X_{ij}'(\hat{\gamma}-\gamma_0)\|_{2,N}\vee \|\hat f_{ij}X_{ij}'(\tilde{\gamma}-\gamma_0)\|_{2,N}\vee\|X_{ij}'(\hat{\gamma}-\gamma_0)\|\vee\|X_{ij}'(\tilde{\gamma}-\gamma_0)\|
\leq \sqrt{\frac{s_N\log a_N}{N}},\\
 &\|\hat{\gamma}-\gamma_0\|_{1}\vee\|\tilde{\gamma}-\gamma_0\|_{1}\leq \sqrt{\frac{s_N^2\log a_N}{N}}.
\end{align*}
\end{theorem}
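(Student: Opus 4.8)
The plan is to adapt the standard oracle-inequality analysis for $\ell_1$-penalized least squares to the present setting, where two features require care: the regression weights $\hat f_{ij}^2$ are estimated from the first-stage logistic Lasso, and the data are dyadically dependent. The three essential ingredients are (i) a high-probability bound on the penalty level via a dyadic Gaussian-approximation argument, (ii) a restricted eigenvalue (RE) condition obtained from the sparse-eigenvalue Assumption \ref{a2:RE}, and (iii) control of the perturbation introduced by replacing the infeasible weight $f_{ij}^2$ by its estimate $\hat f_{ij}^2$. Throughout, the bounds hold uniformly over $P\in\mathcal P_N$ because the constants in Assumptions \ref{a2:RE}, \ref{a2:sparsity} and \ref{a2:covariates} are uniform in $P$.

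First I would verify the regularization event. Writing $f_{ij}^2(D_{ij}-X_{ij}'\gamma_0) = f_{ij} V_{ij}$ and using $\Ep[f_{ij}X_{ij}V_{ij}]=0$ from the defining equation, the infeasible score $S:=\EN[f_{ij}V_{ij}X_{ij}]$ has mean zero. As in Step 3 of the proof of Theorem \ref{theorem_rate_lasso}, I would invoke the dyadic Gaussian approximation (Theorem 3 in \citealp{chiang2020inference}) together with the Gaussian concentration inequality to show that, for $K_2$ large enough, $P(\lambda_2/N \geq c\|S\|_\infty) \geq 1-o(1)$ for a fixed $c>1$. I would then bound the feasible-minus-infeasible score $\|\EN[(\hat f_{ij}^2 - f_{ij}^2)(D_{ij}-X_{ij}'\gamma_0)X_{ij}]\|_\infty$ using the Lipschitz property of $t\mapsto\Lambda(t)(1-\Lambda(t))$, the first-stage rate $\|\hat\xi-\xi_0\|_1 \lesssim \sqrt{s_N^2\log a_N/N}$ from Theorem \ref{theorem_rate_lasso}, and the moment bounds in Assumption \ref{a2:covariates}, so that this term is $o_P(\lambda_2/N)$ and the regularization event continues to hold for the feasible score $\hat S:=\EN[\hat f_{ij}^2(D_{ij}-X_{ij}'\gamma_0)X_{ij}]$.

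Next I would establish the RE condition. Assumption \ref{a2:RE} furnishes the sparse-eigenvalue bounds for the true-weight design $f_{ij}\tilde X_{ij}$, and Lemma 2.7 in \citet{LecueMendelson2017} converts these into the RE condition $\kappa_{c_0}>0$ over the cone $\mathfrak D_{c_0}$, exactly as in Step 1 of the previous proof. To pass from $f_{ij}^2$ to $\hat f_{ij}^2$ in the Gram matrix I would show that $\sup_{\|\delta\|_0\le Cs_N}\abs{\EN[(\hat f_{ij}^2-f_{ij}^2)(X_{ij}'\delta)^2]}/\EN[f_{ij}^2(X_{ij}'\delta)^2]=o_P(1)$, again via the weight rate and the dyadic maximal inequality of Lemma \ref{lemma_maximal_inequality_dyadic_data} applied uniformly over sparse directions; this yields an RE (and sparse-eigenvalue) condition for the estimated-weight design. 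With the penalty bound and RE in hand, the standard quadratic-loss oracle inequality delivers $\|\hat f_{ij}X_{ij}'(\hat\gamma-\gamma_0)\|_{2,N}\lesssim\sqrt{s_N\log a_N/N}$ and $\|\hat\gamma-\gamma_0\|_1\lesssim\sqrt{s_N^2\log a_N/N}$, and the RE lower bound converts the prediction-norm bound into the stated $\ell_2$ rate for the estimation error.

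The remaining task, and the main obstacle, is the post-Lasso (OLS-after-selection) estimator $\tilde\gamma$ together with the uniform handling of the estimated weights. I would first show the selected model is sparse, $\|\hat\gamma\|_0\lesssim s_N$, using the sparse-eigenvalue upper bound as in Step 2 of the proof of Theorem \ref{theorem_rate_lasso}, and then bound $\tilde\gamma$ by combining the prediction-norm oracle bound for post-selection least squares with the RE condition restricted to the enlarged support set $A_{Cs_N}$. The delicate point throughout is that every empirical-process and eigenvalue argument must be carried out with the \emph{estimated} weights and under dyadic dependence \emph{simultaneously}: the weight estimation error enters multiplicatively in both the design and the score, and controlling it uniformly over sparse directions — rather than at a single point — is precisely where the dyadic maximal inequality of Lemma \ref{lemma_maximal_inequality_dyadic_data} and the first-stage rate from Theorem \ref{theorem_rate_lasso} must be combined with the most care.
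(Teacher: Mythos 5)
Your overall architecture (dyadic Gaussian approximation for the penalty level, Lec\'ue--Mendelson for RE, oracle inequality, then post-Lasso) parallels the paper's, which instead packages the oracle-inequality machinery by invoking Lemmas L.1--L.3 of \cite{BCCW18} and only verifies their inputs. However, your treatment of the estimated weights contains a genuine gap: you claim that the feasible-minus-infeasible score $\|\mathbb{E}_N[(\hat f_{ij}^2-f_{ij}^2)Z_{ij}X_{ij}]\|_\infty$ is $o_P(\lambda_2/N)$, and this is false whenever $s_N\to\infty$. By the Lipschitz property of $t\mapsto\Lambda(t)\{1-\Lambda(t)\}$, the sharpest available bound pairs the weight error with the first-stage prediction norm: $|\mathbb{E}_N[(\hat f_{ij}^2-f_{ij}^2)Z_{ij}X_{ij,l}]|\lesssim \|\tilde X_{ij}'(\hat\xi-\xi_0)\|_{2,N}\,\|Z_{ij}X_{ij,l}\|_{2,N}=O_P(\sqrt{s_N\log a_N/N})$; using instead the $\ell_1$ rate you cite gives the even larger $O_P(s_N\sqrt{\log a_N/N})$. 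Since $\lambda_2/N\asymp\sqrt{\log a_N/N}$, the weight-error contribution exceeds the penalty by a factor of at least $\sqrt{s_N}$, and no orthogonality rescues it: $\Ep[f_{ij}^2Z_{ij}X_{ij}]=0$ holds for the weight $f^2=\Lambda(1-\Lambda)$, but the perturbation involves the derivative weight $\Lambda^{(1)}(1-2\Lambda)$, for which $\Ep[\Lambda^{(1)}_{ij}(1-2\Lambda_{ij})\,\tilde X_{ij}Z_{ij}X_{ij,l}]$ is generally nonzero, so the sample average cannot concentrate below its pointwise size. Consequently the cone condition for the feasible score cannot be established this way, and your oracle inequality never gets started.

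The fix --- which is exactly what the paper's appeal to Lemma L.1 of \cite{BCCW18} encodes --- is to never require the weight-error term to be below the penalty in $\ell_\infty$. In the basic inequality one writes $\mathbb{E}_N[\hat f^2ZX'\delta]=\mathbb{E}_N[f^2ZX'\delta]+\mathbb{E}_N[(\hat f^2-f^2)ZX'\delta]$ and bounds the second term by Cauchy--Schwarz against the prediction norm of $\delta$ itself, $|\mathbb{E}_N[(\hat f^2-f^2)ZX'\delta]|\le \|(\hat f^2-f^2)Z/\hat f\|_{2,N}\,\|\hat f X'\delta\|_{2,N}$, so that only the \emph{ideal} score $\|\mathbb{E}_N[f_{ij}^2Z_{ij}X_{ij}]\|_\infty$ must be dominated by $\lambda_2/N$ (your dyadic Gaussian-approximation step does handle this, as in the paper). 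The quantity $\|(\hat f^2-f^2)Z/\hat f\|_{2,N}\lesssim\sqrt{s_N\log a_N/N}$ --- controlled, as in the paper, via the first-stage rates of Theorem \ref{theorem_rate_lasso} together with the pointwise equivalence $|\hat f_{ij}^2-f_{ij}^2|\le f_{ij}^2/2$ --- then enters the conclusion additively at exactly the target rate, which is why it is harmless there but fatal in an $\ell_\infty$ penalty comparison. The same pointwise equivalence also yields the Gram-matrix comparison $\hat f^2\asymp f^2$ uniformly over all directions, making your separate maximal-inequality argument for the RE transfer over sparse directions unnecessary; your post-Lasso outline is fine once this restructuring is in place.
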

\begin{proof}
We first apply Lemmas L.1, L.2, and L.3 in \cite{BCCW18}
with $(Y,X,\theta,a,w)=(D,X,\gamma,f,\hat f^2)$, $M(D,X,\gamma,f)=2^{-1}f^2(D-X'\gamma)^2$. Following the verification of Assumption L.1 in the proof of Theorem 4.2 in \cite{BCCW18}, it suffices to show that with probability $1-o(1)$
\begin{align*}
\|(\hat f_{ij }^2 -f_{ij}^2)Z_{ij}/\hat f_{ij}\|_{2,N} \lesssim \sqrt{\frac{s_N\log a_N}{N}}.
\end{align*}
Following the argument of Equation (J.1) in \cite{BCCW18}, we have with probability $1-o(1)$
\begin{align*}
|\hat f_{ij}^2 - f_{ij}^2|\le f_{ij}^2/2,
\end{align*}
which, by reverse triangle inequality, implies that with probability $1-o(1)$, we have
\begin{align*}
\hat f_{ij}^2=|f_{ij}^2-(f_{ij}^2-\hat f_{ij}^2)|\ge \left| |f_{ij}^2|-|f_{ij}^2-\hat f_{ij}^2|\right|\ge f_{ij}^2/2
\end{align*}
and thus following Theorem \ref{theorem_rate_lasso}, we have the desired bound that  with probability $1-o(1)$,
\begin{align*}
\|(\hat f_{ij }^2 -f_{ij}^2)Z_{ij}/\hat f_{ij}\|_{2,N} \lesssim \|(\hat f_{ij }^2 -f_{ij}^2)Z_{ij}/f_{ij}\|_{2,N}\lesssim\sqrt{\frac{s_N\log a_N}{N}}.
\end{align*}
Now, Assumption L.1(b) is trivial and Assumption L.1(a) and (c) can be verified using exactly the same argument as in the proof of Theorem 4.2 in \cite{BCCW18}. By invoking Lemma L.1 in \cite{BCCW18}, we have
\begin{align*}
\|\hat f_{ij}X_{ij}'(\hat{\gamma}-\gamma_0)\|_{2,N}\leq \sqrt{\frac{s_N\log a_N}{N}},\quad \|\hat{\gamma}-\gamma_0\|_{1}\leq \sqrt{\frac{s_N^2\log a_N}{N}}.
\end{align*} 
Following the same arguments as in the proof of Theorem 4.2 in \cite{BCCW18}, Lemmas L.2 and L.3 in \cite{BCCW18} can be applied to obtain 
\begin{align*}
\|\hat f_{ij}X_{ij}'(\tilde{\gamma}-\gamma_0)\|_{2,N}\leq \sqrt{\frac{s_N\log a_N}{N}},\quad \|\tilde{\gamma}-\gamma_0\|_{1}\leq \sqrt{\frac{s_N^2\log a_N}{N}}
\end{align*} 
conditional on the event of $\lambda_2/N\geq c\|\EN [f_{ij}X_{ij}'V_{ij}]  \|_{\infty}$ and $\hat f_{ij}^2 \ge f_{ij}^2/2$.

Finally, by the same argument as in Step 3, in the proof of Theorem \ref{theorem_rate_lasso}, for each $\zeta\in(0,1)$, we have, with probability $1-\zeta-o(1)$, for a fixed $c>1$, it holds that
\begin{align*}
P(\lambda_2/N\geq c\|\EN [f_{ij}X_{ij}'V_{ij}]  \|_{\infty})\geq 1-\zeta-o(1).
\end{align*}
A combination of the above results concludes the proof.
\end{proof}


\section{Useful Lemmas}
This section contains some useful lemmas used in the proofs of other results in the paper.
The following result is the same as Lemma 6.1 in \cite{CCDDHNR18}.
\begin{lemma}[Conditional Convergence Implies Unconditional]\label{lemma:conditional_convergence}
Let $(X_n)$ and $(Y_n)$ be sequences of random vectors.
\begin{enumerate}[(i)]
\item If for $\epsilon_n\to 0$, $P(\|X_n\|>\epsilon_n|Y_n)=\op(1)$ in probability, then $P(\|X_n\|>\epsilon_n)=o(1)$. In particular, this occurs if $\Ep[\|X_n\|^q/\epsilon_n^q|Y_n]=\op(1)$ for some $q\ge 1$.
\item Let $(A_n)$ be a sequence of positive constants. If $\|X_n\|=\Op(A_n)$ conditional on $Y_n$, then $\|X_n\|=\Op(A_n)$ unconditional, namely, for any $l_n\to \infty$, 
$P(\|X_n\|>l_n A_n)=o(1)$.
\end{enumerate}
\end{lemma}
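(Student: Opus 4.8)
The plan is to derive both parts from the tower property of conditional expectation together with the bounded convergence theorem, reducing everything to the elementary fact that a $[0,1]$-valued sequence of random variables tending to zero in probability also tends to zero in mean.

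For part (i), I would first set $Z_n:=P(\|X_n\|>\epsilon_n\mid Y_n)$ and note that $Z_n$ takes values in $[0,1]$ and, by hypothesis, $Z_n=\op(1)$. The law of iterated expectations gives $P(\|X_n\|>\epsilon_n)=\Ep[Z_n]$, so it suffices to show $\Ep[Z_n]\to 0$. To that end I would use the standard truncation bound: for arbitrary $\delta>0$,
\begin{align*}
\Ep[Z_n]=\Ep[Z_n\1\{Z_n\le\delta\}]+\Ep[Z_n\1\{Z_n>\delta\}]\le \delta+P(Z_n>\delta),
\end{align*}
where the second term uses $Z_n\le 1$ on the event $\{Z_n>\delta\}$. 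Since $Z_n=\op(1)$ forces $P(Z_n>\delta)\to 0$, letting $n\to\infty$ and then $\delta\downarrow 0$ yields $\Ep[Z_n]\to 0$, which is the claim. For the ``in particular'' statement, I would invoke the conditional Markov inequality, $P(\|X_n\|>\epsilon_n\mid Y_n)=P(\|X_n\|^q/\epsilon_n^q>1\mid Y_n)\le \Ep[\|X_n\|^q/\epsilon_n^q\mid Y_n]$, so that the assumed $\Ep[\|X_n\|^q/\epsilon_n^q\mid Y_n]=\op(1)$ forces $Z_n=\op(1)$ and reduces this case to the one just treated.

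For part (ii), I would first unfold the meaning of $\|X_n\|=\Op(A_n)$ conditional on $Y_n$: for every $l_n\to\infty$ one has $P(\|X_n\|>l_nA_n\mid Y_n)=\op(1)$. Fixing such a sequence $l_n$, I would apply verbatim the tower-property-plus-truncation argument of part (i) with $\epsilon_n$ replaced by $l_nA_n$ (the argument nowhere used $\epsilon_n\to 0$, only boundedness of the conditional probability by one), obtaining $P(\|X_n\|>l_nA_n)=\Ep[P(\|X_n\|>l_nA_n\mid Y_n)]\to 0$. This is precisely the unconditional statement $\|X_n\|=\Op(A_n)$.

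I do not anticipate a substantive obstacle, as the result is measure-theoretic and elementary; the only points requiring a little care are the passage from convergence in probability to convergence in mean (legitimate precisely because $Z_n$ is bounded by one, which is what the truncation step exploits) and making explicit the definition of conditional stochastic boundedness invoked in part (ii). Everything else is the law of iterated expectations and Markov's inequality.
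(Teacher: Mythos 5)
Your proof is correct. Note that the paper itself supplies no argument for this lemma: it simply states that the result is Lemma 6.1 of \citet{CCDDHNR18} and cites it. Your write-up is essentially a self-contained version of the standard proof of that cited lemma: the law of iterated expectations reduces the unconditional probability to $\Ep[Z_n]$ with $Z_n=P(\|X_n\|>\epsilon_n\mid Y_n)\in[0,1]$, and the remaining step is that a $[0,1]$-valued sequence converging to zero in probability converges to zero in mean. Where the cited source invokes dominated convergence for this step (which, strictly speaking, requires an almost-sure subsequence argument since $Z_n\to 0$ only in probability), your truncation bound $\Ep[Z_n]\le\delta+P(Z_n>\delta)$ handles it directly and elementarily, which is a small but genuine improvement in self-containedness. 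Your treatment of the ``in particular'' clause via the conditional Markov inequality and your unfolding of conditional $\Op(A_n)$ in part (ii) — observing that the argument never used $\epsilon_n\to 0$, only the bound $Z_n\le 1$ — both match the intended reading of the statement and close the proof correctly.
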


The following result restates Proposition 5 and Corollary 7 in \cite{Kato2017lecture}.
\begin{lemma}(Algebra for Covering Entropies)
\label{lemma_covering_entrpoy}
\begin{enumerate}[(i)]
\item
For any measurable classes of function $\mathcal{F}$ and $\mathcal{F}'$ mapping $\mathcal{W}$ to $\mathbb{R}$,
\begin{align*}
&\log N(\mathcal{F}+\mathcal{F}',\|\cdot\|_{Q,2},\varepsilon\|F+F'\|_{Q,2})\leq \log N(\mathcal{F},\|\cdot\|_{Q,2},\frac{\varepsilon}{2}\|F\|_{Q,2})+\log N(\mathcal{F}',\|\cdot\|_{Q,2},\frac{\varepsilon}{2}\|F'\|_{Q,2}),
\\
&\log N(\mathcal{F}\cdot\mathcal{F}',\|\cdot\|_{Q,2},\varepsilon\|F\cdot F'\|_{Q,2})\leq \log N(\mathcal{F},\|\cdot\|_{Q,2},\frac{\varepsilon}{2}\|F\|_{Q,2})+\log N(\mathcal{F}',\|\cdot\|_{Q,2},\frac{\varepsilon}{2}\|F'\|_{Q,2}),
\\
& N(\mathcal{F}\cup \mathcal{F}',\|\cdot\|_{Q,2},\varepsilon\|F\vee F'\|_{Q,2})\leq N(\mathcal{F},\|\cdot\|_{Q,2},\varepsilon\|F\|_{Q,2})+N(\mathcal{F}',\|\cdot\|_{Q,2},\varepsilon\|F'\|_{Q,2}).
\end{align*}
\item 
Given measurable classes $\mathcal{F}_j$ and envelopes $F_j$, $j=1,...,k,$mapping $\mathcal{W}$ to $\mathbb{R}$, a function $\phi:\mathbb{R}^k\rightarrow \mathbb{R}$ such that for $f_j, g_j\in\mathcal{F}_j$, $|\phi(f_1,...,f_k)-\phi(g_1,...,g_k)|\leq \sum_{j=1}^k L_j(x)|f_j(x)-g_j(x)|$, $L_j(x)\geq 0$, and fixed functions $\bar{f}_j\in \mathcal{F}_j$, the class of functions $\mathcal{L}=\{\phi(f_1,...,f_k)-\phi(\bar{f}_1,...,\bar{f}_k):f_j\in\mathcal{F}_j,j=1,...,k\}$ satisfies
\begin{align*}
\sup_{Q }\log N(\mathcal{L},\|\cdot\|_{Q,2},\varepsilon\|\sum_{j=1}^k L_jF_j\|_{Q,2})\leq \sum_{j=1}^k\log\sup_Q N(\mathcal{F}_j,\|\cdot\|_{Q,2},\frac{\varepsilon}{k}\|F_j\|_{Q,2}),\text{ for all } \:0<\varepsilon\le 1.
\end{align*}
\end{enumerate}
\end{lemma}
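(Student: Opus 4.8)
The plan is to treat Lemma \ref{lemma_covering_entrpoy} as a direct restatement: since its two parts are verbatim Proposition 5 and Corollary 7 of \cite{Kato2017lecture}, the cleanest route is to cite those results and defer the details. For completeness I would reconstruct the standard empirical-process arguments behind them, all of which exploit two facts: the uniform entropy is a supremum over \emph{all} finitely supported discrete measures $Q$, and the envelopes are nonnegative, so they are monotone under the algebraic operations in question.

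For the sum bound in part (i), I would fix any finitely discrete $Q$, take near-optimal $\tfrac{\varepsilon}{2}\|F\|_{Q,2}$- and $\tfrac{\varepsilon}{2}\|F'\|_{Q,2}$-nets of $\mathcal F$ and $\mathcal F'$, and form all pairwise sums. For $f$ near $f_0$ and $f'$ near $f_0'$ the triangle inequality gives
\begin{align*}
\|(f+f')-(f_0+f_0')\|_{Q,2}\le \tfrac{\varepsilon}{2}\|F\|_{Q,2}+\tfrac{\varepsilon}{2}\|F'\|_{Q,2}\le \varepsilon\|F+F'\|_{Q,2},
\end{align*}
where the last step uses $F,F'\ge 0$ so that $\|F\|_{Q,2}\vee\|F'\|_{Q,2}\le\|F+F'\|_{Q,2}$. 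Since $F+F'$ is an envelope for $\mathcal F+\mathcal F'$ and the product net has cardinality equal to the product of the two cardinalities, taking logarithms and then the supremum over $Q$ yields the claim. The union bound is even simpler: the union of the two nets covers $\mathcal F\cup\mathcal F'$ at radius $\varepsilon\|F\vee F'\|_{Q,2}$ because $F\le F\vee F'$ and $F'\le F\vee F'$, and the cardinalities add rather than multiply.

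The substantive steps are the product bound in part (i) and the Lipschitz bound in part (ii), and I expect the change-of-measure (reweighting) device to be the main obstacle. The difficulty is that the natural pointwise estimate $|ff'-f_0f_0'|\le F'|f-f_0|+F|f'-f_0'|$ measures each factor in a metric reweighted by the envelope of the other factor, so the original $\|\cdot\|_{Q,2}$-nets cannot be used verbatim. The fix is to pass to the reweighted measures $dQ_1\propto (F')^2\,dQ$ and $dQ_2\propto F^2\,dQ$; because the uniform entropy is a supremum over all finitely discrete probability measures, the covering numbers of $\mathcal F$ and $\mathcal F'$ against $Q_1$ and $Q_2$ are already controlled by the hypothesized bounds, and a Minkowski-inequality combination of the two reweighted nets, translated back to the envelope $FF'$, delivers the product inequality. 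Part (ii) then follows from the same template applied to $|\phi(f_1,\dots,f_k)-\phi(g_1,\dots,g_k)|\le\sum_{j=1}^k L_j|f_j-g_j|$: I would cover each $\mathcal F_j$ at radius $\tfrac{\varepsilon}{k}\|F_j\|_{\cdot,2}$ in the measure reweighted by $L_j^2$, form the images under $\phi$, and add the $k$ error contributions, which produces the envelope $\sum_{j=1}^k L_jF_j$ and the factor $\varepsilon/k$. As these are precisely the computations of \cite{Kato2017lecture} (see also the analogous algebra in \citealp[Ch.~2.6]{van1996weak}), I would present the lemma as a restatement and refer the reader to that source for the constants.
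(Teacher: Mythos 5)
Your proposal is correct and takes essentially the same approach as the paper: the paper offers no proof of this lemma, presenting it purely as a restatement of Proposition 5 and Corollary 7 in \citet{Kato2017lecture}, which is exactly your primary plan. Your supplementary reconstruction --- pairwise sums/products of near-optimal nets, plus the change-of-measure device for the product and Lipschitz parts, with the supremum over finitely discrete $Q$ absorbing the reweighted measures --- is the standard argument and is sound.
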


\section{Proofs of the Results in Appendix \ref{sec:lemmas_dyadic}}\label{sec:proofs_for_results_appendix}

\subsection{Proof of Lemma \ref{lemma_maximal_inequality_dyadic_data}}\label{sec:lemma_maximal_inequality_dyadic_data}
\begin{proof}
Consider the decomposition
\begin{align*}
\frac{\sqrt{n}}{n(n-1)}\sumijn f(X_{ij})=\frac{1}{\sqrt{n}}\sum_{k=1}^n W_k(f) + \frac{\sqrt{n}}{n(n-1)}\sumijn R_{ij}(f),
\end{align*}
where $W_k(f)=\Ep\left[\frac{1}{(n-1)}\left(\sum_{j\ne k}f(X_{kj})+\sum_{i\ne k}f(X_{ik})\right)\Big|U_k\right]$ and $R_{ij}(f)= f(X_{ij})-\Ep[f(X_{ij})|U_i]-\Ep[f(X_{ij})|U_j]$. The first term on the right hand side consists of the Hajek projections and the second is a remainder term from first order projection.

First, the Hajek projection terms 
\begin{align*}
\frac{1}{\sqrt{n}}\sum_{k=1}^n W_k(f)= \frac{1}{\sqrt{n}}\sum_{k=1}^n\Ep\left[\frac{1}{(n-1)}\left(\sum_{j\ne k}f(X_{kj})+\sum_{i\ne k}f(X_{ik})\right)\Big|U_k\right]
\end{align*}
consist of i.i.d. terms $(W_k)_{k=1}^n$. Thus we can apply Theorem 5.2 in \cite{CCK14}  and obtain
\begin{align*}
\Ep\left[\sup_{f\in \calF}\left|\frac{1}{\sqrt{n}}\sum_{k=1}^n W_k(f)\right|\right]
\lesssim&\:
\overline \sigma_n\sqrt{v\log(A\vee n)}
+ \frac{b_n v\log(A\vee n)}{n^{1/2-1/q}}.
\end{align*}

Now, for Hajek projection errors, a further decomposition yields
\begingroup
\allowdisplaybreaks
\begin{align*}
&\frac{\sqrt{n}}{n(n-1)}\sumijn R_{ij}(f)\\
=&\frac{\sqrt{n}}{n(n-1)}\sumijn \underbrace{\{\Ep[f(X_{ij})|U_i,U_j]-\Ep[f(X_{ij})|U_i]-\Ep[f(X_{ij})|U_j]\}}_{:=\hat R_{ij}(f)}\\
&+\frac{\sqrt{n}}{n(n-1)}\sumijn \underbrace{\{f(X_{ij})-\Ep[f(X_{ij})|U_i,U_j]\}}_{:=\tilde R_{ij}(f)}.
\end{align*}
\endgroup
Conditional on $(U_i)_{i=1}^n$, the second term on the right hand side is a sum of independent but not necessarily identically distributed random variables. In addition, by Lemma A.2. in \cite{GhosalSenvanderVaart2000}, $\calH$ is a VC type with characteristics $4\sqrt{A}$ and $v$ for envelope $H$.  Thus, Lemma B.2 in \cite{cattaneo2022} and an application of Jensen's inequality yield
\begin{align*}
\Ep\left[\sup_{f\in \calF}\left|\frac{\sqrt{n}}{n(n-1)}\sumijn \tilde R_{ij}(f) \right|\right]
\lesssim&
\frac{\sigma_n \sqrt{v \log (A\vee n)}}{\sqrt{n}}+ \frac{b_n v \log (A\vee n)}{n^{3/2-2/q}}.
\end{align*}
On the other hand, the first term consists of a completely degenerate U-process of order two and thus an upper bound can be obtained by applying Corollary 5.5 of \cite{ChenKato2019},
\begin{align*}
\Ep\left[\sup_{f\in \calF}\left|\frac{\sqrt{n}}{n(n-1)}\sumijn \hat R_{ij}(f) \right|\right]
\lesssim&
\frac{\sigma_n v \log (A\vee n)}{\sqrt{n}}+ \frac{b_n (v \log (A\vee n))^2}{n^{1-2/q}}.
\end{align*}
The desired bound follows from combining the bounds.
\end{proof}

\subsection{Proof of Lemma \ref{lemma:hajek}}\label{sec:lemma:hajek}
\begin{proof}
We will write $\sum_{i\ne l}$ for $\sum_{i\in [n]\setminus \{l\}}$.
Note that the H\'ajek projection on functions of each single $(U_l)_{l=1}^n $ is
\begingroup
\allowdisplaybreaks
\begin{align*}
\sumln \tE\left[\frac{\sqrt{n}}{n(n-1)}\sum\limits_{(i,j) \in \overline{[n]^2}} f(Z_{ij})\Bigg|U_l\right]
=&
\sumln \frac{\sqrt{n}}{n(n-1)}\sum\limits_{(i,j)\in \overline{[n]^2}} \1\{i=l \text{ or } j=l\}\tE\left[f(Z_{ij})|U_l\right]\\
=&
\frac{\sqrt{n}}{n(n-1)}\sumln \left\{\sum_{j\ne l}\tE[f(Z_{lj})|U_l] + \sum_{i\ne l}\tE[f(Z_{il})|U_l]\right\}.
\end{align*}
\endgroup
The right-hand side equals the definition of $\Hn f$ and the summands are independent over $l$. To calculate the variance, for a fixed $l$,
\begingroup
\allowdisplaybreaks 
\begin{align*}
 \V\left(\sum_{j\ne l}\tE[f(Z_{lj})|U_l] + \sum_{i\ne l}\tE[f(Z_{il})|U_l]\right)
 =& 
 \V\left(\sum_{j\ne l}\tE[f(Z_{lj})|U_l]\right) 
 +
  \V\left(\sum_{i\ne l}\tE[f(Z_{il})|U_l]\right)
 \\
 &+2\Cov \left(\sum_{j\ne l}\tE[f(Z_{lj})|U_l], \sum_{i\ne l}\tE[f(Z_{il})|U_l]\right)\\
 =&(1)+(2)+2 \times (3). 
\end{align*}
\endgroup
For $(1)$, one has
\begingroup
\allowdisplaybreaks
\begin{align*}
\V\left(\sum_{j\ne l}\tE[f(Z_{lj})|U_l]\right) 
=&
\sum_{j\ne l} \sum_{\jmath \ne l}\Cov\left(\tE[f(Z_{lj})|U_l],\tE[f(Z_{l\jmath})|U_l]\right)\\
=&
\sum_{j\ne l} \sum_{\jmath \ne l}\left\{\Cov\left(f(Z_{lj}),f(Z_{l\jmath})\right)-\tE[\Cov\left(f(Z_{lj}),f(Z_{l\jmath})|U_l\right)]\right\}\\
=&
\sum_{j\ne l} \left(\sum_{\jmath \ne l}\Cov\left(f(Z_{lj}),f(Z_{l\jmath})\right)-E[\V(f(Z_{lj})|U_l )]\right)\\
=& (n-1)(n-2) \Cov(f(Z_{12}),f(Z_{13}))+O(n) ,
\end{align*}
\endgroup
where the second equality follows from the law of total covariance and the third and the fourth follow from the AHK representation. Similarly, for $(2)$, one has 
\begin{align*}
 \V\left(\sum_{i\ne l}\tE[f(Z_{il})|U_l]\right)
 =
  (n-1)(n-2) \Cov(f(Z_{21}),f(Z_{31}))+O(n).
\end{align*}
In addition, for $(3)$, we have
\begingroup
\allowdisplaybreaks
\begin{align*}
\Cov \left(\sum_{j\ne l}\tE[f(Z_{lj})|U_l], \sum_{i\ne l}\tE[f(Z_{il})|U_l]\right)
=&
\sum_{j\ne l}\sum_{i\ne l}\Cov \left(\tE[f(Z_{lj})|U_l], \tE[f(Z_{il})|U_l]\right)\\
=&
\sum_{j\ne l}\sum_{i\ne l}\left\{\Cov \left(f(Z_{lj}), f(Z_{il})\right)
-
\tE[\Cov (f(Z_{lj}), f(Z_{il})|U_l)]
\right\}\\
=&
 (n-1)(n-2)\Cov(f(Z_{12}),f(Z_{31}))+O(n),
\end{align*}
\endgroup
where the second equality follows from the law of total covariance and the third and the fourth follow from the AHK representation. Also, by symmetry of covariance, we have
\begingroup
\allowdisplaybreaks
\begin{align*}
\Cov \left(\sum_{j\ne l}\tE[f(Z_{lj})|U_l], \sum_{i\ne l}\tE[f(Z_{il})|U_l]\right)
=&
\Cov \left(\sum_{i\ne l}\tE[f(Z_{il})|U_l],\sum_{j\ne l}\tE[f(Z_{lj})|U_l]\right)\\
=&
 (n-1)(n-2)\Cov(f(Z_{21}),f(Z_{13}))+O(n),
\end{align*}
\endgroup
Combining all these, one has  
\begin{align*}
\V(\Hn f)= \Cov(f(Z_{12}),f(Z_{13}))+\Cov(f(Z_{12}),f(Z_{31}))+ \Cov(f(Z_{21}),f(Z_{13})) + \Cov(f(Z_{21}),f(Z_{31}))
\\
+o(1).
\end{align*}
\end{proof} 

\section{Data Appendix}\label{sec:data_appendix}
The data set that we use was originally generated by \citet{head2010erosion}.
Additional sources are mentioned for each variable below. 
The details of the variables that we use in our analysis are as follows.
\begin{enumerate}[(i)]
\item\label{iso3_o} iso3\_o: 
Standard ISO code for exporting country (three letters).
\item\label{iso3_d} iso3\_d: 
Standard ISO code for importing country (three letters).
\item\label{year} year: 
Numeric, from 1948 to 2015.

\item\label{fta_bb} fta\_bb:
Variable coded as 1 for Free Trade Area; 2 for Customs Union; 3 for Common Market; 4 for Economic Union.
Source: \citet{baier2009estimating}.
\item\label{fta_hmr} fta\_hmr:
Dummy for Free Trade Agreement.
Source: \citet{head2010erosion}.

\item\label{distw} distw: 
Weighted bilateral distance between origin and destination in kilometer (population weighted).
Source: CEPII Distance Data set
\item\label{tdiff} tdiff:
Time difference between origin and destination, in number of hours.
For countries which stretch over more than one time zone, the respective time zone is generated via the mean of all its time zones (for instance: Russia, Canada, USA)

\item\label{gdpcap_o} gdpcap\_o:
Gross Domestic Product per capita of origin (current US\$)
From 1960 to 2015, the data comes from the World Development Indicators, World Bank.
For Taiwan, the data comes from the Directorate-General of Budget, Accounting and Statistics (DGBAS).
\item\label{gdpcap_d} gdpcap\_d:
Gross Domestic Product per capita of destination (current US\$)
Source: computed from gdp\_d and pop\_d

\item\label{colony} colony:
Dummy for origin and destination ever in colonial relationship.
Source: \citet{head2010erosion}.
\item\label{col45} col45:
Dummy for origin and destination in colonial relationship post 1945.
Source: \citet{head2010erosion}.
\item\label{col_to} col\_to:
Dummy for origin and destination in colonial relationship post 1945.
Source: \citet{head2010erosion}.
\item\label{col_fr} col\_fr:
Dummy for origin and destination in colonial relationship post 1945.
Source: \citet{head2010erosion}.
\item\label{comcol} comcol:
Dummy for common colonizer of origin and destination post 1945.
Source: \citet{head2010erosion}.
\item\label{comlang_off} comlang\_off:
Dummy for common official or primary language.
Source: \citet{head2010erosion}.
\item\label{comlang_ethno} comlang\_ethno:
Dummy for language spoken by at least 9\% of the population in both countries.
Source: \citet{head2010erosion}.
\item\label{heg_o} heg\_o:
Dummy if origin is current or former hegemon of destination.
Source: \citet{head2010erosion}.
\item\label{heg_d} heg\_d:
Dummy if destination is current or former hegemon of origin.
Source: \citet{head2010erosion}.
\item\label{sibling} sibling:
Dummy for origin and destination ever in sibling relationship, i.e. two colonies of the same empire.
Source: \citet{head2010erosion}.
\item\label{comleg_pretrans} comleg\_pretrans:
Dummy if origin and destination share common legal origins before transition.
Source: \citet{la2008economic}.
\item\label{comleg_posttrans} comleg\_posttrans:
Dummy if origin and destination share common legal origins after transition.
Source: \citet{la2008economic}.
\item\label{comrelig} comrelig:
Religious proximity \citep{disdier2007je} is an index calculated by adding the products of the shares of Catholics, Protestants and Muslims in the exporting and importing countries. It is bounded between 0 and 1, and is maximum if the country pair has a religion which (1) comprises a vast majority of the population, and (2) is the same in both countries.
Source of religion shares: \citet{la1999quality}, completed with the CIA world factbook.

\end{enumerate}

We use the variables (\ref{iso3_o}) and (\ref{iso3_d}) to construct $i$ and $j$ indices, respectively.
The current year is set to be the year 2000 in terms of the variable (\ref{year}), and the baseline year is set to be the year 1960.
The variable (\ref{fta_hmr}) is the main dependent variable throughout the analysis.
The first main explanatory variable is the logarithm of the variable (\ref{distw}).
For the second main explanatory variable, we construct the sum of the logarithms of the per-capita GDPs using the pair of the variables (\ref{gdpcap_o}) and (\ref{gdpcap_d}).
We construct up to the fifth power of the continuous variable (\ref{comrelig}).
Then, we construct interactions of all the pairs of the variables from (\ref{colony}) to (\ref{comrelig}) as well as the powers of (\ref{comrelig}).

In addition, we also construct the capital-labor ratios across countries.
The capital data is sourced from the IMF.  
Documentation for the data can be found at https://www.imf.org/external/np/fad/ publicinvestment/pdf/csupdate\_aug19.pdf. We use population from the above data source as a measure for potential labor supply.

\setlength{\baselineskip}{5.7mm}
\bibliography{biblio}
\end{document}